\documentclass[aps,pra,twocolumn,superscriptaddress]{revtex4-1}
\usepackage{amsmath,amsfonts,amssymb,amscd,mathtools,amsthm}
\usepackage{dsfont}
\usepackage{amsthm}
\usepackage{graphicx}
\usepackage{dcolumn}
\usepackage{bm}
\usepackage{booktabs}
\usepackage{hyperref}
\usepackage[dvipsnames]{xcolor}
\hypersetup{
    bookmarksnumbered=true, % If Acrobat bookmarks are requested, include section numbers
    unicode=false, % non-Latin characters in Acrobat bookmarks
    pdfstartview={FitH}, % fits the width of the page to the window
    pdftitle={}, % title
    pdfauthor={}, % author
    pdfsubject={}, % subject of the document
    pdfcreator={}, % creator of the document
    pdfproducer={}, % producer of the document
    pdfkeywords={}, % list of keywords
    pdfnewwindow=true, % links in new window
    colorlinks=true, % false: boxed links; true: colored links
    linkcolor=NavyBlue, % color of internal links
    citecolor=NavyBlue, % color of links to bibliography
    filecolor=NavyBlue, % color of file links
    urlcolor=NavyBlue % color of external links
}
\newcounter{one}
\setcounter{one}{1}

\newcommand{\bra}[1]{\langle #1 |}
\newcommand{\ket}[1]{| #1 \rangle}
\newcommand{\braket}[2]{\langle {#1} | {#2} \rangle}
\newcommand{\brakets}[3]{\langle {#1} | {#2} | {#3} \rangle}
\newcommand{\ketbra}[2]{|{#1}\rangle\!\langle{#2}|}
\newcommand{\dm}[1]{\ketbra{#1}{#1}}

\newcommand{\Tr}[0]{ \mathrm{Tr}}

\newcommand{\tot}[0]{ \mathrm{tot}}

\newcommand{\eq}[1]{\begin{align} #1 \end{align}}
\newtheorem{theorem}{Theorem}
\newtheorem{proposition}[theorem]{Proposition}

\newtheorem{lemma}[theorem]{Lemma}
\newtheorem{corollary}[theorem]{Corollary}
\theoremstyle{definition}
\newtheorem{definition}{Definition}

\DeclareMathOperator{\id}{id}

\newcommand\calB{{\cal B}}
\newcommand\calC{{\cal C}}

\newcommand\calE{{\cal E}}
\newcommand\calF{{\cal F}}

\newcommand\calP{{\cal P}}

\newcommand\calR{{\cal R}}
\newcommand\calS{{\cal S}}

\newcommand\calV{{\cal V}}

%%%%%%%%%%%%%%%%%%%%%%%%%%%%%%%%%%%%%%
%%%%%%%%%%%%%%%%%%%%%%%%%%%%%%%%%%%%%%

%%%%%%%%%%%%%%%%%%%%%%%%%%%%%%%%%%%%%%
%%%%%%%%%%%%%%%%%%%%%%%%%%%%%%%%%%%%%%
%%%%%%%%%%%%%%%%%%%%%%%%%%%%%%%%%%%%%%

%%%%%%%%%%%%%%%%%%%%%%%%%%%%%%%%%%%%%%
%%%% BOLD SYMBOL
%%%%%%%%%%%%%%%%%%%%%%%%%%%%%%%%%%%%%%

%%%%%%%%%%%%

%%%%%%%%%%%%

%%%%%%%%%%%%

%%%%%%%%%%%%

%%%%%%%%%%%%%%%%%%%%%%%%%%%%%%%%%%%%%%

%%%%%%%%%%%%%%%%%%%%%%%%%%%%%%%%%%%%%%
%%%%%%%%%%%%%%%%%%%%%%%%%%%%%%%%%%%%%%

\newcommand{\bbF}{\mathbb{F}}

\newcommand{\bbO}{\mathbb{O}}
\newcommand{\bbP}{\mathbb{P}}

\newcommand{\bbR}{\mathbb{R}}

\newcommand{\beq}{\begin{equation}}
\newcommand{\eeq}{\end{equation}}
\def\kb#1{\ket{#1}\bra{#1}}
\newcommand{\bal}{\begin{equation}\begin{aligned}}
\newcommand{\eal}{\end{aligned}\end{equation}}

\newcommand{\sbar}{\;\rule{0pt}{9.5pt}\right|\;}
\newcommand{\lset}{\left\{\left.}
\newcommand{\rset}{\right\}}

\newcommand{\HT}[1]{{\color{black} #1}}

%%% Since we need to refer the Appendix, we have to number each section in the Appendix
%%% Reference: https://tex.stackexchange.com/questions/488302/referencing-an-appendix-while-using-prl-style
\setcounter{secnumdepth}{2}
\usepackage[lmargin=.7in,rmargin=.7in,tmargin=.7in,bmargin=1in]{geometry}

\newcommand{\revision}[1]{{{#1}}}

\begin{document}
\title{Gibbs-Preserving Operations Requiring Infinite Amount of Quantum Coherence}
\author{Hiroyasu Tajima}
\thanks{Both authors contributed equally to this work.}
\affiliation{
				Department of Informatics, Faculty of Information Science and Electrical Engineering, Kyushu University, 744 Motooka, Nishi-ku, Fukuoka, 819-0395, Japan
			}
\affiliation{
				Graduate School of Informatics and Engineering, The University of Electro-Communications,1-5-1 Chofugaoka, Chofu, Tokyo 182-8585, Japan
			}
\affiliation{
JST, PRESTO, 4-1-8 Honcho, Kawaguchi, Saitama, 332-0012, Japan
			}
\email{hiroyasu.tajima@inf.kyushu-u.ac.jp}
\author{Ryuji Takagi}
\thanks{Both authors contributed equally to this work.}
\affiliation{Department of Basic Science, The University of Tokyo, 3-8-1 Komaba, Meguro-ku, Tokyo 153-8902, Japan}
\email{ryujitakagi.pat@gmail.com}

\begin{abstract}
Gibbs-preserving operations have been studied as one of the standard free processes in quantum thermodynamics. Although they admit a simple mathematical structure, their operational significance has been unclear due to the potential hidden cost to implement them using an operatioanlly motivated class of operations, such as thermal operations. Here, we show that this hidden cost can be infinite---we present a family of Gibbs-preserving operations that cannot be implemented by thermal operations aided by any finite amount of quantum coherence. Our result implies that there are uncountably many Gibbs-preserving operations that require unbounded thermodynamic resources to implement, raising a question about employing Gibbs-preserving operations as available thermodynamics processes. This finding is a consequence of the general lower bounds we provide for the coherence cost of approximately implementing a certain class of Gibbs-preserving operations with a desired accuracy. We find that our lower bound is almost tight, identifying a quantity---related to the energy change caused by the channel to implement---as a fundamental quantifier characterizing the coherence cost for the approximate implementation of Gibbs-preserving operations.

\end{abstract}

\maketitle

\textit{\textbf{Introduction.---}}
A central question in thermodynamics---and quantum extension thereof---is to formalize feasible state transformations under available thermodynamic operations. 
Recent studies have uncovered that this can effectively be studied by a resource-theoretic approach, which admits a rigorous analytical platform. 
There, one considers a class of operations that are ``freely accessible'' in thermodynamic settings and studies operational consequences, e.g., work extraction, under such operations.
Therefore, the outcome of the analysis can naturally depend on the choice of the accessible operations, and it is crucial to recognize and appreciate the justification and potential drawback of those operations. 

The bare minimum of the thermodynamically free operations is that they should map a thermal Gibbs state to a Gibbs state~\cite{Janzing2000thermodynamic,Lostaglio2019introductory}.
One standard choice for thermodynamic operations, known as \emph{Thermal Operations}~\cite{Janzing2000thermodynamic,Horodecki2009quantum}, is to impose an additional physical restriction, where energy-conserving unitary interacting with an ambient heat bath is only allowed. 
This class is operationally well supported, but at the same time it is often hard to analyze due to this additional structure. 
Another standard approach is to consider all operations that meet the minimum Gibbs-preserving requirement, so-called \emph{Gibbs-preserving Operations}, as available thermodynamic processes.
This rather axiomatic approach benefits from a great mathematical simplification, which allowed for several recent key findings in quantum thermodynamics~\cite{Faist2018fundamental,Faist2019thermodynamic,Buscemi2019information,Wang2019resource,Liu2019one-shot,Regula2020benchmarking,Faist2021thermodynamic,Sagawa2021asymptotic,Shiraishi2021quantum}.

Although these two classes have been flexibly chosen depending on the goal of the study, the precise relation between them has largely been unclear. 
In particular, it is not clear at all whether Gibbs-preserving Operations admit physically reasonable realization with respect to Thermal Operations---if not, the status of Gibbs-preserving Operations as thermodynamic processes would be put into question.
Indeed, it has been known for a while that the set of Gibbs-preserving Operations is \emph{strictly} larger than the set of Thermal Operations~\cite{Faist2015Gibbs-preserving}, making the gap between these two classes worth analyzing. 
In fact, Ref.~\cite{Faist2015Gibbs-preserving} revealed that a key difference between these two maps resides in the capability of creating \emph{quantum coherence}---superposition between energy eigenstates---which is known to serve as a useful thermodynamic resource~\cite{Brandao2013resource,Lostaglio2015quantum,Gour2018quantum,Kwon2018clock}.
Thermal Operations cannot create quantum coherence from incoherent states, but Gibbs-preserving Operations can. 
This demands that to realize Gibbs-preserving Operations with Thermal Operations, one generally needs to aid them with extra quantum coherence. 
Beyond this, not much is known about the implementability of Gibbs-preserving Operations, except for the limited case of trivial Hamiltonian~\cite{Faist2015minimal}.
In particular, it is crucial to clarify whether there is a universally sufficient amount of thermodynamic resources that admits implementation of any Gibbs-preserving operation of a fixed size with Thermal Operations, which would secure a certain level of physical justification of Gibbs-preserving Operations. 

Here, we show that it is not the case. We present a continuous family of Gibbs-preserving Operations that cannot be implemented by any finite amount of quantum coherence.
We provide an explicit way of constructing such Gibbs-preserving Operations, which can be applied to arbitrary dimensional systems and almost arbitrary Hamiltonian.
Interestingly, these operations are not the ones that create coherence---like the one discussed in Ref.~\cite{Faist2015Gibbs-preserving}---but the ones that \emph{detect} coherence. We show that the former can actually be implemented by a finite amount of coherence, showing an intriguing asymmetry between coherence creation and detection in terms of implementation cost. 

We show the phenomenon of infinite coherence cost by obtaining the general lower bounds for the coherence cost required to approximately implement a certain class of Gibbs-preserving Operations, which diverges at the limit of zero implementation error. 
We show that our lower bound is almost tight, where we in turn find that a quantifier introduced in Ref.~\cite{Tajima2022universal}, which is linked to the capability of changing energy, characterizes the optimal coherence cost for certain Gibbs-preserving Operations. 
We also find that an arbitrary (not necessarily Gibbs-preserving) quantum channel can generally be approximately implemented by Thermal Operations with a coherence cost that scales with the error in the same way as the aforementioned lower bound for Gibbs-preserving Operations, showing that some Gibbs-preserving Operations, roughly speaking, belong to the most costly class of quantum operation. 

Our results provide a partial solution to the open problem raised in Ref.~\cite{open_problem} and particularly confirm the existence of thermodynamically infeasible Gibbs-preserving Operations.   
Our finding therefore implies that one needs to interpret the operational power of Gibbs-preserving Operations with extra caution in light of their physical implementability.

%%%%%%%%%%%%%%%%%%%%%%%%%%%%%%%%%%%%%%%%%%%%%%%%%%%%%%%%%%%%%%%%%%%%%%%%%%%%%%%%%%%%%%%%%%%%%%%%%%%%%%%%%%%%%%%%%%%

\textit{\textbf{Preliminaries.---}}
We begin by introducing relevant settings and frameworks. (See Sec.~\ref{app:background} of the  Supplemental Material~\footnote{See Supplemental Material for details of backgrounds, proofs, and extended discussions, which includes Refs.~\cite{Gour2008resource,Chitambar2019quantum,Brandao2010reversible,Brandao2011one-shot,Regula2019one-shot,YungerHalpern2016microcanonical,Brandao2015second,Marvian2012symmetry,Kudo_Tajima,YT,YT2,Shitara_Tajima,Keyl1999optimal,HP,Datta2009min,Regula2018convex,rockafellar2015convex,Wilming2022correlationsin,Shiraishi2024arbitrary}} for more extensive descriptions.) 
\nocite{Gour2008resource,Chitambar2019quantum,Brandao2010reversible,Brandao2011one-shot,Regula2019one-shot,YungerHalpern2016microcanonical,Brandao2015second,Marvian2012symmetry,Kudo_Tajima,YT,YT2,Shitara_Tajima,Keyl1999optimal,HP,Datta2009min,Regula2018convex,rockafellar2015convex,Wilming2022correlationsin,Shiraishi2024arbitrary}
Throughout this work, we consider a situation where systems are surrounded by a thermal bath with an arbitrary finite inverse temperature $\beta$.
We assume that the specification of a system $X$ always comes with its Hamiltonian $H_X=\sum_{i=1}^{d_X} E_{X,i}\dm{i}$ with dimension $d_X$. 
Then, the thermal Gibbs state in system $X$ is written by $\tau_X=e^{-\beta H_X}/\Tr(e^{\beta H_X})$. 

We consider a quantum channel, i.e., completely-positive trace-preserving (CPTP) map, from a \revision{ finite-dimensional} system $S$ to another \revision{finite-dimensional} system $S'$. 
A central class of quantum channels we consider is the set of Gibbs-preserving Operations. 
As the name suggests, these are the operations that map Gibbs states to Gibbs states. 
Here, we employ a generalized notion of Gibbs-preserving Operations, in which input and output systems can generally be different~\cite{Janzing2000thermodynamic,Renes2014work,Faist2019thermodynamic,Faist2021thermodynamic,Sagawa2021asymptotic}. 
Namely, we call a channel $\Lambda:S\to S'$ Gibbs-preserving if $\Lambda(\tau_{S}) = \tau_{S'}$.
% The last condition is equivalent to that the free energy of the output Gibbs state is not greater than that of the Gibbs state, ensuring that such transformation can be run by no work cost~\cite{Sagawa2021asymptotic}.

Another class, which is supported by an operational consideration, is the set of Thermal Operations. 
We call a channel $\Lambda:S\to S'$ Thermal Operation if there are environments $E$ and $E'$ such that $S\otimes E = S'\otimes E'$ and a unitary $U$ on the whole system satisfying
\begin{equation}\begin{aligned}
 \Lambda(\rho) = \Tr_{E'} \left(U \rho \otimes \tau_E U^\dagger\right),\quad [U,H_{\rm tot}] = 0
 \label{eq:thermal operations definition}
\end{aligned}\end{equation}
where $H_{\rm tot} = H_S\otimes \mathds{1}_E + \mathds{1}_S \otimes H_E = H_{S'}\otimes \mathds{1}_{E'} + \mathds{1}_{S'}\otimes H_{E'}$ is the total Hamiltionian~\cite{Janzing2000thermodynamic,Horodecki2009quantum,footnote_thermal}.

It is not difficult to see that Thermal Operations are always Gibbs preserving. However, the converse is not true.  
Ref.~\cite{Faist2015Gibbs-preserving} showed this by considering a simple example of a qubit channel $\Lambda:S\to S$ defined by
\begin{equation}\begin{aligned}
 \Lambda(\rho) = \brakets{1}{\rho}{1}\eta + \brakets{0}{\rho}{0}\sigma 
 \label{eq:Faist example}
\end{aligned}\end{equation}
where $\eta$ is some quantum state one can choose, and $\sigma = \brakets{0}{\tau_S}{0}^{-1}(\tau_S - \brakets{1}{\tau_S}{1}\eta)$. 
One can explicitly check that this is Gibbs preserving by definition.
On the other hand, by choosing $\eta$ as a state containing energetic coherence, i.e., off-diagonal term with respect to the energy eigenbasis, one can see that this channel can prepare a coherent state from the state $\dm{1}$, which does not have energetic coherence. 
Since Thermal Operations are not able to create energetic coherence from scratch, one can conclude that such a channel is Gibbs-preserving but not a Thermal Operation. 

This indicates that the key notion to fill the gap between Gibbs-preserving and Thermal Operations is the energetic coherence, and we would like to formalize this quantitatively. 
Formally, we say that a state $\rho$ in $S$ has energetic coherence if $\rho \neq e^{-iH_S t}\rho e^{i H_St}$ for some time $t$, which is equivalent to having a nonzero block off-diagonal element with respect to energy eigenbasis. 
For a quantitative analysis of energetic coherence, we employ quantum Fisher information defined for a state $\rho$ in system $S$ by 
\begin{equation}\begin{aligned}
 \calF(\rho) = 2\sum_{i,j}\frac{(\lambda_i-\lambda_j)^2}{\lambda_i+\lambda_j}|\bra{e_i}H_S\ket{e_j}|^2,
\end{aligned}\end{equation}
where $\{\lambda_i\}_i$ and $\{\ket{e_i}\}_i$ are the sets of eigenvalues and eigenstates of a state $\rho$ such that $\rho=\sum_i\lambda_i\dm{e_i}$.
Quantum Fisher information is a well-known coherence quantifier that comes with a natural operational interpretation~\cite{Yadin2016general,Marvian2022operational,skew_resource,Takagi_skew}.

%%%%%%%%%%%%%%%%%%%%%%%%%%%%%%%%%%%%%%%%%%%%%%%%%%%%%%%%%%%%%%%%%%%%%%%%%%%%%%%%%%%%%%%%%%%%%%%%%%%%%%%%%%%%%%%%%%%

\textit{\textbf{Fundamental coherence cost.---}}
We investigate how costly it is to implement Gibbs-preserving Operations by analyzing the amount of coherence needed to implement a desired Gibbs-preserving Operation by a Thermal Operation.  
Here, we measure the accuracy of implementation by a channel purified distance~\cite{Gilchrist2005distance}
\begin{equation}\begin{aligned}
 D_F(\Lambda_1,\Lambda_2) \coloneqq \max_\rho D_F(\id\otimes\Lambda_1(\rho),\id\otimes\Lambda_2(\rho))
\end{aligned}\end{equation}
where $D_F(\rho,\sigma)= \sqrt{1-F(\rho,\sigma)^2}$ and $F(\rho,\sigma)=\Tr\sqrt{\sqrt{\rho}\sigma\sqrt{\rho}}$.
We particularly write $\Lambda_1\sim_\epsilon \Lambda_2$ to denote $D_F(\Lambda_1,\Lambda_2)\leq \epsilon$.

The primary quantity we study is the minimum coherence cost for implementing a channel $\Lambda$ with error $\epsilon$ defined by 
\begin{equation}\begin{aligned}
 \calF_c^\epsilon(\Lambda) \coloneqq \min\lset \calF(\eta)\sbar \Lambda(\cdot) \sim_\epsilon \tilde\Lambda(\cdot\otimes \eta),\ \tilde\Lambda\in\bbO_{\rm TO}\rset
\end{aligned}\end{equation}
where $\bbO_{\rm TO}$ is the set of Thermal Operations, and $\eta$ is a state in an arbitrary ancillary system. Namely, we regard the coherence cost as the minimum amount of coherence attributed to an ancillary state that---together with a Thermal Operation---realizes the approximation implementation of the target channel $\Lambda$.

The key idea in evaluating this is to connect our setting to the recent trade-off relation between coherence cost for channel implementation and the degree of reversibility of the channel to implement~\cite{Tajima2022universal}, which was shown to unify, e.g., the Wigner-Araki-Yanase theorems on quantum processes~\cite{Wigner1952,Araki-Yanase1960,OzawaWAY,TN,Kuramochi-Tajima,ET2023,ozawaWAY_CNOT,TSS,Tajima2020coherence,TS} and the Eastin-Knill theorems on quantum error correcting codes \cite{Eastin-Knill,e-EKFaist,e-EKKubica, e-EKZhou,e-EKYang,TS}---see Sec.~\ref{app:background} of the Supplemental Material for details. 
In light of this, we find that the following class of Gibbs-preserving Operations plays a central role. 
\begin{definition}\label{def:pairwise reversible}
We call a Gibbs-preserving Operation $\Lambda$ \emph{pairwise reversible} if there exists a pair $\bbP=\{\rho_1,\rho_2\}$ of orthogonal states, i.e., $\Tr(\rho_1\rho_2)=0$, and a quantum channel $\calR$ such that $\calR\circ\Lambda(\rho_j)=\rho_j$ for $j=1,2$. 
We also call $\bbP$ a \emph{reversible pair} of $\Lambda$.
\end{definition} 

We now introduce a central quantity for characterizing coherence cost. Let $\bbP=\{\rho_1,\rho_2\}$ be a reversible pair for a Gibbs-preserving channel $\Lambda:S\to S'$. 
Then, we define
\begin{equation}\begin{aligned}
 \calC(\Lambda,\bbP)\coloneqq \|\sqrt{\rho_1}(H_S-\Lambda^\dagger(H_{S'}))\sqrt{\rho_2}\|_2
 \label{eq:energy change operator definition}
\end{aligned}\end{equation}
where $\Lambda^\dagger$ is the dual map such that $\Tr(\Lambda^\dagger(A)B)=\Tr(A\,\Lambda(B))$ for arbitrary operators $A$ and $B$, and $\|X\|_2\coloneqq \sqrt{\Tr(X^\dagger X)}$ is the Hilbert-Schmidt norm.
This quantity particularly admits a simpler form for pure-state reversible pair $\bbP=\{\psi_1,\psi_2\}$ as
\begin{equation}\begin{aligned}
 \calC(\Lambda,\bbP)=|\brakets{\psi_1}{H_S-\Lambda^\dagger(H_{S'})}{\psi_2}|.
 \label{eq:energy change operator definition pure states}
\end{aligned}\end{equation}
The quantity $H_S-\Lambda^\dagger(H_{S'})$ is an operator that corresponds to the local energy change in the system, and the forms in \eqref{eq:energy change operator definition} and \eqref{eq:energy change operator definition pure states} indicate that $\calC(\Lambda,\bbP)$ measures the off-diagonal element of this operator with respect to the reversible states. 
More discussions about this quantity can be found in \HT{Sec.~\ref{subsec:trade-off app} of the Supplemental Material (see also Ref.~\cite{Tajima2022universal})}.

We are now in the position to present our first main result, which establishes a universal lower bound for the coherence cost for pairwise reversible Gibbs-preserving Operations. (Proof in Sec.~\ref{app:lower bound} of the Supplemental Material.)

\begin{theorem}\label{thm:lower bound}
Let $\Lambda:S\to S'$ be a pairwise reversible Gibbs-preserving Operation with a reversible pair $\bbP$. Then, 
\begin{equation}\begin{aligned}
\sqrt{\calF^\epsilon_c(\Lambda)}\geq 
\frac{\calC(\Lambda,\bbP)}{\epsilon} -\Delta(H_S) - 3\Delta(H_{S'}),\label{eq:lower bound}
\end{aligned}\end{equation}
where $\Delta(O)$ is the difference between the minimum and maximum eigenvalues of an operator $O$.
\end{theorem}
This particularly establishes a demanding coherence cost in the small error regime. 
\HT{In Sec.~\ref{app:lower bound} of the Supplemental Material, we extend Theorem~\ref{thm:lower bound} to the cases without perfect pairwise reversibility, as well as to the form that does not directly depend on the maximum range $\Delta(H_S)$ or $\Delta(H_{S'})$ of the Hamiltonians, showing the potential of obtaining a similar lower bound applicable to unbounded Hamiltonians in infinite-dimensional systems.
}

Theorem~\ref{thm:lower bound} implies that no pairwise reversible Gibbs-preserving Operation $\Lambda$ with $\calC(\Lambda,\bbP)>0$ can be exactly implemented with a finite amount of coherence cost, as the lower bound diverges as $\epsilon$ approaches 0.
Therefore, the problem of whether cost-diverging Gibbs-preserving Operations exists reduces to whether there exists a pairwise reversible Gibbs-preserving Operation $\Lambda$ and a reversible pair $\bbP$ such that $\calC(\Lambda,\bbP)>0$ at all. 
The following result not only shows the existence of such operations but provides a continuous family of those.

\begin{theorem}\label{thm:sufficient condition reversible}
Let $\tau_{X,i}=\brakets{i}{\tau_X}{i}_X$ be the Gibbs distribution for the Gibbs state for a system $X$ with Hamiltonian $H_X = \sum_i E_{X,i}\dm{i}_X$. Then, if there are integers $i$, $j$, and $i'$ for systems $S$ and $S'$ such that
\begin{equation}\begin{aligned}
 \tau_{S,i} < \tau_{S',i'} < \tau_{S,j},
 \label{eq:condition sufficient}
\end{aligned}\end{equation}
there exists a pairwise reversible Gibbs-preserving Operation $\Lambda:S\to S'$ and a reversible pair $\bbP$ such that $\calC(\Lambda,\bbP)>0$.
\end{theorem}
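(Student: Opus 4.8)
\medskip
\noindent The plan is to exhibit a single explicit channel of measure-and-prepare type that ``coherently probes'' the two-dimensional energy subspace $\mathrm{span}\{\ket{i}_S,\ket{j}_S\}$ of $S$ and re-prepares a fixed output; the only design freedom is the relative weighting of the probe between $\ket{i}_S$ and $\ket{j}_S$, and the role of the hypothesis \eqref{eq:condition sufficient} is exactly to let us tune this weighting so that the map becomes Gibbs-preserving.

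First I would fix the reversible pair. Note that $\tau_{S,i}<\tau_{S,j}$ forces $E_{S,i}>E_{S,j}$ through the Gibbs weights, and that $\tau_{S',i'}<\tau_{S,j}<1$. Since $\tau_{S',i'}$ lies \emph{strictly} between $\tau_{S,i}$ and $\tau_{S,j}$, I can pick real $\alpha,\beta>0$ with $\alpha^2+\beta^2=1$ and $\alpha^2\tau_{S,i}+\beta^2\tau_{S,j}=\tau_{S',i'}$, and neither $\alpha$ nor $\beta$ vanishes. Set $\ket{\psi_+}\coloneqq\alpha\ket{i}_S+\beta\ket{j}_S$ and $\ket{\psi_-}\coloneqq\beta\ket{i}_S-\alpha\ket{j}_S$ --- an orthonormal pair, each with energetic coherence --- and take $\bbP\coloneqq\{\dm{\psi_+},\dm{\psi_-}\}$.

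Next I would write down the channel. Let $\omega\coloneqq(1-\tau_{S',i'})^{-1}\bigl(\tau_{S'}-\tau_{S',i'}\dm{i'}\bigr)$, a density operator on $S'$ supported on the orthogonal complement of $\ket{i'}_{S'}$, and define
\bal
 \Lambda(\rho)\coloneqq\brakets{\psi_+}{\rho}{\psi_+}\,\dm{i'}+\bigl(1-\brakets{\psi_+}{\rho}{\psi_+}\bigr)\,\omega .
\eal
This is CPTP, being the measure-and-prepare channel of the two-outcome POVM $\{\dm{\psi_+},\,\mathds1_S-\dm{\psi_+}\}$. I would then verify the three requirements. For \emph{Gibbs preservation}, $\brakets{\psi_+}{\tau_S}{\psi_+}=\alpha^2\tau_{S,i}+\beta^2\tau_{S,j}=\tau_{S',i'}$, so $\Lambda(\tau_S)=\tau_{S',i'}\dm{i'}+(1-\tau_{S',i'})\omega=\tau_{S'}$. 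For \emph{pairwise reversibility on $\bbP$}, $\braket{\psi_+}{\psi_-}=0$ gives $\Lambda(\dm{\psi_+})=\dm{i'}$ and $\Lambda(\dm{\psi_-})=\omega$, and $\bra{i'}\omega\ket{i'}=0$ shows these outputs have orthogonal supports; hence the channel $\calR$ that projectively distinguishes $\mathrm{supp}(\dm{i'})$ from $\mathrm{supp}(\omega)$ and re-prepares $\psi_+$ or $\psi_-$ accordingly satisfies $\calR\circ\Lambda(\rho_j)=\rho_j$ for $j=1,2$. For \emph{positivity of $\calC$}, a one-line computation of the dual map gives $\Lambda^\dagger(H_{S'})=\bigl(E_{S',i'}-\Tr(H_{S'}\omega)\bigr)\dm{\psi_+}+\Tr(H_{S'}\omega)\,\mathds1_S$, whose only non-scalar part is proportional to $\dm{\psi_+}$, so $\brakets{\psi_+}{\Lambda^\dagger(H_{S'})}{\psi_-}=0$; by \eqref{eq:energy change operator definition pure states} this yields $\calC(\Lambda,\bbP)=|\brakets{\psi_+}{H_S}{\psi_-}|=\alpha\beta\,|E_{S,i}-E_{S,j}|>0$, using $\alpha,\beta>0$ and $E_{S,i}\neq E_{S,j}$.

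The step I expect to carry the real content is the weighting in the second paragraph: one must make the ``coherent branch'' probability $\brakets{\psi_+}{\tau_S}{\psi_+}$ hit $\tau_{S',i'}$ exactly while keeping $\ket{\psi_+}$ a genuine superposition of $\ket{i}_S$ and $\ket{j}_S$. The symmetric choice $\alpha=\beta=1/\sqrt2$ only reaches weight $(\tau_{S,i}+\tau_{S,j})/2$, which generically differs from $\tau_{S',i'}$ (and may even exceed every eigenvalue of $\tau_{S'}$, so that no valid $\omega\geq0$ exists), forcing an asymmetric $\psi_\pm$; conversely the \emph{strict} double inequality \eqref{eq:condition sufficient} is precisely what keeps $\alpha,\beta$ bounded away from $0$ and $1$, hence keeps the coherence --- and with it $\calC(\Lambda,\bbP)$ --- nonzero. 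Everything else (CPTP-ness, the Gibbs identity, the recovery map, and the dual-map evaluation) is routine.
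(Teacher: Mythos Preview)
Your proof is correct and follows essentially the same construction as the paper: the same weighted superposition $\ket{\psi_+}=\sqrt{r}\ket{i}+\sqrt{1-r}\ket{j}$ tuned so that $\brakets{\psi_+}{\tau_S}{\psi_+}=\tau_{S',i'}$, the same measure-and-prepare channel sending $\psi_+\mapsto\dm{i'}$ and its complement to the residual Gibbs state, and the same orthogonal partner $\ket{\psi_-}$ for the reversible pair. The only cosmetic difference is that the paper first passes through a slightly more general sufficient condition phrased in terms of $D_{\min}$ and $D_{\max}$ before specializing, whereas you go directly to the explicit instance.
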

We prove this in Sec.~\ref{app:reversible} of the Supplemental Material, which also provides an explicit construction of the corresponding pairwise reversible Gibbs-preserving Operation.
%We prove this in Appendix~\ref{app:reversible}, which also provides an explicit construction of the corresponding pairwise reversible Gibbs-preserving Operation. 
Here is an illustrative example encompassed in Theorem~\ref{thm:sufficient condition reversible}. Let $S$ and $S'$ be qubit systems with $H_S=\dm{1}$ and $H_{S'}=0$. 
Since $\tau_{S,0}=1/(1+e^{-\beta})$, $\tau_{S,1}=e^{-\beta}/(1+e^{-\beta})$, and $\tau_{S',0}=\tau_{S',1}=1/2$, these systems satisfy \eqref{eq:condition sufficient} for arbitrary finite temperature.  
The corresponding pairwise reversible Gibbs-preserving Operation $\Lambda:S\to S'$ is
\begin{equation}\begin{aligned}
 \Lambda(\rho) = \brakets{+}{\rho}{+}\dm{0} + \brakets{-}{\rho}{-}\dm{1}
 \label{eq:example qubit}
\end{aligned}\end{equation}
where $\ket{\pm}=\frac{1}{\sqrt{2}}(\ket{0}\pm\ket{1})$ is the maximally coherent state on $S$. 
It is easy to see that this is Gibbs-preserving. 
This is also pairwise reversible with a reversible pair $\bbP=\{\dm{+},\dm{-}\}$ because a recovery channel $\calR(\cdot)=\brakets{0}{\cdot}{0}\dm{+}+\brakets{1}{\cdot}{1}\dm{-}$ satisfies $\calR\circ\Lambda(\dm{\pm})=\dm{\pm}$.
Direct computation also shows that $\calC(\Lambda,\bbP)=\frac{1}{2}>0$.

It is insightful to see the structural difference between the Gibbs-preserving Operations in Eqs.~\eqref{eq:Faist example} and \eqref{eq:example qubit}. 
The one in \eqref{eq:Faist example} can create coherence from an incoherent state input state $\dm{1}$.
On the other hand, the channel in \eqref{eq:example qubit} cannot create coherence at all---in fact, output states are always incoherent for any input states. 
Instead, it can perform a measurement in a coherent basis. 
%As we show in Appendix~\ref{App:cost_Faist},
As we show in Sec.~\ref{App:cost_Faist} of the Supplemental Material, 
the coherent cost for the channel in \eqref{eq:Faist example} is upper bounded by $\calF(\eta)+\calF(\sigma)$, which corresponds to the sum of coherence that can be created by the channel. 
This shows a drastic asymmetry between creation and detection of coherence when it comes to its realization. 

We also remark that Theorem~\ref{thm:sufficient condition reversible}, together with Theorem~\ref{thm:lower bound}, guarantees the existence of a Gibbs-preserving Operation with infinite coherence cost for the case when input and output systems are identical. 
Indeed, whenever the system's Hamiltonian comes with at least three distinct eigenenergies, the condition in Theorem~\ref{thm:sufficient condition reversible} with $S'$ being replaced with $S$ is satisfied. 

Theorems~\ref{thm:lower bound}~and~\ref{thm:sufficient condition reversible} provide an overview of the classification of Gibbs-preserving Operations (Fig.~\ref{fig:Venn}). 
We remark that not all pairwise reversible Gibbs-preserving Operations come with a diverging coherence cost (e.g., identity channel)---Theorem~\ref{thm:lower bound} ensures the infinite cost only when there is a reversible pair $\bbP$ satisfying $\calC(\Lambda,\bbP)>0$.
On the other hand, our results do not rule out the possibility that all cost-diverging Gibbs-preserving Operations are pairwise reversible. 
\revision{ We also show that among the set of Gibbs-preserving Operations, cost-diverging ones are atypical. 
Nevertheless, we stress that many well-structured Gibbs-preserving Operations of interest can still come with diverging or extremely high coherence cost---see Sec.~\ref{app:atypicality} of the Supplemental Material for details. 
}

\begin{figure}[t]
\begin{center}
\includegraphics[width=.45\textwidth]{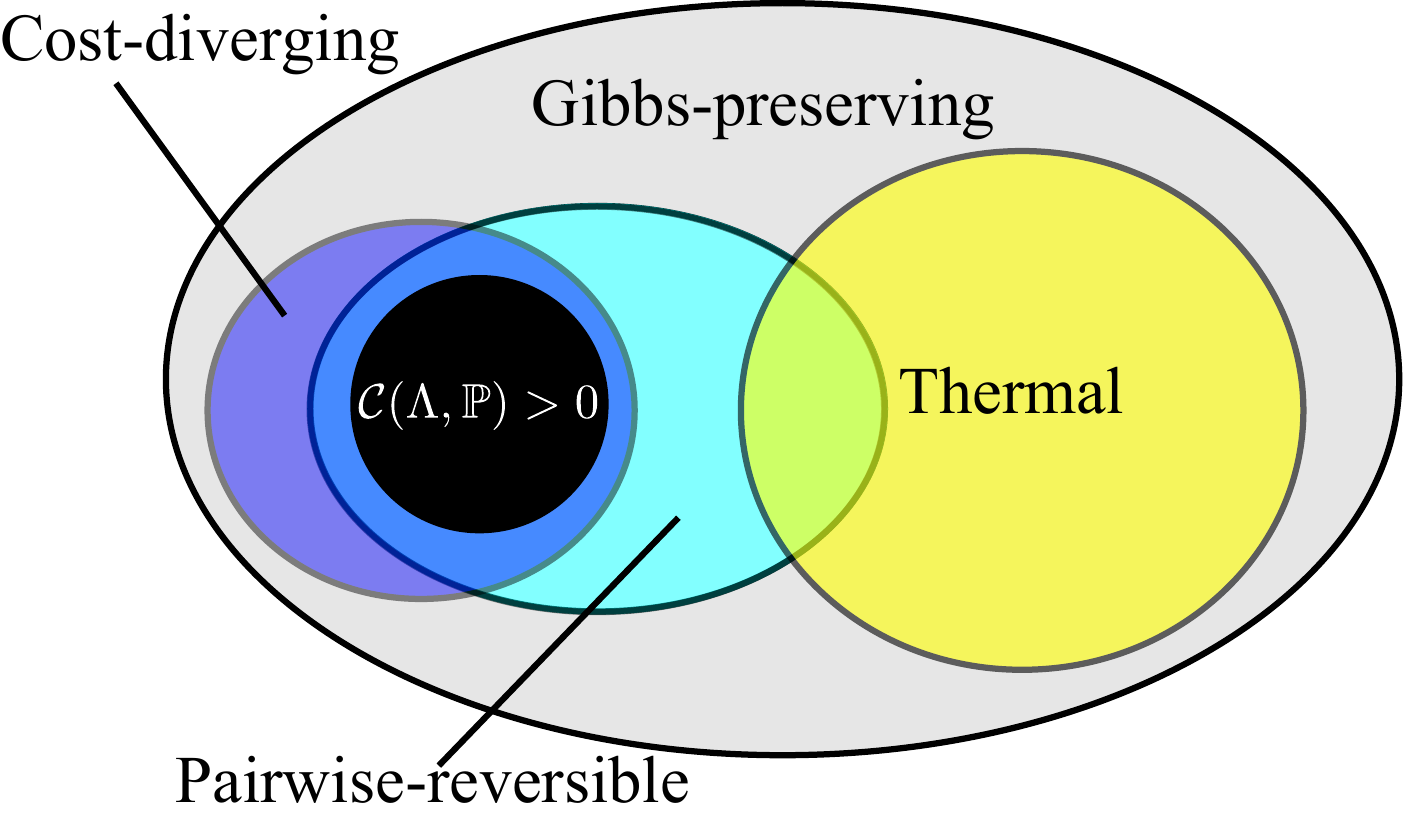}
\caption{Classification of Gibbs-preserving Operations. Theorem~\ref{thm:lower bound} ensures that pairwise reversible Gibbs-preserving Operations (Definition~\ref{def:pairwise reversible}) for which there is a reversible pair $\bbP$ such that $\calC(\Lambda,\bbP)>0$ comes with diverging coherence cost, and the existence of those operations is guaranteed by Theorem~\ref{thm:sufficient condition reversible}. The existence of a cost-diverging Gibbs-preserving channel outside of $\calC(\Lambda,\bbP)>0$ circle has neither been confirmed nor ruled out.}
\label{fig:Venn}
\end{center}
\end{figure}

We furthermore find that our findings restrict the physically feasible state transformations. 
The following result shows that whenever the systems $S$ and $S'$ satisfy the same condition as that for Theorem~\ref{thm:sufficient condition reversible}, there exist state transformations that can be achieved by Gibbs-preserving Operations but none of those Gibbs-preserving Operations cannot be implemented by a finite amount of coherence together with Thermal Operations.

\begin{theorem}\label{thm:restrictions on state transition}
Suppose the input and output systems $S$ and $S'$ satisfy the condition \eqref{eq:condition sufficient}.
Then, there exists a pair $(\rho,\sigma)$ of states such that the state transition $\rho\rightarrow\sigma$ is possible by a Gibbs-preserving Operation but every Gibbs-preserving Operation $\Lambda$ such that $\Lambda(\rho)=\sigma$ must come with diverging coherence cost, i.e., $\calF_c^{\epsilon=0}(\Lambda)=\infty$.
\end{theorem}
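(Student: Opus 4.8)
The plan is to produce the pair $(\rho,\sigma)$ explicitly as pure states built from the three levels in \eqref{eq:condition sufficient}, and then to show that Gibbs-preservation \emph{by itself} rigidly forces every channel realizing $\rho\to\sigma$ to be pairwise reversible on a fixed pair with $\calC>0$, so that Theorem~\ref{thm:lower bound} yields the divergence. Given $i,j,i'$ with $\tau_{S,i}<\tau_{S',i'}<\tau_{S,j}$, I would pick the unique $p\in(0,1)$ with $p\,\tau_{S,i}+(1-p)\tau_{S,j}=\tau_{S',i'}$ (it exists since $\tau_{S',i'}$ lies strictly between $\tau_{S,i}$ and $\tau_{S,j}$), and set $\ket{\psi_1}=\sqrt{p}\ket{i}+\sqrt{1-p}\ket{j}$, $\ket{\psi_2}=\sqrt{1-p}\ket{i}-\sqrt{p}\ket{j}$, $\rho=\dm{\psi_1}$ in $S$, $\sigma=\dm{i'}$ in $S'$. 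Then $\braket{\psi_1}{\psi_2}=0$, and $\brakets{\psi_1}{H_S}{\psi_2}=\sqrt{p(1-p)}\,(E_{S,i}-E_{S,j})\neq0$ because $\tau_{S,i}\neq\tau_{S,j}$ forces $E_{S,i}\neq E_{S,j}$. Achievability of $\rho\to\sigma$ is immediate: the channel $\Lambda_0(\omega)=\brakets{\psi_1}{\omega}{\psi_1}\dm{i'}+\big(\Tr\omega-\brakets{\psi_1}{\omega}{\psi_1}\big)(1-\tau_{S',i'})^{-1}\sum_{m\neq i'}\tau_{S',m}\dm{m}$ is CPTP, sends $\rho$ to $\sigma$, and is Gibbs-preserving since the choice of $p$ makes $\brakets{\psi_1}{\tau_S}{\psi_1}=\tau_{S',i'}$, so $\Lambda_0(\tau_S)=\tau_{S'}$.

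For the converse direction, let $\Lambda$ be an arbitrary Gibbs-preserving channel with $\Lambda(\rho)=\sigma$, and fix Kraus operators $\{K_k\}$. Since $\sigma=\dm{i'}$ is pure, each $K_k\ket{\psi_1}=c_k\ket{i'}$ with $\sum_k|c_k|^2=1$; decomposing $K_k=\ketbra{i'}{a_k}+\tilde K_k$ with $\bra{i'}\tilde K_k=0$, I get $\tilde K_k\ket{\psi_1}=0$, $\braket{a_k}{\psi_1}=c_k$, and $\bra{i'}\Lambda(\omega)\ket{i'}=\Tr(G\omega)$ for all $\omega$, where $G:=\sum_k\ketbra{a_k}{a_k}=\mathds{1}-\sum_k\tilde K_k^\dagger\tilde K_k\leq\mathds{1}$. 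Evaluating this identity at $\rho$ gives $\brakets{\psi_1}{G}{\psi_1}=1$, which together with $G\leq\mathds{1}$ forces $G\ket{\psi_1}=\ket{\psi_1}$, hence $G=\dm{\psi_1}+G''$ with $G''\geq0$ supported on $\ket{\psi_1}^\perp$; evaluating it at $\tau_S$ and using $\Lambda(\tau_S)=\tau_{S'}$ gives $\Tr(G\tau_S)=\tau_{S',i'}=\brakets{\psi_1}{\tau_S}{\psi_1}$, so $\Tr(G''\tau_S)=0$, and since the compression of the full-rank $\tau_S$ to $\ket{\psi_1}^\perp$ is positive definite, $G''=0$. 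Therefore $K_k=c_k\ketbra{i'}{\psi_1}+\tilde K_k$ with $\tilde K_k\ket{\psi_1}=0$ and $\bra{i'}\tilde K_k=0$. I expect this rigidity step to be the crux: the content is that Gibbs-preservation \emph{alone}---with no minimality or reversibility assumed---pins $G$ down to the rank-one projector $\dm{\psi_1}$, and this works precisely because $p$ has been tuned so that $\brakets{\psi_1}{\tau_S}{\psi_1}=\tau_{S',i'}$, which is exactly what \eqref{eq:condition sufficient} allows; the naive symmetric superposition $\tfrac{1}{\sqrt2}(\ket{i}+\ket{j})$ would generically break the matching $\Tr(G''\tau_S)=0$.

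It remains to read off the conclusion. Since $K_k\ket{\psi_2}=\tilde K_k\ket{\psi_2}\in\ket{i'}^\perp$, the state $\Lambda(\dm{\psi_2})$ is supported in $\ket{i'}^\perp$, orthogonal to $\sigma$; hence the measure-and-prepare recovery $\calR(\omega)=\bra{i'}\omega\ket{i'}\dm{\psi_1}+\Tr[(\mathds{1}-\dm{i'})\omega]\dm{\psi_2}$ satisfies $\calR\circ\Lambda(\dm{\psi_j})=\dm{\psi_j}$ for $j=1,2$, so $\bbP=\{\dm{\psi_1},\dm{\psi_2}\}$ is a reversible pair for $\Lambda$. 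Furthermore $\Lambda(\ketbra{\psi_2}{\psi_1})=\big(\sum_k\bar c_k\tilde K_k\ket{\psi_2}\big)\bra{i'}=\ket{\chi}\bra{i'}$ with $\braket{i'}{\chi}=0$, so $\brakets{\psi_1}{\Lambda^\dagger(H_{S'})}{\psi_2}=\Tr[H_{S'}\ketbra{\chi}{i'}]=E_{S',i'}\braket{i'}{\chi}=0$, whence $\calC(\Lambda,\bbP)=|\brakets{\psi_1}{H_S}{\psi_2}|=\sqrt{p(1-p)}\,|E_{S,i}-E_{S,j}|>0$. Theorem~\ref{thm:lower bound} then gives $\sqrt{\calF_c^\epsilon(\Lambda)}\geq \calC(\Lambda,\bbP)/\epsilon-\Delta(H_S)-3\Delta(H_{S'})$, which diverges as $\epsilon\to0$, so $\calF_c^{\epsilon=0}(\Lambda)=\infty$. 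Since $\Lambda$ was an arbitrary Gibbs-preserving channel with $\Lambda(\rho)=\sigma$, this is the claim.
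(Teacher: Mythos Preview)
Your proof is correct and follows essentially the same route as the paper: the same pair $(\rho,\sigma)=(\dm{\psi_1},\dm{i'})$ with $p$ tuned so that $\brakets{\psi_1}{\tau_S}{\psi_1}=\tau_{S',i'}$, the same measure-and-prepare witness $\Lambda_0$, and the same rigidity-then-Theorem~\ref{thm:lower bound} structure for the converse. Your packaging of the rigidity step is somewhat cleaner than the paper's explicit Kraus-coefficient bookkeeping---you work directly with the single operator $G=\Lambda^\dagger(\dm{i'})$ and use $G\leq\mathds{1}$, $\brakets{\psi_1}{G}{\psi_1}=1$, and $\Tr(G\tau_S)=\tau_{S',i'}$ to force $G=\dm{\psi_1}$ via a saturation argument---but the content and the conclusion are identical.
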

We prove this in Sec.~\ref{app:state transition} of the Supplemental Material, which also provides an explicit expression of the pair $(\rho,\sigma)$ of states. 
\revision{ In the proof, we also show that the approximate realization of the cost-diverging state transformation requires the cost inversely proportional to the error.}
Theorem \ref{thm:restrictions on state transition} shows that when the available coherence resource is restricted to be finite, the range of transformations achievable by Gibbs-preserving Operations is strictly narrowed.
We stress that the condition \eqref{eq:condition sufficient} is extremely mild---for virtually all scenarios Theorem~\ref{thm:restrictions on state transition} puts nontrivial restrictions on the physically available state transformations by Gibbs-preserving Operations.

%%%%%%%%%%%%%%%%%%%%%%%%%%%%%%%%%%%%%%%%%%%%%%%%%%%%%%%%%%%%%%%%%%%%%%
\textbf{\textit{Upper bounds.---}}
A natural next question is how good the bound in Theorem~\ref{thm:lower bound} can be. 
Interestingly, we find that it is almost tight in the following sense.
\begin{theorem}\label{thm:tight bound}
For every real number $a>0$, there is a pairwise reversible Gibbs-preserving Operation $\Lambda$ and a reversible pair $\bbP$ such that $\calC(\Lambda,\bbP)>0$ and 
\begin{equation}\begin{aligned}
\frac{\calC(\Lambda,\bbP)}{\epsilon}-a\leq\sqrt{\calF_c^\epsilon(\Lambda)}\le\frac{\sqrt{2}\calC(\Lambda,\bbP)}{\epsilon}+a.
\end{aligned}\end{equation}
\end{theorem}
Proof can be found in Sec.~\ref{app:upper bounds} of the Supplemental Material.
This particularly ensures that the quantity $\calC(\Lambda,\bbP)$ defined in \eqref{eq:energy change operator definition} serves as a key quantity that characterizes the coherence cost for a certain class of Gibbs-preserving Operations, which in turn provides an operational interpretation to this quantity.  

Theorem~\ref{thm:tight bound} ensures the existence of a Gibbs-preserving Operation that almost achieves the lower bound. However, it does not tell much about the general upper bound that could be applied to an arbitrary Gibbs-preserving Operation.
In the following, we show that, by giving up obtaining the form that almost matches the lower bound, we can obtain the general sufficient coherence cost that can be universally applied to all Gibbs-preserving Operations.
In fact, we find that the applicability of our bound is much beyond Gibbs-preserving Operations---it gives a sufficient coherence cost for an arbitrary quantum channel. 
(Proof in Sec.~\ref{app:upper bounds} of the Supplemental Material.)

\begin{theorem}\label{thm:general upper}
Let $\Lambda:S\to S'$ be an arbitrary quantum channel admitting a dilation form
\begin{equation}\begin{aligned}
 \Lambda(\rho) = \Tr_{E'}\left(V(\rho\otimes \dm{\eta})V^\dagger\right)
\end{aligned}\end{equation}
for some environments $E$ and $E'$ such that $S\otimes E= S'\otimes E'$, some unitary $V$ on $S\otimes E$, and some pure incoherent state \HT{(i.e. an energy eigenstate)} $\ket{\eta}$ on $E$.
Then, 
\begin{equation}\begin{aligned}
\sqrt{\calF_c^\epsilon(\Lambda)}\leq \frac{\Delta(H_{\rm tot}-V^\dagger H_{\rm tot}V)}{2\epsilon} + \sqrt{2}\Delta(H_{\rm tot})
\end{aligned}\end{equation}
where $H_{\rm tot}=H_S\otimes\mathds{1}_E + \mathds{1}_S\otimes H_E=H_{S'}\otimes\mathds{1}_{E'} + \mathds{1}_{S'}\otimes H_{E'}$, and $\Delta(O)$ is the difference between the minimum and maximum eigenvalues of an operator $O$.
\end{theorem}

Notably, this upper bound also scales as $\sim 1/\epsilon$ with the implementation error, which coincides with the asymptotic scaling of the lower bound in Theorem~\ref{thm:lower bound} for pairwise reversible Gibbs-preserving Operations. 
Combining these two, we can understand that the optimal coherence cost for all pairwise reversible Gibbs-preserving Operations with $\calC(\Lambda,\bbP)>0$ are roughly characterized by $N/\epsilon$ where $N$ is an extensive quantity that grows with a particle number of the system. 
This, together with the fact that Theorem~\ref{thm:general upper} applies to an arbitrary quantum channel, also implies that pairwise reversible Gibbs-preserving Operations are as costly as general quantum operations, putting them into the ``most costly'' class to implement.

Let us now remark a unique characteristic of our results in relation to the previous result for trivial Hamiltonian. 
When input and output states are only equipped with trivial Hamiltonian, every state becomes an incoherent state, i.e., invariant under Hamiltonian evolution, and thus coherence loses its status as a precious resource. 
Therefore, the meaningful question in such a setting is to ask the work cost (the minimum number of work bit required) for implementing unital channels using Noisy Operations~\cite{Horodecki2003reversible}, which respectively corresponds to the Gibbs-preserving and Thermal Operations for trivial Hamiltonian. 
Ref.~\cite{Faist2015minimal} showed that the minimum work cost is \emph{upper} bounded by a quantity scaling as $\sim\log(1/\epsilon)$ with implementation error $\epsilon$, which particularly diverges at the limit of exact implementation.
Nevertheless, its lower bound has still not been established, and therefore it is still unclear if this diverging cost is a fundamental phenomenon or merely an artifact of their specific construction, which is based on a decoupling technique~\cite{del_Rio2011thermodynamic}.
On the other hand, our Theorem~\ref{thm:lower bound} provides a \emph{lower} bound that scale with $1/\epsilon$, which is complemented by upper bounds in Theorems~\ref{thm:tight bound}~and~\ref{thm:general upper} with the same scaling.
To the best of our knowledge, our results are the first ones that establish the inherently diverging thermodynamic cost for implementing Gibbs-preserving Operations.

%%%%%%%%%%%%%%%%%%%%%%%%%%%%%%%%%%%%%%%%%%%%%%%%%%%%%%%%%%%%%%%%%%%%%%%%%%%%%%%%%%%%%%%%%%%%%%%%%%%%%%
\textit{\textbf{Conclusions.---}}
We established bounds for the minimum coherence cost for implementing Gibbs-preserving Operations with a desired target error.
A major consequence of them is that there are Gibbs-preserving Operations that cannot be implemented with a Thermal Operation aided by any finite amount of quantum coherence, and the approximate implementation of these Gibbs-preserving Operations requires roughly the same amount of coherence that suffices to implement the most costly class of quantum channels. 
Our results therefore clarify an enormous hidden cost in Gibbs-preserving Operations, indicating their unphysical nature as thermodynamically available operations. 

This particularly motivates us to revisit and scrutinize the prior results based on Gibbs-preserving Operations from a physical and operational perspective.  
Indeed, optimal Gibbs-preserving Operations in the standard task of state transformation, e.g., work extraction, are often found to possess a measure-and-prepare structure, which takes a similar form to the one in \eqref{eq:example qubit}.
A close investigation of the coherence cost for such channels will make an important future work. 
Another potential direction is to obtain a finer characterization of thermodynamic costs for Gibbs-preserving Operations. 
This includes the complete tight characterization of the coherence cost for all Gibbs-preserving Operations, as well as obtaining corresponding evaluation for work cost---complementary thermodynamic resource besides quantum coherence.

\let\oldaddcontentsline\addcontentsline% Store \addcontentsline
\renewcommand{\addcontentsline}[3]{}% Make \addcontentsline a no-op

\begin{acknowledgments}
% \textit{Acknowledgments.---}
We thank Kaito Watanabe and Bartosz Regula for fruitful discussions. H.T. was supported by JSPS Grants-in-Aid for Scientific Research No. JP19K14610, No. JP22H05250, and
No. JP25K00924, and MEXT KAKENHI Grant-in-Aid for Transformative
Research Areas B ``Quantum Energy Innovation” Grant Numbers 24H00830 and 24H00831, JST PRESTO No. JPMJPR2014, JST MOONSHOT No. JPMJMS2061, and JST FOREST No. JPMJFR2365.
R.T. acknowledges the support of JSPS KAKENHI Grant Number JP23K19028, JP24K16975, JP25K00924, JST, CREST Grant Number JPMJCR23I3, Japan, and MEXT KAKENHI Grant-in-Aid for Transformative
Research Areas A ``Extreme Universe” Grant Number JP24H00943.
\end{acknowledgments}

\let\addcontentsline\oldaddcontentsline% Restore \addcontentsline

\let\oldaddcontentsline\addcontentsline% Store \addcontentsline
\renewcommand{\addcontentsline}[3]{}% Make \addcontentsline a no-op

\bibliographystyle{apsrmp4-2}
\bibliography{myref}

\let\addcontentsline\oldaddcontentsline% Restore \addcontentsline

\clearpage
\newgeometry{hmargin=1.2in,vmargin=0.8in}

\widetext

% \appendix

\setcounter{theorem}{0}
\renewcommand{\thetheorem}{S.\arabic{theorem}}
\setcounter{figure}{0}
\renewcommand{\thefigure}{S.\arabic{figure}}
\renewcommand{\theequation}{S.\arabic{equation}}

\begin{center}
 {\Large\bf Supplemental Material}
\end{center}

\tableofcontents

\section{Background and setting}\label{app:background}

\subsection{Quantum resource theories} \label{app:general channel cost}

Quantum resource theories~\cite{Gour2008resource,Chitambar2019quantum} provide a useful platform on which quantitative analysis of the underlying quantum features can be performed. 
The core idea of resource theories is to consider relevant sets of quantum states (called \emph{free states}) and operations (called \emph{free operations}) that are easily accessible in the given physical setting.
This results in a framework where one can quantify the amount of precious resources attributed to a given state with respect to the set of free states and investigate the feasible state transformations that can be realized by free operations.

Different physical settings of interest can be specified by appropriately choosing the sets of free states and operations, which leads to different resource theories. 
One standard example is to consider the set of separable states and local operations and classical communication (LOCC), resulting in an operational framework of studying quantum entanglement~\cite{Horodecki2009quantum}.
Here, we remark that for a given set of free states, one can consider choosing a different set of free operations while keeping the essence of the physical setting represented by the necessary requirement for free operations 
\bal
\Lambda(\sigma)\in\bbF,\enskip \forall \Lambda\in\bbO, \enskip \HT{\forall \sigma\in\bbF,}
\label{eq:free operations condition}
\eal
for a set $\bbF$ of free states and a set $\bbO$ of free operations. 
Employing the flexibility in choosing different sets of free operations is usually effective when $\bbO$ comes with a complicated structure and is difficult to analyze.
For instance, in the case of entanglement, it is notoriously hard to study the full potential of LOCC, and therefore several classes of other operations, which include LOCC as their subset, were investigated.
One such set is separability-preserving operations, which is the maximal set that satisfies \eqref{eq:free operations condition}.
This admits a great simplification of the analysis and results in significant insights into entanglement transformation~\cite{Brandao2010reversible,Brandao2011one-shot,Regula2019one-shot}. 

The two examples of free operations in the entanglement theory mentioned above have different perspectives and focuses. 
In particular, LOCC is motivated by an \emph{operational} viewpoint, which aims to reflect the reasonable operations that two distant parties can actually accomplish, while separability-preserving operations employ an \emph{axiomatic} approach that respects the bear minimum constraint that operations should not create entanglement for free. 
It is clear from the definition that the latter contains the former, and the inclusion is indeed strict~\cite{Regula2019one-shot}. 
Although each choice is able to extract different aspects of underlying quantum resources, it is still important to clarify the relationship between them.  
In particular, one crucial question here is how much resources are needed for a free operation in the smaller set $\bbO_1$ to simulate the action of a free operation in the larger set $\bbO_2$, which provides the idea of how ``operationally reasonable'' the axiomatic free operations are.  
One can formalize this by asking a \emph{channel implementation cost} for a channel $\Lambda\in\bbO_1$ defined by 
\bal
 C_{R_\bbF}(\Lambda) \coloneqq \min\lset R_\bbF(\eta) \sbar \Lambda(\cdot)=\tilde\Lambda(\cdot\otimes\eta),\  \tilde\Lambda\in\bbO_2\rset 
\eal
for some resource quantifier $R_\bbF$ with respect to the set $\bbF$ of free states. 
In this work, we study this question in the setting of quantum thermodynamics.

%%%%%%%%%%%%%%%%%%%%%%%%%%%%%%%%%%%%%%%%%%%%%%%%%%%%%%%%%%%%%%%%

\subsection{Thermal and Gibbs-preserving Operations}

One of the major approaches in quantum thermodynamics is to employ a resource-theoretic framework~\cite{Janzing2000thermodynamic,Horodecki2009quantum,Brandao2013resource}, which focuses on the ultimate operational capability of thermodynamic operations in the manipulation of quantum systems. 
In this approach, the thermal Gibbs state $\tau_S=e^{-\beta H_S}/\Tr(e^{-\beta H_S})$ for a system $S$ with Hamiltonian $H_S$ is considered to be a state that is freely accessible, and allowed thermodynamic operations are chosen so that they map a Gibbs state to another Gibbs state. 

One of the standard choices for such thermodynamic operations is based on an operational motivation and is known as \emph{Thermal Operations}~\cite{Horodecki2009quantum,YungerHalpern2016microcanonical,Lostaglio2019introductory}.
A channel $\Lambda:S\to S'$ is called a Thermal Operation if there exists an environment $E$ and $E'$ with $S\otimes E = S'\otimes E'$ and an energy-conserving unitary $U$ on $S\otimes E$ such that
\bal
 \Lambda(\rho) = \Tr_{E'} \left(U \rho \otimes \tau_S U^\dagger\right),\quad [U,H_{\rm tot}] = 0
 \label{eq:thermal operations definition app}
\eal
where $H_{\rm tot}=H_S\otimes\mathds{1}_E + \mathds{1}_S\otimes H_E = H_{S'}\otimes\mathds{1}_{E'} + \mathds{1}_{S'}\otimes H_{E'}$ is the total Hamiltonian.
In this manuscript, we call a closure of the set of Thermal Operations simply Thermal Operations, to which all of our results equally apply. 
On the other hand, one can also consider a broader class based on the axiomatic formulation, which is only restricted by the minimum requirement that they should map Gibbs states to Gibbs states. 
This is known as \emph{Gibbs-preserving Operations}---a channel $\Lambda:S\to S'$ is called a Gibbs-preserving operation if
\bal
 \Lambda(\tau_{S}) = \tau_{S'}.
\eal

We remark that we here allow the final system $S'$ to differ from the initial system $S$. 
This setting is motivated by the observation that discarding an arbitrary subsystem should operationally be allowed, which may result in a different system from the initial one~\cite{Janzing2000thermodynamic,Horodecki2009quantum,Renes2014work,YungerHalpern2016microcanonical}. 
One could also formulate the change in the initial and final systems by inducing different Hamiltonians by introducing ancillary systems working as a switch~\cite{Horodecki2009quantum,Brandao2015second,Sagawa2021asymptotic}, which generally comes with a different work cost to realize state transformation. 
Although these distinctions should carefully be taken into account when one investigates the work cost, here we do not delve into this discussion further, as our main focus here is the coherence cost, which is not affected by these subtleties. 
We also note that the main consequences of our results, i.e., diverging coherence cost and its bounds for approximate implementation, still hold for the restricted settings with identical initial and final systems, particularly because of the broad applicability of Theorem~\ref{thm:sufficient condition reversible}. See the main text for relevant discussions.

It is elementary to see that any thermal operation is Gibbs-preserving, and therefore the set $\bbO_{\rm TO}$ of thermal operations and the set $\bbO_{\rm GP}$ of Gibbs-preserving operations satisfy the inclusion relation $\bbO_{\rm TO}\subseteq \bbO_{\rm GP}$.
Furthermore, this inclusion is shown to be strict, i.e., $\bbO_{\rm TO}\subsetneq \bbO_{\rm GP}$~\cite{Faist2015Gibbs-preserving}. 
A key observation to see this strict inclusion relation is to study energetic coherence, which we review in the following.

%%%%%%%%%%%%%%%%%%%%%%%%%%%%%%%%%%%%%%%%%%%%%%%%%%%%%%%%%%%%%%%%%%%%%%%%%%%%%%%%%%%%%%%%%%%%%%%%%%%%%%%%%%%%%%%%%%%%%%%%%%%%%%%%%%%%%%%

\subsection{Quantum coherence in energy eigenbasis}

Recent studies found that, in the realm of quantum thermodynamics, quantum coherence also plays a major role that allows one to extract work and thus serves as another type of quantum resource besides out-of-equilibrium energy distribution~\cite{Brandao2013resource,Lostaglio2015quantum,Gour2018quantum,Kwon2018clock}. 
To formalize quantum coherence, let $H=\sum_n E_n \dm{n}$ be a Hamiltonian where $\{E_n\}_n$ is the set of (possibly degenerate) energy eigenvalues and $\{\ket{n}\}_n$ is the orthonormal set of energy eigenstates.  
We say that a state $\rho$ has \emph{quantum coherence} or \emph{energetic coherence} if $\rho$ has an off-diagonal term, i.e., superposition, for different energy levels. 
Formally, let $\Pi_E$ be a projector onto the subspace with energy $E$ given by 
\bal
\Pi_{E} = \sum_{n:E_n=E} \dm{n}.
\label{eq:energy projector}
\eal
Then, a state $\rho$ has energetic coherence if $\sum_E \Pi_E\rho \Pi_E\neq \rho$.
%Equivalently, a state $\rho$ has nonzero coherence if $e^{-H t}\rho e^{iHt} \neq \rho$ for some $t\in\bbR$. 
\HT{ Equivalently, a state $\rho$ has nonzero coherence if there exists $t\in \bbR$ such that $e^{-H t}\rho e^{iHt} \neq \rho$.}

The letter expression particularly allows us to formalize energetic coherence in relation to a group action represented by a unitary representation of ${\rm U}(1)$ (or $\bbR$ if the Hamiltonian contains relatively irrational eigenvalues). 
Namely, \emph{incoherent states}---states that do not have coherence---are equivalent to the states invariant under a unitary representation $\{e^{-iHt}\}_t$. 
This observation provides a way of quantifying the amount of energetic coherence employing the resource theory of asymmetry~\cite{Gour2008resource,Marvian2012symmetry,skew_resource,Takagi_skew,Kudo_Tajima,YT,YT2,Shitara_Tajima}, which considers states invariant under action of a unitary representation of a group $G$ as free states, i.e., $\bbF=\{\sigma\,|\,U_g \sigma U_g^\dagger = \sigma,\ \forall g\in G\}$, and the operations covariant with such group actions as free operations, i.e., $\bbO = \{\Lambda:S\to S'\,|\, U_{S',g}\Lambda(\rho)U_{S',g}^\dagger = \Lambda(U_{S,g}\rho U_{S,g}^\dagger),\ \forall g\in G\}$ where $U_{X,g}$ is a unitary representation acting on a system $X$. 
One can then realize that incoherent states coincide with the set of free states in the framework of resource theory of asymmetry with U(1) group with the representation $U_t = e^{-iHt}$, equipped with 
\bal
\bbF_{\rm inc}=\{\sigma\,|\,e^{-iH t} \sigma e^{iHt}= \sigma,\ \forall t\}
\eal
and free operation called \emph{Covariant Operations} 
\bal 
\bbO_{\rm cov} = \{\Lambda:S\to S'\,|\, e^{-iH_{S'}t}\Lambda(\rho)e^{iH_{S'}t}= \Lambda(e^{-iH_S t}\rho e^{iH_S t})\ \forall t\}.
\eal
It is easy to check that covariant operations are indeed free operations for energetic coherence in the sense that it does not create a state with nonzero energetic coherence from an incoherent state. 

Another insightful characterization of covariant operations is that an arbitrary covariant operation $\Lambda:S\to S'$ admits the following dilation form~\cite{Keyl1999optimal,Marvian2012symmetry}
\bal
 \Lambda(\rho) = \Tr_{E'} \left(U \rho \otimes \sigma U^\dagger\right),\quad [U,H_S\otimes \mathds{1}_E + \mathds{1}_S\otimes H_E]=0,\ \sigma\in\bbF_{\rm inc}
 \label{eq:covariant operations definition app}
\eal
for some systems $E$ and $E'$.
The similarity between \eqref{eq:thermal operations definition app} and \eqref{eq:covariant operations definition app} represents an intriguing interplay between resource theories for quantum thermodynamics and energetic coherence.
Indeed, since Gibbs states do not have energetic coherence by definition, we immediately notice that 
\bal
\bbO_{\rm TO}\subseteq \bbO_{\rm cov}.
\label{eq:TO in covariant}
\eal
On the other hand, Ref.~\cite{Faist2015Gibbs-preserving} presented a Gibbs-preserving operation $\Lambda\in\bbO_{\rm GP}$ that can create a coherent state from an incoherent state, showing $\Lambda\not\in \bbO_{\rm cov}$. 
This shows the strict inclusion $\bbO_{\rm TO}\subsetneq \bbO_{\rm GP}$.
In other words, in an operationally driven approach with thermal operations, energetic coherence serves as a precious resource that cannot be created for free, while in an axiomatic approach with Gibbs-preserving operations, energetic coherence loses the status of the precious resource.

%%%%%%%%%%%%%%%%%%%%%%%%%%%%%%%%%%%%%%%%%%%%%%%%%%%%%%%%%%%%%%%%%%%%%%%%%%%%%%%%%%%%%%%%%%%%%%%%%%%%%%%%%%%%%%%%%%%%%%%%%%%%%%%%%%%%%%%%

\subsection{Coherence cost for Gibbs-preserving Operations}

The aforementioned gap between thermal and Gibbs-preserving operations naturally raises a question~\cite{open_problem}: what is the coherence cost for thermal operations to implement Gibbs-preserving operations? 
Indeed, Gibbs-preserving operations have been widely studied because of their simple mathematical structure~\cite{Faist2018fundamental,Faist2019thermodynamic,Buscemi2019information,Wang2019resource,Liu2019one-shot,Regula2020benchmarking,Faist2021thermodynamic,Sagawa2021asymptotic,Shiraishi2021quantum}.
However, if it requires unreasonable additional resource costs to implement, it would lose the physical ground as a reasonable set of free operations from an operational perspective.  

This motivates us to study the channel implementation cost introduced in Sec.~\ref{app:general channel cost} in our setting, which corresponds to $\bbF = \bbF_{\rm inc}$, $\bbO_1=\bbO_{\rm GP}$, and $\bbO_2=\bbO_{\rm TO}$. 
For a coherence quantifier $R_{\bbF_{\rm inc}}$, we employ quantum Fisher information, the standard measure of coherence (and asymmetry in general), defined for an arbitrary state $\rho$ by 
\bal
 \calF(\rho) = 2\sum_{i,j}\frac{(p_i-p_j)^2}{p_i+p_j}|\bra{i}H\ket{j}|^2
\eal
where $H$ is the Hamiltonian of the system that $\rho$ acts on, and $\{p_j\}_j$ and $\{\ket{j}\}_j$ are the eigenvalues and eigenstates of $\rho=\sum_j p_j \dm{j}$.
The coherence cost of a channel $\Lambda$ is then 
\bal
 \calF_c(\Lambda) \coloneqq \min\lset \calF(\eta)\sbar \Lambda = \tilde\Lambda(\cdot\otimes \eta),\ \tilde\Lambda\in\bbO_{\rm TO}\rset.
 \label{eq:coherence cost definition}
\eal

To encompass general and practical scenarios, we extend this quantity to the cost for approximate implementation, admitting some error $\epsilon$.
Here, we measure the error by a channel purified distance~\cite{Gilchrist2005distance}
\bal
 D_F(\Lambda_1,\Lambda_2) \coloneqq \max_\rho D_F(\id\otimes\Lambda_1(\rho),\id\otimes\Lambda_2(\rho))
 \label{eq:purified channel distance definition app}
\eal
where 
\bal
 D_F(\rho,\sigma)= \sqrt{1-F(\rho,\sigma)^2},\quad F(\rho,\sigma)=\Tr\sqrt{\sqrt{\rho}\sigma\sqrt{\rho}}.
\eal
We particularly write $\Lambda_1\sim_\epsilon \Lambda_2$ to denote $D_F(\Lambda_1,\Lambda_2)\leq \epsilon$.
We then define the approximate implementation cost for a channel $\Lambda$ by 
\bal
 \calF_c^\epsilon(\Lambda) \coloneqq \min\lset \calF(\eta)\sbar \Lambda \sim_\epsilon \tilde\Lambda(\cdot\otimes \eta),\ \tilde\Lambda\in\bbO_{\rm TO}\rset.
 \label{eq:approximate coherence cost definition}
\eal

%%%%%%%%%%%%%%%%%%%%%%%%%%%%%%%%%%%%%%%%%%%%%%%%%%%%%%%%%%%%%%%%%%%%%%%%%%%%%%%%%%%%%%%%%%%%%%%%%%%%%%%%%%%%%%%%%%%%%%%%%%%%%%%%%%%%%%%%

\subsection{Trade-off relation between symmetry, irreversibility, and quantum coherence}\label{subsec:trade-off app}

The main goal of this work is to evaluate the coherence cost in \eqref{eq:approximate coherence cost definition} for Gibbs-preserving Operations. 
In particular, we are interested in the fundamental limitations on the implementation of Gibbs-preserving Operations, which could be analyzed by obtaining lower bounds for $\calF_c^{\epsilon}$.
Lower bounds can never obtained by studying specific implementation protocols, and we thus need an approach that can put general restrictions on all feasible implementation strategies.  

The key technique we employ to this end is the universal trade-off relation between symmetry, irreversibility, and quantum coherence recently found in Ref.~\cite{Tajima2022universal}. 
Let us define a measure of the irreversibility of a channel $\Lambda:S\rightarrow S'$ for an orthogonal state pair $\bbP:=\{\rho_1,\rho_2\}$ on $S$ satisfying $F(\rho_1,\rho_2)=0$ as follows:
\eq{
\delta(\Lambda,\bbP):=\min_{\calR:S'\to S}\sqrt{\sum^{2}_{j=1}\frac{1}{2}D_F(\rho_j,\calR\circ\Lambda(\rho_j))^2}.\label{eq:approximation error}
}
Here $\calR$ runs over all CPTP maps from $S'$ to $S$.
This irreversibility measure gives lower bounds for other relevant quantities, e.g., the entropy production and recovery errors of error-correcting codes~\cite{Tajima2022universal}, and various errors and disturbances of quantum measurements and the out-of-time-ordered correlators (OTOC)~\cite{ET2023}.

For an arbitrary orthogonal state pair $\mathbb{P}=\{\rho_1,\rho_2\}$, we define
\eq{
\calC(\Lambda,\bbP):=\|\sqrt{\rho_1}(H_S-\Lambda^\dagger(H_{S'}))\sqrt{\rho_2}\|_2,
\label{C_supp}
}
where $\|O\|_2=\sqrt{\Tr(O^\dagger O)}$ is the Hilbert-Schmidt norm. 
\HT{We here remark that $\calC(\Lambda,\bbP)$ is well-defined even when $\bbP$ is not a perfectly reversible pair for $\Lambda$. The quantity $\calC(\Lambda,\bbP)$ is well-defined for an arbitrary CPTP map $\Lambda$ and  an arbitrary orthogonal state pair $\mathbb{P}=\{\rho_1,\rho_2\}$.}

We also define coherence cost for channel implementation by covariant operations
\bal
 \calF_{c,{\rm cov}}(\Lambda) \coloneqq \min\lset \calF(\eta)\sbar \Lambda = \tilde\Lambda(\cdot\otimes \eta),\ \tilde\Lambda\in\bbO_{\rm cov}\rset.
 \label{eq:coherence cost covariant definition}
\eal
Then, it turns out that there is a fundamental trade-off relation between these quantities and the coherence cost for exact channel implementation. 

\begin{theorem}[in Ref. \cite{Tajima2022universal}]\label{thm:SIQ}
For an arbitrary quantum channel $\Lambda:S\to S'$ and an arbitrary orthogonal state pair $\mathbb{P}$, the following inequality holds:
\eq{
\frac{\calC(\Lambda,\mathbb{P})}{\sqrt{\calF_{c,{\rm cov}}(\Lambda)}+\Delta(H_S)+\Delta(H_{S'})}\le\delta(\Lambda,\mathbb{P}),\label{eq:SIQ}
} 
where $\Delta(O)$ is the difference between the minimum and maximum eigenvalues of an operator $O$.
\end{theorem}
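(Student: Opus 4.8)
\medskip
\noindent\emph{Proof strategy for Theorem~\ref{thm:SIQ}.}
The plan is to reduce the claim to a metrological inequality for an energy-conserving dilation and then transport it through the recovery map. First I would invoke the dilation form~\eqref{eq:covariant operations definition app}: writing $\Lambda(\cdot)=\tilde\Lambda(\cdot\otimes\eta)$ for a covariant $\tilde\Lambda$ attaining $\calF_{c,\mathrm{cov}}(\Lambda)$, one has $\Lambda(\rho)=\Tr_{E'}\!\big(U(\rho\otimes\eta\otimes\sigma)U^\dagger\big)$ with $[U,H_{\mathrm{tot}}]=0$ and $\sigma$ incoherent; since the quantum Fisher information is additive over tensor factors and vanishes on incoherent states, I may absorb $\sigma$ into the environment and henceforth use a single energy-conserving $U$ on $S\otimes E$ with a coherent ancilla $\eta$ satisfying $\calF(\eta)=\calF_{c,\mathrm{cov}}(\Lambda)$. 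The first substantive step is the operator identity $H_S-\Lambda^\dagger(H_{S'})=\Tr_E\!\big[(\mathds{1}_S\otimes\eta)\,\widehat{\Delta H}_E\big]$ with $\widehat{\Delta H}_E:=U^\dagger(\mathds{1}_{S'}\otimes H_{E'})U-\mathds{1}_S\otimes H_E$, which follows by expressing $H_S\otimes\mathds{1}_E$ and $U^\dagger(H_{S'}\otimes\mathds{1}_{E'})U$ through $H_{\mathrm{tot}}$ and using $[U,H_{\mathrm{tot}}]=0$. Thus $\calC(\Lambda,\bbP)$ is precisely the off-diagonal matrix element, between $\rho_1$ and $\rho_2$, of a coarse-grained bath energy change, and the task becomes to argue that a \emph{reversible} channel cannot create such a coherent energy shift unless $\eta$ carries a large energy fluctuation --- conceptually, $\eta$ acts as a quantum reference frame whose quality is limited by its quantum Fisher information.

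\medskip
\noindent The core of the argument couples this identity to the recovery. Let $\calR$ (approximately) attain the minimum in~\eqref{eq:approximation error}, let $\ket{\phi_j}$ purify $\rho_j$ on $S\otimes R$, and let $W:S\to S\otimes G$ be a Stinespring dilation of $\calR\circ\Lambda$ obtained by composing $U$ with a dilation of $\calR$. The smallness of $D_F(\rho_j,\calR\circ\Lambda(\rho_j))$ converts, by Uhlmann's theorem, into fixed garbage vectors $\ket{g_j}_G$ with $\big\|\,W\ket{\phi_j}-\ket{\phi_j}\otimes\ket{g_j}\,\big\|=O\!\big(\delta(\Lambda,\bbP)\big)$. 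I would then evaluate $\calC(\Lambda,\bbP)$ along this near-invariance: sandwiching $H_S-\Lambda^\dagger(H_{S'})$ between $\rho_1$ and $\rho_2$ and inserting the approximate invariance of $W$ turns it, up to $O(\delta)$ corrections that are controlled by the spectral spreads through bounds of the form $\big|\Tr[(A-B)X]\big|\le\Delta(A-B)\,\|X\|_1$, into a commutator-type expression; a Robertson/Cauchy--Schwarz commutator bound then factorizes this into a product of (i) a distance of order $\delta(\Lambda,\bbP)$ and (ii) an energy fluctuation. After the intrinsic contributions of $S$ and $S'$ are peeled off (contributing at most $\Delta(H_S)$ and $\Delta(H_{S'})$ to the fluctuation), the remaining fluctuation is that of the coherent reference $\eta$, which is $\tfrac14\calF(\eta)$ for pure $\eta$ and at most $\tfrac14\calF(\eta)$ in general via the convex-roof-of-variance characterization of the quantum Fisher information.

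\medskip
\noindent Putting the pieces together yields $\calC(\Lambda,\bbP)\le\delta(\Lambda,\bbP)\big(\sqrt{\calF_{c,\mathrm{cov}}(\Lambda)}+\Delta(H_S)+\Delta(H_{S'})\big)$, i.e.~\eqref{eq:SIQ}. I expect the main obstacle to be twofold. First, the mixed-state reduction: passing from the pure-state variance to the quantum Fisher information requires the convex-roof characterization of $\calF$ \emph{together with} a careful argument that the purifying system $R$, the garbage system $G$, and the arbitrary (possibly non-covariant) dilation of $\calR$ introduce no uncontrolled Hamiltonian, so that only the resource genuinely spent on $\Lambda$ --- and the intrinsic spreads $\Delta(H_S),\Delta(H_{S'})$ --- appear on the right-hand side; this one-sidedness is what makes the bound useful. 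Second, the error bookkeeping: one must propagate the $O(\delta)$ errors through Uhlmann's theorem and the commutator inequality so that they enter \emph{multiplicatively} with the fluctuation (and are absorbed into $\Delta(H_S)+\Delta(H_{S'})$ where they cannot be), and reconcile the $\tfrac12$-weighted sum of squares in the definition of $\delta$ with the interconversion between channel- and state-level purified distances, so that the constants come out exactly as stated.
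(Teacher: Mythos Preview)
The paper does not contain its own proof of Theorem~\ref{thm:SIQ}. The theorem is quoted verbatim from Ref.~\cite{Tajima2022universal} (as indicated in its heading ``in Ref.~\cite{Tajima2022universal}'') and is used as a black-box input to the proof of Theorem~\ref{thm:lower bound app}. There is therefore nothing in this paper to compare your proposal against.

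That said, your outline is a plausible reconstruction of the argument in Ref.~\cite{Tajima2022universal}: the energy-conserving dilation, the operator identity $H_S-\Lambda^\dagger(H_{S'})=\Tr_E[(\mathds{1}_S\otimes\eta)\,\widehat{\Delta H}_E]$, the Uhlmann step converting $\delta(\Lambda,\bbP)$ into near-invariance of a Stinespring isometry, and the passage from variance to quantum Fisher information via the convex-roof characterization are the expected ingredients. Your identification of the two delicate points --- ensuring that only $\Delta(H_S)$, $\Delta(H_{S'})$, and $\calF(\eta)$ survive on the right-hand side despite the auxiliary systems introduced by purification and by dilating $\calR$, and tracking the constants so that the errors enter multiplicatively --- is accurate, and these are precisely where the work lies in the original proof. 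If you want to verify the details you will have to consult Ref.~\cite{Tajima2022universal} directly; the present paper offers no independent derivation.
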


We remark that the above relation can be extended from $\mathbb{P}$ to a general state ensemble $\Omega:=\{p_j,\rho_j\}_j$---where $\{p_j\}_j$ is probability distribution and $\{\rho_j\}_j$ are states---that may not be orthogonal to each other~\cite{Tajima2022universal}. 
The relation \eqref{eq:SIQ} is given as a unification between the Wigner-Araki-Yanase (WAY) theorems on quantum measurements \cite{Wigner1952,Araki-Yanase1960,OzawaWAY,TN,Kuramochi-Tajima} and unitary gates \cite{ozawaWAY_CNOT,TSS,Tajima2020coherence,TS} and the Eastin-Knill theorems on quantum error correcting codes~\cite{Eastin-Knill,e-EKFaist,e-EKKubica, e-EKZhou,e-EKYang,TS}.
It also allows restrictions on the classical information recovery in the Hayden-Preskill thought experiments~\cite{HP} imposed by the energy conservation \cite{Tajima2022universal} and extends the WAY theorem to various errors and disturbances of quantum measurements and the out-of-time-ordered correlators~\cite{ET2023}.
Here, we utilize this relation to obtain a lower bound for coherence cost for approximately implementing Gibbs-preserving Operations by Thermal Operations. 

\HT{
Next, we discuss the physical meaning of $\calC$. 
To provide an intuitive understanding, the following inequality is useful \cite{Tajima2022universal}:
\begin{align}
\left(\min_k\lambda^{\min}_{>0}(\rho_k)\right)\frac{\calC_F(\Lambda,\mathbb{P})}{2}\le \calC(\Lambda,\mathbb{P})^2 \le\frac{\calC_F(\Lambda,\mathbb{P})}{2}.\label{relation_CF_and_C}
\end{align}
Here, $\{\rho_k\}_{k=1,2}$ are the states in $\bbP$, and $\lambda^{\min}_{>0}(\rho_k)$ is the  minimum non-zero eigenvalue of $\rho_k$. The quantity $\calC_F(\Lambda,\mathbb{P})$ represents the degree of convexity of the Fisher information with respect to the work operator $Y\coloneqq H_S-\Lambda^\dagger(H_{S'})$ as
\begin{align}
\calC_F(\Lambda,\mathbb{P})&:=\sum_{k=1,2}p_{k}\calF(\rho_k;Y)-\calF\left(\sum_{k=1,2}p_{k}\rho_k;Y\right),\\
p_k&:=\frac{1}{2}\enskip (k=1,2).
\end{align}
Here we made the dependence on the observable $Y$ explicit.
Therefore, $\calC$ essentially characterizes the extent to which the coherence (Fisher information) of $Y$ diminishes when the states $\rho_1$ and $\rho_2$ are probabilistically mixed.
Importantly, when $\mathbb{P}$ consists of a pair of orthogonal pure states, as in the scenarios discussed in this letter, the inequality \eqref{relation_CF_and_C} reduces to $\calC^2=\calC_F/2$---meaning that the quantity $\calC$ is characterized by the convexity of Fisher information. 
We also remark that \eqref{relation_CF_and_C} can be extended to the case involving a general test ensemble $\Omega=\{p_j,\rho_j\}_j$ as long as $\{\rho_j\}_j$ are orthogonal to each other.

The physical meaning of $\calC_F$ (and thus $\calC$) becomes clear when we note that $\calC_F$ is analogous to the Holevo quantity $\chi:=S(\sum_jp_j\rho_j)-\sum_jp_jS(\rho_j)$, with entropy being replaced by Fisher information with respect to $Y$. Holevo’s $\chi$ quantifies the information loss in terms of entropy when states are probabilistically mixed. Consequently, $\calC_F$ can be understood as an indicator of the resource loss in terms of Fisher information with respect to $Y$ when the states in $\mathbb{P}$ are probabilistically mixed.
}

%%%%%%%%%%%%%%%%%%%%%%%%%%%%%%%%%%%%%%%%%%%%%%%%%%%%%%%%%%%%%%%%%%%%%%%%%%%%%%%%%%%%%%%%%%%%%%%%%%%%%%%%%%%%%%%%%%%%%%%%%%%%%%%%%%%%%%%%%%%%%%%%%%%%%%%%%%%%%%%%%%%%%%%%%%

\section{Lower bound for pairwise reversible Gibbs-preserving Operations (Proof of Theorem~\ref{thm:lower bound})}\label{app:lower bound}

\subsection{Proof of Theorem~\ref{thm:lower bound}}

Recall that we call a channel $\Lambda$ pairwise reversible with a reversible pair $\bbP$ if each state in $\bbP$ is perfectly reversible, i.e., $\delta(\Lambda,\bbP)=0$. (See Definition~\ref{def:pairwise reversible} in the main text.)
We then obtain the following lower bound for coherence cost.

\begin{theorem}[Theorem~\ref{thm:lower bound} in the main text]\label{thm:lower bound app}
Let $\Lambda:S\to S'$ be a pairwise reversible Gibbs-preserving Operation with a reversible pair $\bbP$. Then, 
\bal
\sqrt{\calF^\epsilon_c(\Lambda)}\geq 
\frac{\calC(\Lambda,\bbP)}{\epsilon} -\Delta(H_S) - 3\Delta(H_{S'}),
\label{eq:lower bound coherence cost general app}
\eal
where $\Delta(O)$ is the difference between the minimum and maximum eigenvalues of an operator $O$.
\end{theorem}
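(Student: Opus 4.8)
The plan is to apply the universal trade-off of Theorem~\ref{thm:SIQ} not to $\Lambda$ itself (whose covariant implementation cost may be infinite) but to a nearby channel that is genuinely a Thermal, hence Covariant, Operation acting on the chosen ancilla. Fix a near-optimal ancilla $\eta$ and Thermal Operation $\tilde\Lambda\in\bbO_{\rm TO}$ achieving $\calF(\eta)=\calF_c^\epsilon(\Lambda)$ with $\Gamma:=\tilde\Lambda(\cdot\otimes\eta)\sim_\epsilon\Lambda$. Since $\bbO_{\rm TO}\subseteq\bbO_{\rm cov}$, the very decomposition $\Gamma=\tilde\Lambda(\cdot\otimes\eta)$ gives $\calF_{c,{\rm cov}}(\Gamma)\le\calF(\eta)=\calF_c^\epsilon(\Lambda)$. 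Feeding this into Theorem~\ref{thm:SIQ} for the channel $\Gamma$ and the pair $\bbP$ yields
\bal
\calC(\Gamma,\bbP)\le\delta(\Gamma,\bbP)\bigl(\sqrt{\calF_c^\epsilon(\Lambda)}+\Delta(H_S)+\Delta(H_{S'})\bigr),
\eal
so the task reduces to lower bounding $\calC(\Gamma,\bbP)$ and upper bounding $\delta(\Gamma,\bbP)$ in terms of $\epsilon$ and the spectral widths.

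For the irreversibility term I would exploit that $\Lambda$ is pairwise reversible: let $\calR$ be a recovery with $\calR\circ\Lambda(\rho_j)=\rho_j$ for $\bbP=\{\rho_1,\rho_2\}$. Using $\calR$ as a (generally suboptimal) recovery for $\Gamma$ together with monotonicity of the purified distance under CPTP maps,
\bal
D_F(\rho_j,\calR\circ\Gamma(\rho_j))=D_F(\calR\circ\Lambda(\rho_j),\calR\circ\Gamma(\rho_j))\le D_F(\Lambda(\rho_j),\Gamma(\rho_j))\le\epsilon,
\eal
whence, from the definition \eqref{eq:approximation error}, $\delta(\Gamma,\bbP)\le\sqrt{\tfrac12\epsilon^2+\tfrac12\epsilon^2}=\epsilon$.

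For the lower bound on $\calC(\Gamma,\bbP)$, I would first note that $\calC(\cdot,\bbP)$ is invariant under shifting either Hamiltonian by a multiple of the identity, since $\sqrt{\rho_1}\sqrt{\rho_2}=0$ for orthogonal $\rho_1,\rho_2$; hence we may assume $H_S,H_{S'}\ge0$ with smallest eigenvalue $0$, so $\|H_{S'}\|_\infty=\Delta(H_{S'})$. By the reverse triangle inequality for the Hilbert–Schmidt norm and the estimate $\|\sqrt{\rho_1}X\sqrt{\rho_2}\|_2\le\|X\|_\infty$,
\bal
\calC(\Lambda,\bbP)-\calC(\Gamma,\bbP)\le\bigl\|\Lambda^\dagger(H_{S'})-\Gamma^\dagger(H_{S'})\bigr\|_\infty=\max_\rho\bigl|\Tr\bigl(H_{S'}(\Lambda(\rho)-\Gamma(\rho))\bigr)\bigr|.
\eal
Since $\tfrac12\|\Lambda(\rho)-\Gamma(\rho)\|_1\le D_F(\Lambda(\rho),\Gamma(\rho))\le\epsilon$, the right-hand side is at most $2\epsilon\,\|H_{S'}\|_\infty=2\epsilon\,\Delta(H_{S'})$, i.e.\ $\calC(\Gamma,\bbP)\ge\calC(\Lambda,\bbP)-2\epsilon\,\Delta(H_{S'})$. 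Substituting the two estimates into the displayed consequence of Theorem~\ref{thm:SIQ}, using $\sqrt{\calF_{c,{\rm cov}}(\Gamma)}\le\sqrt{\calF_c^\epsilon(\Lambda)}$, and dividing by $\epsilon>0$ gives $\sqrt{\calF_c^\epsilon(\Lambda)}\ge\calC(\Lambda,\bbP)/\epsilon-\Delta(H_S)-3\Delta(H_{S'})$.

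The main obstacle is the continuity estimate for $\calC(\cdot,\bbP)$: one must bound the perturbation of $\Lambda^\dagger(H_{S'})$ only through the trace-norm closeness of the output states—not through the operator norm of the dual map, which need not be small—and track the Hamiltonian shift so that the spectral-width constants emerge exactly as stated. Subtleties to double-check are that the channel purified distance controls $D_F(\Lambda(\rho_j),\Gamma(\rho_j))$ by restricting to a trivial reference system, and that the Hilbert–Schmidt bound $\|\sqrt{\rho_1}X\sqrt{\rho_2}\|_2\le\|X\|_\infty$ indeed holds (from $\Tr(\rho_2X^\dagger\rho_1X)\le\|\rho_2\|_\infty\|X\|_\infty^2\Tr\rho_1\le\|X\|_\infty^2$). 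Everything else is a direct assembly of Theorem~\ref{thm:SIQ}, data processing for $D_F$, and elementary norm inequalities.
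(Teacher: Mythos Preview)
Your proof is correct and follows the same overall architecture as the paper: apply Theorem~\ref{thm:SIQ} to an $\epsilon$-close implementation $\Gamma$ of $\Lambda$, bound $\delta(\Gamma,\bbP)\le\epsilon$ via the recovery map for $\Lambda$ and data processing of $D_F$, and control $\calC(\Gamma,\bbP)$ by a continuity estimate yielding the $2\epsilon\,\Delta(H_{S'})$ correction. The one place where you diverge is the continuity step. The paper expands $\rho_1,\rho_2$ spectrally, reduces to bounding off-diagonal elements $|\bra{\psi}(\Lambda^\dagger-\Lambda_\epsilon^\dagger)(H_{S'})\ket{\phi}|$ for orthogonal pure $\psi,\phi$, and handles these via the auxiliary states $(\ket\psi\pm\ket\phi)/\sqrt2$ and $(\ket\psi\pm i\ket\phi)/\sqrt2$ together with a Hamiltonian shift. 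Your route is cleaner: orthogonality gives $\sqrt{\rho_1}\sqrt{\rho_2}=0$, whence $\calC$ is shift-invariant; then $\|\sqrt{\rho_1}X\sqrt{\rho_2}\|_2\le\|X\|_\infty$ and, for the Hermitian $X=(\Lambda^\dagger-\Gamma^\dagger)(H_{S'})$, $\|X\|_\infty=\max_\rho|\Tr(H_{S'}(\Lambda-\Gamma)(\rho))|\le 2\epsilon\,\|H_{S'}\|_\infty$. Both arguments land on the same constant; yours avoids the spectral decomposition and the four-state trick entirely. As a side remark, since $\Lambda^\dagger-\Gamma^\dagger$ is the difference of unital maps and thus annihilates scalars, you could shift $H_{S'}$ to its spectral midpoint and sharpen the continuity bound to $\epsilon\,\Delta(H_{S'})$, though this is not needed for the theorem as stated.
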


\begin{proof}
Let $\Lambda_\epsilon$ be a channel that approximates $\Lambda$ with error $\epsilon$, i.e., $D_F(\Lambda_\epsilon,\Lambda)\leq \epsilon$ (recall \eqref{eq:purified channel distance definition app}).
We aim to apply Theorem~\ref{thm:SIQ} to $\Lambda_\epsilon$ while expressing each term by the quantities in \eqref{eq:lower bound coherence cost general app} relevant to the desired channel $\Lambda$ and the accuracy of implementation. 

We first bound $\delta(\Lambda_\epsilon,\bbP)$ by the implementation error $\epsilon$.
Let $\calR$ be a recovery channel such that $\calR\circ\Lambda(\rho)=\rho$ for each state $\rho\in\bbP$, whose existence is ensured by assumption. Then, for each state $\rho$ in $\bbP$, we have 
\bal
 D_F(\calR\circ\Lambda_\epsilon(\rho),\rho)&\leq D_F(\calR\circ\Lambda_\epsilon(\rho),\calR\circ\Lambda(\rho))+D_F(\calR\circ\Lambda(\rho),\rho)\\
 &= D_F(\calR\circ\Lambda_\epsilon(\rho),\calR\circ\Lambda(\rho))\\
 &\leq D_F(\Lambda_\epsilon(\rho),\Lambda(\rho))\\
 &\leq \max_\rho D_F(\id\otimes \Lambda_\epsilon(\rho),\id\otimes\Lambda(\rho))\\
 &\leq \epsilon
\eal
where in the first line we used the triangle inequality of the purified distance~\cite{Gilchrist2005distance}, the second line is because of the perfect reversibility of $\rho$ with $\calR$, the third line follows from the data-processing inequality of the purified distance, and the fifth line is because of the assumption that $\Lambda_\epsilon\sim_\epsilon \Lambda$.
This particularly means that 
\bal
\delta(\Lambda_\epsilon,\bbP)\leq \sqrt{\frac{1}{2}D_F(\calR\circ\Lambda_\epsilon(\rho_1),\rho_1)^2+\frac{1}{2}D_F(\calR\circ\Lambda_\epsilon(\rho_2),\rho_2)^2}\leq \epsilon
\label{eq:irreversibility bound app}.
\eal

We next obtain an expression of $\calC(\Lambda_\epsilon,\bbP)$ in terms of $\calC(\Lambda,\bbP)$.
We first get 
\bal
\calC(\Lambda_\epsilon,\bbP) &= \|\sqrt{\rho_1}(H_S-\Lambda_\epsilon^\dagger(H_{S'}))\sqrt{\rho_2}\|_2\\
&\geq \|\sqrt{\rho_1}(H_S-\Lambda^\dagger(H_{S'}))\sqrt{\rho_2}\|_2 - \|\sqrt{\rho_1}(\Lambda^\dagger(H_{S'})-\Lambda_\epsilon^\dagger(H_{S'}))\sqrt{\rho_2}\|_2\\
&= \calC(\Lambda,\bbP) - \|\sqrt{\rho_1}(\Lambda^\dagger(H_{S'})-\Lambda_\epsilon^\dagger(H_{S'}))\sqrt{\rho_2}\|_2  
\label{eq:bound on approximate C app}
\eal
where in the second line we used the triangle inequality of the Hilbert-Schmidt norm. 
We therefore focus on upper bounding the second term $\|\sqrt{\rho_1}(\Lambda^\dagger(H_{S'})-\Lambda_\epsilon^\dagger(H_{S'}))\sqrt{\rho_2}\|_2$.
For states $\rho_1,\rho_2\in\bbP$, let $\rho_1=\sum_kq_k\psi_{k}$ and $\rho_2=\sum_kq'_k\phi_{k}$ be their spectral decompositions, i.e., $\braket{\psi_{k_1}}{\psi_{k_2}}=\delta_{k_1 k_2}$ and $\braket{\phi_{k'_1}}{\phi_{k'_2}}=\delta_{k'_1 k'_2}$ . Then, direct computation gives
\bal
 \|\sqrt{\rho_1}(\Lambda^\dagger(H_{S'})-\Lambda_\epsilon^\dagger(H_{S'}))\sqrt{\rho_2}\|_2  &= \left\|\sum_{k,k'} \sqrt{q_k}\sqrt{q_k'} \psi_k (\Lambda^\dagger(H_{S'})-\Lambda_\epsilon^\dagger(H_{S'}))\phi_{k'}\right\|_2\\
 & = \sqrt{\sum_{k,k'}q_k q_{k'}\left|\bra{\psi_k} (\Lambda^\dagger(H_S)-\Lambda_\epsilon^\dagger(H_{S'}))\ket{\phi_{k'}}\right|^2}.
 \label{eq:bound on Hilbert Schmidt app}
\eal
We further remark that $\braket{\psi_k}{\phi_{k'}}=0$ for all $k$ and $k'$ because $\Tr(\rho_1\rho_2) = 0$ by the definition of reversible pairs. 
For arbitrary orthogonal pure states $\psi$ and $\phi$, the following relation holds:

\bal
|\bra{\psi}\Lambda_\epsilon^\dagger (H_{S'})-\Lambda^\dagger(H_{S'})\ket{\phi}| &\leq \frac{1}{2}|\bra{\psi}\Lambda_\epsilon^\dagger (H_{S'})-\Lambda^\dagger(H_{S'})\ket{\phi}+\bra{\phi}\Lambda_\epsilon^\dagger (H_{S'})-\Lambda^\dagger(H_{S'})\ket{\psi}|\\
&\quad + \frac{1}{2}|\bra{\psi}\Lambda_\epsilon^\dagger (H_{S'})-\Lambda^\dagger(H_{S'})\ket{\phi}-\bra{\phi}\Lambda_\epsilon^\dagger (H_{S'})-\Lambda^\dagger(H_{S'})\ket{\psi}|\\
&=\frac{1}{2}\left|\Tr(H_{S'}(\Lambda_\epsilon-\Lambda)(\ketbra{\psi}{\phi}+\ketbra{\phi}{\psi}))\right|+\frac{1}{2}\left|\Tr(H_{S'}(\Lambda_\epsilon-\Lambda)(\ketbra{\psi}{\phi}-\ketbra{\phi}{\psi}))\right|\\
&=\frac{1}{2}\left|\Tr((H_{S'}-a\mathds{1}_{S'})(\Lambda_\epsilon-\Lambda)(\ketbra{\psi}{\phi}+\ketbra{\phi}{\psi}))\right|\\
&\quad+\frac{1}{2}\left|\Tr((H_{S'}-a\mathds{1}_{S'})(\Lambda_\epsilon-\Lambda)(\ketbra{\psi}{\phi}-\ketbra{\phi}{\psi}))\right|\\
&\leq \frac{1}{2}\|H_{S'}-a\mathds{1}_{S'}\|_\infty \|(\Lambda_\epsilon-\Lambda)(\ketbra{\psi}{\phi}+\ketbra{\phi}{\psi})\|_1\\
&\quad+\frac{1}{2}\|H_{S'}-a\mathds{1}_{S'}\|_\infty \|(\Lambda_\epsilon-\Lambda)(\ketbra{\psi}{\phi}-\ketbra{\phi}{\psi})\|_1\\
&= \frac{1}{2}\|H_{S'}-a\mathds{1}_{S'}\|_\infty \|(\Lambda_\epsilon-\Lambda)(\dm{\eta_+}-\dm{\eta_-})\|_1\\
&\quad+\frac{1}{2}\|H_{S'}-a\mathds{1}_{S'}\|_\infty \|(\Lambda_\epsilon-\Lambda)(\dm{\eta_+'}-\dm{\eta_-'})\|_1\\
&\leq 4\|H_{S'}-a\mathds{1}_{S'}\|_\infty \epsilon\\
&\leq 2\Delta(H_{S'})\epsilon
\eal
where $a$ is an arbitrary real number, and $\ket{\eta_\pm}\coloneqq\frac{1}{\sqrt{2}}(\ket{\psi}\pm \ket{\phi})$ and $\ket{\eta'_\pm}\coloneqq\frac{1}{\sqrt{2}}(\ket{\psi}\pm i\ket{\phi})$.
In the second last line, we used the triangle inequality for the trace norm and $\frac{1}{2}\|\rho-\sigma\|_1\leq D_F(\rho,\sigma)$. In the last line, we fixed $a$ to satisfy $\|H_{S'}\|_\infty=\Delta(H_{S'})/2$.

Together with \eqref{eq:bound on Hilbert Schmidt app}, this particularly implies 
\bal
  \|\sqrt{\rho_1}(\Lambda^\dagger(H_{S'})-\Lambda_\epsilon^\dagger(H_{S'}))\sqrt{\rho_2}\|_2 & \leq  \sqrt{\sum_{k,k'}q_k q_{k'}}2\Delta(H_{S'})\epsilon \\
  & = 2\Delta(H_{S'})\epsilon. 
\eal
Combining this with \eqref{eq:bound on approximate C app}, we get 
\bal
 \calC(\Lambda_\epsilon,\bbP)\geq \calC(\Lambda,\bbP) - 2\Delta(H_{S'})\epsilon.
 \label{eq:lower bound on approximate C final app}
\eal

We finally note that since $\bbO_{\rm TO}\subseteq \bbO_{\rm cov}$ as in \eqref{eq:TO in covariant}, $\calF_c^\epsilon(\Lambda)\geq \calF_{c,\rm{cov}}^\epsilon(\Lambda)$ always holds. 
We conclude the proof by combining all these observations. 
Let us particularly take $\Lambda_\epsilon$ to be the optimal channel achieving the coherence cost with error $\epsilon$, i.e., $\Lambda_\epsilon\sim_\epsilon \Lambda$ and $\calF_{c,{\rm cov}}(\Lambda_\epsilon)=\calF_{c,{\rm cov}}^\epsilon(\Lambda)$. Then, 
\bal
 \sqrt{\calF_c^\epsilon(\Lambda)}&\geq 
 \sqrt{\calF_{c,{\rm cov}}^\epsilon(\Lambda)}\\
 &= \sqrt{\calF_{c,{\rm cov}}(\Lambda_\epsilon)}\\
 &\geq \frac{\calC(\Lambda_\epsilon,\bbP)}{\delta(\Lambda_\epsilon,\bbP)}-\Delta(H_S) - \Delta(H_{S'})\\
 & \geq \frac{\calC(\Lambda,\bbP)}{\epsilon} - \Delta(H_S) - 3\Delta(H_{S'})
\eal
where we used Theorem~\ref{thm:SIQ} in the third line and \eqref{eq:irreversibility bound app} and \eqref{eq:lower bound on approximate C final app} in the fourth line.

\end{proof}

\HT{

\subsection{Extension to the cases without perfect pairwise reversibility}\label{app:without reversibility}

We here remark that we can extend Theorem \ref{thm:lower bound app} to the case where $\Lambda$ is not pairwise reversible for $\bbP$.
\begin{theorem}\label{thm:lower bound app_ex}
Let $\Lambda:S\to S'$ be a Gibbs-preserving Operation. Then, for an arbitrary orthogonal pair states $\bbP$, the following relation holds:
\bal
\sqrt{\calF^\epsilon_c(\Lambda)}\geq 
\frac{\calC(\Lambda,\bbP)}{\epsilon+\delta(\Lambda,\bbP)} -\Delta(H_S) - 3\Delta(H_{S'}),
\label{eq:lower bound coherence cost general app_ex}
\eal
where $\Delta(O)$ is the difference between the minimum and maximum eigenvalues of an operator $O$.
\end{theorem}

The inequality \eqref{eq:lower bound coherence cost general app_ex} is a generalization of \eqref{eq:lower bound coherence cost general app}. Indeed, when $\Lambda$ is reversible for $\bbP$, \eqref{eq:lower bound coherence cost general app_ex} reduces to \eqref{eq:lower bound coherence cost general app}.

\begin{proof}
Similarly to the proof of Theorem \ref{thm:lower bound app}, let $\Lambda_\epsilon$ be a channel that approximates $\Lambda$ with error $\epsilon$, i.e., $D_F(\Lambda_\epsilon,\Lambda)\leq \epsilon$.
We only have to show 
\begin{align}
\delta(\Lambda_\epsilon,\bbP)\le \delta(\Lambda,\bbP)+\epsilon.\label{eq:error and irrev}
\end{align}
The other parts are the same as the proof of Theorem \ref{thm:lower bound app}.

Let $\calR$ be a recovery channel satisfying the following relation, whose existence is ensured by the definition of $\delta(\Lambda,\bbP)$.
\begin{align}
\delta(\Lambda,\bbP)=\frac{D_F(\calR\circ\Lambda(\psi),\psi)^2+D_F(\calR\circ\Lambda(\phi),\phi)^2}{2}.
\end{align}
Then, we obtain \eqref{eq:error and irrev} as follows:
\begin{align}
\delta(\Lambda_\epsilon,\bbP)^2&\le\frac{D_F(\calR\circ\Lambda(\psi),\psi)^2+D_F(\calR\circ\Lambda(\phi),\phi)^2}{2}\nonumber\\
&\le\frac{(D_F(\calR\circ\Lambda_\epsilon(\psi),\psi)+D_F(\calR\circ\Lambda_\epsilon(\psi),\calR\circ\Lambda(\psi))^2+(D_F(\calR\circ\Lambda_\epsilon(\phi),\phi)+D_F(\calR\circ\Lambda_\epsilon(\phi),\calR\circ\Lambda(\phi))^2}{2}\nonumber\\
&\le\frac{(D_F(\calR\circ\Lambda_\epsilon(\psi),\psi)+\epsilon)^2+(D_F(\calR\circ\Lambda_\epsilon(\phi),\phi)+\epsilon)^2}{2}\nonumber\\
&=\frac{D_F(\calR\circ\Lambda_\epsilon(\psi),\psi)^2+2D_F(\calR\circ\Lambda_\epsilon(\psi),\psi)\epsilon+\epsilon^2+D_F(\calR\circ\Lambda_\epsilon(\phi),\phi)^2+2D_F(\calR\circ\Lambda_\epsilon(\phi),\phi)\epsilon+\epsilon^2}{2}\nonumber\\
&=\delta(\Lambda,\bbP)^2+2\frac{D_F(\calR\circ\Lambda_\epsilon(\psi),\psi)+D_F(\calR\circ\Lambda_\epsilon(\phi),\phi)}{2}\epsilon+\epsilon^2\nonumber\\
&\le\delta(\Lambda,\bbP)^2+2\sqrt{\frac{D_F(\calR\circ\Lambda_\epsilon(\psi),\psi)^2+D_F(\calR\circ\Lambda_\epsilon(\phi),\phi)^2}{2}}\epsilon+\epsilon^2\nonumber\\
&=(\delta(\Lambda,\bbP)+\epsilon)^2.
\end{align}
\end{proof}
}

\HT{
\subsection{Towards the extension to systems with unbounded Hamiltonians}\label{app:pre-unbounded}

So far, the results in this letter assume that $S$ and $S'$ are finite dimensional systems, where $H_S$ and $H_{S'}$ are guaranteed to be bounded operators.
This assumption is crucial for Theorem~\ref{thm:lower bound} to be nontrivial, where the lower bound contains the quantities $\Delta(H_S)$ and $\Delta(H_{S'})$ that diverge for unbounded Hamiltonians. 
(We remark that Theorem \ref{thm:sufficient condition reversible} directly holds in this case as well because the proof does not rely on the spectra of $H_S$ and $H_{S'}$.)

In this section, we provide a preliminary result toward extending Theorem~\ref{thm:lower bound} to the one that provides a nontrivial bound for $H_{S}$ and $H_{S'}$ with unbounded spectra.
The following result shows that the quantities $\Delta(H_S)$ and $\Delta(H_{S'})$ can be replaced by the ones that can take finite values even at the limit of unbounded Hamiltonians. 

\begin{theorem}\label{thm:pre-unbounded}
For finite-dimensional systems $S$ and $S'$, let $\Lambda:S\to S'$ be a pairwise reversible Gibbs-preserving Operation with a reversible pure state pair $\bbP:=\{\dm{\psi},\dm{\phi}\}$. 
Let $\tilde{\Lambda}:S\to S'$ be a CPTP map satisfying $D_F(\Lambda,\tilde{\Lambda})\le\epsilon$ and the following inequalities for a real positive number $\Delta$:
\begin{align}
\max_{\rho\in\mathrm{span}(\bbP)}\left|\Tr\left(H_{S'}\tilde{\Lambda}(\rho)\right)-\Tr\left(H_{S'}\Lambda(\rho)\right)\right|&\le\epsilon \Delta,\label{eq:ex-close}\\
\max_{\rho\in\mathrm{span}(\bbP)}|V(\tilde{\Lambda}(\rho);H_{S'})-V(\Lambda(\rho);H_{S'})|&\le\epsilon \Delta^2,\label{eq:v-close}
\end{align}
where $\mathrm{span}(\mathbb{P})$ refers to the set of quantum states in the linear span of $\mathbb{P}$, and $V(\eta;O)$ denotes the variance of the observable $O$ for the state $\eta$.
Then, the following inequality holds:
\begin{equation}\begin{aligned}
\sqrt{\calF_c(\tilde{\Lambda})}\geq 
\frac{\calC(\Lambda,\bbP)}{\epsilon} -2\max_{\rho\in\mathrm{span}(\bbP)}(\sqrt{V(\rho;H_{S})}+\sqrt{V(\Lambda(\rho);H_{S'})}) - 2(1+\sqrt{\epsilon})\Delta.\label{eq:pre-unbounded}
\end{aligned}\end{equation}
\end{theorem}
Although this result assumes the finite dimensionality of $S$ and $S'$, the inequality \eqref{eq:pre-unbounded} does not directly depend on the maximum range of the spectra of $H_S$ or $H_{S'}$. Therefore, it is expected that if we can extend the results in Ref.~\cite{Tajima2022universal}---which assumes the finite dimensionality---to the infinite-dimensional cases involving unbounded operators, Theorem~\ref{thm:pre-unbounded} could also be extended to unbounded $H_S$ and $H_{S'}$, which provides nontrivial lower bounds for coherence cost and diverges at the limit of $\epsilon\to 0$.
We leave the thorough investigation of this problem as a future work.

\begin{proof}

To prove \eqref{eq:pre-unbounded}, we use the following theorem:
\begin{theorem}[Ref.~\cite{Tajima2022universal}]\label{thm:SIQ another}
For finite-dimensional systems $S$ and $S'$, let $\calE:S\to S'$ and $\mathbb{P}$ be an arbitrary quantum channel and an arbitrary orthogonal state pair.
Let $(U,\rho_E,H_{E},H_{E'})$ be an implementation of $\Lambda$ satisfying
\eq{
U^\dagger(H_{S'}+H_{E'})U&=H_{S}+H_{E}\\
\calE(...)&=\Tr_{E'}[U...\otimes\rho_{E}U^\dagger].
}
Then, it holds that
\eq{
\frac{\calC(\calE,\mathbb{P})}{\sqrt{\calF(\rho_E;H_E)}+\Delta_F}\le\delta(\calE,\mathbb{P}),\label{eq:SIQ another}
} 
where 
\eq{
\Delta_F:=\max_{\rho\in\mathrm{span}(\bbP)}\sqrt{\calF(\rho\otimes\rho_E;H_{S}\otimes1_E-U^\dagger H_{S'}\otimes 1_{E'} U)}
}
with $\calF(\eta;O)$ being the Fisher information of the observable $O$ for the state $\eta$.
\end{theorem}

One can upper bound $\Delta_F$ in Theorem~\ref{thm:SIQ another} by
\eq{
\Delta_F&=\max_{\rho\in\mathrm{span}(\bbP)}\sqrt{\calF(\rho\otimes\rho_E;H_{S}\otimes1_E-U^\dagger H_{S'}\otimes 1_{E'} U)}\nonumber\\
&\stackrel{(a)}{\le}\max_{\rho\in\mathrm{span}(\bbP)}\left(\sqrt{\calF(\rho\otimes\rho_E;H_{S}\otimes1_E)}
+\sqrt{\calF(\rho\otimes\rho_E;U^\dagger H_{S'}\otimes 1_{E'} U)}\right)\nonumber\\
&\stackrel{(b)}{\le}2\max_{\rho\in\mathrm{span}(\bbP)}\left(\sqrt{V(\rho\otimes\rho_E;H_{S}\otimes1_E)}
+\sqrt{V(\rho\otimes\rho_E;U^\dagger H_{S'}\otimes 1_{E'} U)}\right)\nonumber\\
&\stackrel{(c)}{=}2\max_{\rho\in\mathrm{span}(\bbP)}\left(\sqrt{V(\rho;H_{S})}
+\sqrt{V(\calE(\rho);H_{S'})}\right),\label{eq:evaluate Delta_F}
}
where we used $\sqrt{\calF(\eta;O_1+O_2)}\le\sqrt{\calF(\eta;O_1)}+\sqrt{\calF(\eta,O_2)}$ in (a), $\calF(\eta;O)\le4 V(\eta;O)$ in (b) and $V(\rho\otimes\rho_E;U^\dagger H_{S'}\otimes 1_{E'} U)=\Tr\left[U^\dagger H^2_{S'}\otimes 1_{E'}U(\rho\otimes\rho_E)\right]-\Tr\left[U^\dagger H_{S'}\otimes 1_{E'}U(\rho\otimes\rho_E)\right]^2=V(\calE(\rho);H_{S'})$ in (c).

Also, when $(U,\rho_E,H_{E},H_{E'})$ implements a channel $\calE$, we obtain
\eq{
\calF_{c}(\calE)\ge\calF_{c,{\rm cov}}(\calE)\ge\calF(\rho_E;H_{E}).\label{eq:evaluate cost}
}

Combining \eqref{eq:SIQ another}, \eqref{eq:evaluate Delta_F} and \eqref{eq:evaluate cost}, and substituting $\tilde{\Lambda}$ for $\calE$, we obtain
\eq{
\sqrt{\calF_{c}(\tilde{\Lambda})}\ge\frac{\calC(\tilde{\Lambda},\bbP)}{\delta(\tilde{\Lambda},\bbP)}-2\max_{\rho\in\mathrm{span}(\bbP)}\left(\sqrt{V(\rho;H_{S})}
+\sqrt{V(\tilde{\Lambda}(\rho);H_{S'})}\right).
}
Therefore, we only have to evaluate $\delta(\tilde{\Lambda},\bbP)$, $\calC(\tilde{\Lambda},\bbP)$ and $V(\tilde{\Lambda}(\rho);H_{S'})$.

In the same manner as the proof of Theorem \ref{thm:lower bound app}, we obtain
\eq{
\delta(\tilde{\Lambda},\bbP)\le\epsilon.
}
Also, due to \eqref{eq:v-close}, the following is valid for arbitrary $\rho\in\mathrm{span}(\bbP)$:
\eq{
\sqrt{V(\tilde{\Lambda}(\rho);H_{S'})}&\le\sqrt{V(\Lambda(\rho);H_{S'})+\epsilon \Delta^2}\le\sqrt{V(\Lambda(\rho);H_{S'})}+\sqrt{\epsilon} \Delta.
}
Therefore, to obtain \eqref{eq:pre-unbounded}, we only have to show
\eq{
\calC(\tilde{\Lambda},\bbP)\ge\calC(\Lambda,\bbP)-2\epsilon\Delta.\label{eq:evaluate C}
}
In the same manner as \eqref{eq:bound on approximate C app}, we obtain for $\mathbb{P}=\{\dm{\psi},\dm{\phi}\}$ that
\bal
\calC(\tilde\Lambda,\bbP) &\ge \calC(\Lambda,\bbP) - \|\sqrt{\dm{\psi}}(\Lambda^\dagger(H_{S'})-\tilde\Lambda^\dagger(H_{S'}))\sqrt{\dm{\phi}}\|_2\\
&=\calC(\Lambda,\bbP) - |\bra{\psi}\tilde\Lambda^\dagger (H_{S'})-\Lambda^\dagger(H_{S'})\ket{\phi}|
.
\eal
The second term can be upper bounded as
\bal
|\bra{\psi}\tilde\Lambda^\dagger (H_{S'})-\Lambda^\dagger(H_{S'})\ket{\phi}| &\leq \frac{1}{2}|\bra{\psi}\tilde\Lambda^\dagger (H_{S'})-\Lambda^\dagger(H_{S'})\ket{\phi}+\bra{\phi}\tilde\Lambda^\dagger (H_{S'})-\Lambda^\dagger(H_{S'})\ket{\psi}|\\
&\quad + \frac{1}{2}|\bra{\psi}\tilde\Lambda^\dagger (H_{S'})-\Lambda^\dagger(H_{S'})\ket{\phi}-\bra{\phi}\tilde\Lambda^\dagger (H_{S'})-\Lambda^\dagger(H_{S'})\ket{\psi}|\\
&=\frac{1}{2}\left|\Tr(H_{S'}(\tilde\Lambda-\Lambda)(\ketbra{\psi}{\phi}+\ketbra{\phi}{\psi}))\right|+\frac{1}{2}\left|\Tr(H_{S'}(\tilde\Lambda-\Lambda)(\ketbra{\psi}{\phi}-\ketbra{\phi}{\psi}))\right|\\
&=\frac{1}{2}\left|\Tr(H_{S'}(\tilde\Lambda-\Lambda)(\eta_+-\eta_{-}))\right|+\frac{1}{2}\left|\Tr(H_{S'}(\tilde\Lambda-\Lambda)(\eta'_+-\eta'_{-}))\right|\\
&\leq\frac{1}{2}\left(\left|\Tr(H_{S'}(\tilde\Lambda-\Lambda)(\eta_+))\right|+\left|\Tr(H_{S'}(\tilde\Lambda-\Lambda)(\eta_-))\right|\right.\\
&\enskip+\left.\left|\Tr(H_{S'}(\tilde\Lambda-\Lambda)(\eta'_+))\right|+\left|\Tr(H_{S'}(\tilde\Lambda-\Lambda)(\eta'_-))\right|\right)\\
&\leq 2\epsilon\Delta
\eal
where we introduced $\eta_\pm=\dm{\eta_\pm}$ and $\eta_\pm'=\dm{\eta_\pm'}$ with $\ket{\eta_\pm}\coloneqq \frac{1}{\sqrt{2}}(\ket{\psi}\pm\ket{\phi})$ and $\ket{\eta_\pm'}\coloneqq \frac{1}{\sqrt{2}}(\ket{\psi}\pm i\ket{\phi})$.
Therefore, we have obtained \eqref{eq:evaluate C}, showing \eqref{eq:pre-unbounded}.
\end{proof}

}

%%%%%%%%%%%%%%%%%%%%%%%%%%%%%%%%%%%%%%%%%%%%%%%%%%%%%%%%%%%%%%%%%%%%%%%%%%%%%%%%%%%%%%%%%%%%%%%%%%%%%%%%%%%%%%%%%%%%%%%%%%%%%%

\section{Construction of pairwise reversible Gibbs-preserving Operations (Proof of Theorem~\ref{thm:sufficient condition reversible})}\label{app:reversible}

We first show a general sufficient condition for the existence of pairwise reversible Gibbs-preserving Operations. 
To this end, let $\psi$ be a pure state in $S$. 
We define the min-relative entropy with respect to the Gibbs state by~\cite{Datta2009min,Regula2018convex,Buscemi2019information,Liu2019one-shot} 
\bal
 D_{\min}(\psi\|\tau_S) = -\log\Tr(\psi\tau_S).
\eal
Also, for an arbitrary state $\rho$ in $S$, max-relative entropy with respect to the Gibbs state is defined by~\cite{Datta2009min,Regula2018convex,Buscemi2019information,Liu2019one-shot} 
\bal
 D_{\max}(\rho\|\tau_S) = \log \min\lset s \sbar \rho\leq s\tau_S\rset.
\eal

We then obtain the following result. 

\begin{theorem}\label{thm:general sufficient app}
Let $S$ be a system with Hamiltonian $H_S=\sum_n E_{n,S}\dm{n}$ and $S'$ be a system with some arbitrary Hamiltonian. 
If there exists a pair of pure states $\psi\in S$ and $\phi\in S'$ such that  
\bal
\bra{i}\psi\ket{i} \neq 0,\quad \bra{j}\psi\ket{j} \neq 0,\quad E_{i,S}\neq E_{j,S}
\label{eq:coherence condition}
\eal
and 
\bal
 D_{\min}(\psi\|\tau_S) = D_{\min}(\phi\|\tau_{S'}) = D_{\max}(\phi\|\tau_{S'}),
 \label{eq:Dmin is Dmax}
\eal
there exists a pairwise reversible Gibbs-preserving Operation from $S$ to $S'$.
\end{theorem}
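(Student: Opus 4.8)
The plan is to build the required channel explicitly as a measure-and-prepare map that ``tests'' its input against $\psi$, generalizing the qubit example~\eqref{eq:example qubit}. I would take
\bal
 \Lambda(\rho)=\Tr(\dm{\psi}\rho)\,\dm{\phi}+\Tr\bigl((\mathds{1}_S-\dm{\psi})\rho\bigr)\,\omega ,
\eal
which is CPTP for any state $\omega$ on $S'$ (a POVM $\{\dm{\psi},\mathds{1}_S-\dm{\psi}\}$ followed by a conditional preparation), and then fix $\omega$ so that $\Lambda$ is Gibbs preserving.

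The first step, and the one I expect to be the main obstacle, is to convert the information-theoretic hypothesis~\eqref{eq:Dmin is Dmax} into usable linear-algebraic data. Using $D_{\min}(\phi\|\tau_{S'})=-\log\brakets{\phi}{\tau_{S'}}{\phi}$ and, for pure $\phi$, $D_{\max}(\phi\|\tau_{S'})=\log\brakets{\phi}{\tau_{S'}^{-1}}{\phi}$, the equality $D_{\min}(\phi\|\tau_{S'})=D_{\max}(\phi\|\tau_{S'})$ reads $\brakets{\phi}{\tau_{S'}}{\phi}\,\brakets{\phi}{\tau_{S'}^{-1}}{\phi}=1$. Cauchy--Schwarz for the vectors $\tau_{S'}^{1/2}\ket{\phi}$ and $\tau_{S'}^{-1/2}\ket{\phi}$ then forces $\tau_{S'}\ket{\phi}\propto\ket{\phi}$, i.e.\ $\ket{\phi}$ lies in an energy eigenspace of $S'$, and $q\coloneqq\brakets{\phi}{\tau_{S'}}{\phi}$ is the corresponding Gibbs weight. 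In particular $\tau_{S'}-q\,\dm{\phi}\ge 0$ with $\bra{\phi}(\tau_{S'}-q\,\dm{\phi})\ket{\phi}=0$, so, noting $q<1$ (since $\brakets{\psi}{\tau_S}{\psi}<1$ whenever $d_S\ge2$, which holds by~\eqref{eq:coherence condition}),
\bal
 \omega\coloneqq\frac{\tau_{S'}-q\,\dm{\phi}}{1-q}
\eal
is a legitimate state with support orthogonal to $\ket{\phi}$. The remaining equality $D_{\min}(\psi\|\tau_S)=D_{\min}(\phi\|\tau_{S'})$ gives $\brakets{\psi}{\tau_S}{\psi}=q$, whence $\Lambda(\tau_S)=q\,\dm{\phi}+(1-q)\,\omega=\tau_{S'}$; thus $\Lambda$ is Gibbs preserving.

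Next I would verify pairwise reversibility. Pick a unit vector $\ket{\chi}\perp\ket{\psi}$ (possible since $d_S\ge2$), and set $\bbP=\{\dm{\psi},\dm{\chi}\}$. Then $\Lambda(\dm{\psi})=\dm{\phi}$ and $\Lambda(\dm{\chi})=\omega$, which have orthogonal supports, so the measure-and-prepare recovery
\bal
 \calR(\cdot)=\brakets{\phi}{\cdot}{\phi}\,\dm{\psi}+\Tr\bigl((\mathds{1}_{S'}-\dm{\phi})\cdot\bigr)\,\dm{\chi}
\eal
satisfies $\calR\circ\Lambda(\dm{\psi})=\dm{\psi}$ and $\calR\circ\Lambda(\dm{\chi})=\dm{\chi}$; hence $\bbP$ is a reversible pair of $\Lambda$, establishing the theorem. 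Finally, to connect with the subsequent use of this result, I would note that choosing $\ket{\chi}$ along the vector $(\mathds{1}_S-\dm{\psi})H_S\ket{\psi}$ (nonzero precisely because of~\eqref{eq:coherence condition}) and computing $\Lambda^\dagger(H_{S'})=c_1\,\dm{\psi}+c_2(\mathds{1}_S-\dm{\psi})$ with scalars $c_1=\brakets{\phi}{H_{S'}}{\phi}$, $c_2=\Tr(H_{S'}\omega)$ yields $\calC(\Lambda,\bbP)=|\brakets{\psi}{H_S}{\chi}|=\|(\mathds{1}_S-\dm{\psi})H_S\ket{\psi}\|>0$, so the construction additionally produces a reversible pair with $\calC(\Lambda,\bbP)>0$. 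Everything after the first step—Gibbs preservation, existence of $\calR$, and the coherence computation—is routine, so the crux is really the spectral consequence of $D_{\min}=D_{\max}$ and the positivity $\tau_{S'}-q\,\dm{\phi}\ge0$.
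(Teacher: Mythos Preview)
Your proof is correct and follows essentially the same route as the paper: the identical measure-and-prepare channel $\Lambda(\rho)=\Tr(\psi\rho)\phi+\Tr[(\mathds{1}-\psi)\rho]\,\eta$, the same recovery scheme based on distinguishing $\phi$ from $\eta$, and the same verification that $\calC(\Lambda,\bbP)>0$ via the coherence condition~\eqref{eq:coherence condition}. The only notable variation is that you extract the stronger structural fact $\tau_{S'}\ket{\phi}\propto\ket{\phi}$ through Cauchy--Schwarz (which gives positivity of $\omega$ and orthogonality to $\phi$ in one stroke), whereas the paper uses the defining inequality $\phi\le 2^{D_{\max}(\phi\|\tau_{S'})}\tau_{S'}$ directly for positivity and then checks $\Tr(\phi\eta)=0$ separately from $D_{\min}(\phi\|\tau_{S'})=D_{\min}(\psi\|\tau_S)$; your choice of $\ket{\chi}$ along $(\mathds{1}_S-\dm{\psi})H_S\ket{\psi}$ is also a slightly cleaner alternative to the paper's explicit two-level $\ket{\psi^\perp}$.
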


\begin{proof}
Let $\psi$ and $\phi$ be the pure states in \eqref{eq:coherence condition} and \eqref{eq:Dmin is Dmax}.
Define
\bal
 \Lambda(\rho)=\Tr(\psi\rho)\phi + \Tr\left[(\mathds{1}-\psi)\rho\right]\eta 
 \label{eq:measure and prepare diverging cost}
\eal
where 
\bal
 \eta = \frac{\tau_{S'}-\Tr(\psi\tau_S)\phi}{\Tr\left[(\mathds{1}-\psi)\tau_{S}\right]}.
\eal
The operator $\eta$ is a valid state because it clearly has the unit trace and 
\bal
\phi\leq 2^{D_{\max}(\phi\|\tau_{S'})}\tau_{S'}=2^{D_{\min}(\psi\|\tau_S)}\tau_{S'}=\Tr(\psi\tau_S)^{-1}\tau_{S'}
\eal
where the first inequality is by definition of $D_{\max}$, the first equality is due to \eqref{eq:Dmin is Dmax}, and the last equality is by definition of $D_{\min}$. This ensures that $\eta\geq 0$ and that $\Lambda$ is a valid Gibbs-preserving channel.

Moreover, 
\bal
 \Tr(\phi\eta) \propto \Tr(\phi\tau_{S'})-\Tr(\psi\tau_S) = 2^{-D_{\min}(\phi\|\tau_{S'})} - 2^{-D_{\min}(\psi\|\tau_S)} = 0
\eal
where the final equality is due to \eqref{eq:Dmin is Dmax}.
This implies that $\phi$ and $\eta$ are perfectly distinguishable and thus ensures that $\delta(\Lambda,\bbP)=0$ for a choice of state pair $\bbP=\{\psi,\sigma\}$ where $\sigma$ is an arbitrary state in $S$ such that $\Tr(\psi\sigma)=0$.

As a choice of $\sigma$, we can particularly choose an orthogonal pure state $\psi^\perp$ defined as follows. 
Let $a_i\coloneqq \braket{i}{\psi}$ and $a_j\coloneqq \braket{j}{\psi}$ be the coefficients for $i$\,th and $j$\,th energy of $\psi$, where $i$ and $j$ are the labels in \eqref{eq:coherence condition}.
\HT{We then choose 
\bal
 \ket{\psi^\perp} = \frac{1}{\sqrt{|a_j|^2+|a_i|^2}}\left(a^*_j\ket{i} - a^*_i\ket{j}\right),
 \label{eq:orthogonal pure state}
\eal
for which one can directly check that the condition $\Tr(\psi\psi^\perp)=0$ is satisfied. Here, $a^*_{i,j}$ is the complex conjugate of the coefficients $a_{i,j}$.}
Since
\bal
\Lambda^\dagger(H_{S'}) = \Tr(\phi H_{S'})\psi + \Tr(\eta H_{S'})(\mathds{1}-\psi),
\eal
we get
\bal
 \brakets{\psi}{\Lambda^\dagger(H_{S'})}{\psi^\perp} = 0.
\eal
Therefore,
\bal
 \left|\brakets{\psi}{H-\Lambda^\dagger(H_{S'})}{\psi^\perp}\right|&=\left|\brakets{\psi}{H}{\psi^\perp}\right|= \HT{\frac{|a^*_ia^*_j|}{\sqrt{|a_i|^2+|a_j|^2}}|E_{i,S}-E_{j,S}|}>0
\eal
where the last inequality is because of the assumption that $E_{i,S}\neq E_{j,S}$.
This ensures 
\bal
 \calC(\Lambda,\bbP) = \left|\brakets{\psi}{H-\Lambda^\dagger(H_{S'})}{\psi^\perp}\right|>0
\eal
with a reversible pair $\bbP=\{\psi,\psi^\perp\}$, concluding the proof.
\end{proof}

Theorem~\ref{thm:general sufficient app} admits the following simple sufficient condition. 

\begin{corollary}[Theorem \ref{thm:sufficient condition reversible} in the main text]\label{coro_thm2}
Let $\tau_{X,i}=\brakets{i}{\tau_X}{i}_X$ be the Gibbs distribution for the Gibbs state for a system $X$ with Hamiltonian $H_X = \sum_i E_{X,i}\dm{i}_X$. Then, if there are integers $i$, $j$, and $i'$ for systems $S$ and $S'$ such that
\bal
 \tau_{S,i} < \tau_{S',i'} < \tau_{S,j},
 \label{eq:condition sufficient_SM}
\eal
there exists a pairwise reversible Gibbs-preserving Operation $\Lambda:S\to S'$ and a reversible pair $\bbP$ such that $\calC(\Lambda,\bbP)>0$.
\end{corollary}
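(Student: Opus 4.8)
The plan is to derive Corollary~\ref{coro_thm2} from Theorem~\ref{thm:general sufficient app} by constructing, out of the spectral inequality \eqref{eq:condition sufficient_SM}, a pair of pure states $\psi$ on $S$ and $\phi$ on $S'$ satisfying the two hypotheses \eqref{eq:coherence condition} and \eqref{eq:Dmin is Dmax}. The natural choice is to take $\psi$ supported only on the two energy levels $i$ and $j$ of $S$, i.e. $\ket{\psi}=\sqrt{p}\,\ket{i}+\sqrt{1-p}\,\ket{j}$ with $p\in(0,1)$ to be fixed, and $\phi=\dm{i'}$ a single energy eigenstate of $S'$. Since $\ket{i}$ and $\ket{j}$ are assumed to have distinct energies $E_{S,i}\neq E_{S,j}$ (this must be guaranteed by \eqref{eq:condition sufficient_SM}: if $\tau_{S,i}<\tau_{S,j}$ strictly then $E_{S,i}>E_{S,j}$, so the energies differ), and both amplitudes are nonzero for any $p\in(0,1)$, the coherence condition \eqref{eq:coherence condition} holds automatically.

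The crux is then to satisfy \eqref{eq:Dmin is Dmax}. For $\phi=\dm{i'}$ one computes directly $D_{\min}(\phi\|\tau_{S'})=-\log\brakets{i'}{\tau_{S'}}{i'}=-\log\tau_{S',i'}$, and moreover because $\phi$ is a single eigenstate of a \emph{diagonal} Gibbs state we also have $\phi=\tau_{S',i'}^{-1}\,\tau_{S',i'}\dm{i'}\le \tau_{S',i'}^{-1}\tau_{S'}$, so $D_{\max}(\phi\|\tau_{S'})=-\log\tau_{S',i'}$ as well; thus $D_{\min}(\phi\|\tau_{S'})=D_{\max}(\phi\|\tau_{S'})$ is free. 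It remains to pick $p$ so that $D_{\min}(\psi\|\tau_S)=-\log\tau_{S',i'}$, i.e.
\bal
\Tr(\psi\tau_S)= p\,\tau_{S,i} + (1-p)\,\tau_{S,j}=\tau_{S',i'}.
\eal
This is a convex combination of $\tau_{S,i}$ and $\tau_{S,j}$, and precisely because $\tau_{S,i}<\tau_{S',i'}<\tau_{S,j}$ by \eqref{eq:condition sufficient_SM}, there is a (unique) $p\in(0,1)$ realizing it, namely $p=(\tau_{S,j}-\tau_{S',i'})/(\tau_{S,j}-\tau_{S,i})$. With this $p$, all hypotheses of Theorem~\ref{thm:general sufficient app} are met, so it yields a pairwise reversible Gibbs-preserving Operation $\Lambda:S\to S'$ together with a reversible pair $\bbP=\{\psi,\psi^\perp\}$ with $\calC(\Lambda,\bbP)>0$, which is the claim.

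The one genuinely delicate point is bookkeeping about strictness and degeneracy: I should check that the strict inequalities in \eqref{eq:condition sufficient_SM} indeed force $E_{S,i}\neq E_{S,j}$ (a level with strictly larger Gibbs weight has strictly smaller energy, at finite $\beta$), so that the resulting $\psi$ is genuinely coherent and the final estimate $|\brakets{\psi}{H_S}{\psi^\perp}|\propto|E_{S,i}-E_{S,j}|>0$ survives; degeneracies elsewhere in $H_S$ or $H_{S'}$ are harmless. Everything else is a short computation that was already packaged into Theorem~\ref{thm:general sufficient app}, so the proof of the corollary is essentially the observation that a number strictly between $\tau_{S,i}$ and $\tau_{S,j}$ is an interior convex combination of them. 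I expect no real obstacle beyond making this strictness/degeneracy check explicit.
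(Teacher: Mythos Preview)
Your proposal is correct and follows essentially the same route as the paper: choose $\ket{\psi}=\sqrt{r}\ket{i}_S+\sqrt{1-r}\ket{j}_S$ and $\ket{\phi}=\ket{i'}_{S'}$ with $r\in(0,1)$ determined by $r\tau_{S,i}+(1-r)\tau_{S,j}=\tau_{S',i'}$, then verify \eqref{eq:coherence condition} and \eqref{eq:Dmin is Dmax} to invoke Theorem~\ref{thm:general sufficient app}. Your explicit remark that the strict inequality $\tau_{S,i}<\tau_{S,j}$ forces $E_{S,i}\neq E_{S,j}$ at finite $\beta$ is a point the paper leaves implicit, so your write-up is, if anything, slightly more careful on that score.
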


\begin{proof}
Let $r\in(0,1)$ be a real positive number satisfying 
\eq{
\tau_{S',i'}=r\tau_{S,i}+(1-r)\tau_{S,j}\label{cond1'},
}
whose existence is guaranteed because of \eqref{eq:condition sufficient_SM}.
Let $\psi$ and $\phi$ be the pure states defined by 
\eq{
\ket{\psi}&:=\sqrt{r}\ket{i}_S+\sqrt{1-r}\ket{j}_S,\\
\ket{\phi}&:=\ket{i'}_{S'}.
}
It suffices to show that $\psi$ and $\phi$ satisfy \eqref{eq:coherence condition} and \eqref{eq:Dmin is Dmax}.
It is straightforward to check \eqref{eq:coherence condition} noting $0<r<1$.
Eq.~\eqref{eq:Dmin is Dmax} can be checked as follows:
\eq{
D_{\min}(\psi\|\tau_S)&=-\log\Tr(\psi\tau_S)\nonumber\\
&=-\log(r\tau_{S,i}+(1-r)\tau_{S,j})\nonumber\\
&=-\log\tau_{S',i'}\nonumber\\
&=-\log\Tr[\phi\tau_{S'}]=D_{\min}(\phi\|\tau_{S'})\nonumber\\
&=\log \min\lset s \sbar \ket{i'}\bra{i'}_{S'}\leq s\tau_{S'}\rset=D_{\max}(\phi\|\tau_{S'}).
}
\end{proof}

%%%%%%%%%%%%%%%%%%%%%%%%%%%%%%%%%%%%%%%%%%%%%%%%%%%%%%%%%%%%%%

We also provide an alternative construction. 

\begin{proposition}
For a countably infinite series $\{E_n\}_n$ of real numbers, let $S_d(\{E_n\}_n)$ be an arbitrary $d$-dimensional system equipped with Hamiltonian $H_d=\sum_{n=0}^{d-1}E_n\dm{n}$. 
Then, for arbitrary $d\geq 3$ and $d'\leq d$, and an arbitrary energy spectrum $\{E_n\}_n$ with $E_i\geq E_j,\ \forall i,j$ such that it is not fully degenerate above the ground energy, i.e., there exists $1\leq i\leq d-1$ such that $E_{i+1}>E_i$, there exists a pairwise reversible Gibbs-preserving operation $\Lambda:S_d(\{E_n\}_n)\to S_{d'}(\{E_n\}_n)$ with a reversible pair $\bbP$ such that $\calC(\Lambda,\bbP)>0$. 
\end{proposition}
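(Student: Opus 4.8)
The plan is to produce the channel explicitly, as a measure-and-prepare map in the same spirit as the construction used in the proof of Theorem~\ref{thm:general sufficient app}, but with one extra classical outcome, which frees us from the rigid matching condition \eqref{eq:Dmin is Dmax}. Write $S=S_d(\{E_n\}_n)$, $S'=S_{d'}(\{E_n\}_n)$ and $Z_k=\sum_{n=0}^{k-1}e^{-\beta E_n}$; I take $d'\ge 2$ throughout (when $d'=1$ the output Hilbert space is one-dimensional and admits no orthogonal — hence no reversible — pair, so the statement is read with $d'\ge2$). This will in particular reprove Corollary~\ref{coro_thm2} whenever the simpler condition \eqref{eq:condition sufficient_SM} happens to hold, while also covering the remaining cases, such as $S=S'$ with only two distinct eigenenergies.

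First I would fix the reversible pair. By non-degeneracy of the truncated spectrum there are indices $a\ne b$ in $\{0,\dots,d-1\}$ with $E_a\ne E_b$; using $d\ge 3$, I would pick an orthonormal pair $\ket{\psi_1},\ket{\psi_2}$ in $S$ with $\brakets{\psi_1}{H_S}{\psi_2}\ne 0$ whose Gibbs weights $p_j\coloneqq\brakets{\psi_j}{\tau_S}{\psi_j}$ sit just above the two smallest eigenvalues of $\tau_S$ — e.g.\ by taking $\psi_1,\psi_2$ concentrated on the least-occupied energy level(s) of $\tau_S$ with a tiny coherent admixture of $\ket a,\ket b$ carrying opposite relative phases. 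Then I would define $\Lambda(\rho)=\brakets{\psi_1}{\rho}{\psi_1}\phi_1+\brakets{\psi_2}{\rho}{\psi_2}\phi_2+\Tr[(\mathds 1_S-\dm{\psi_1}-\dm{\psi_2})\rho]\sigma$ with $\phi_1,\phi_2$ orthogonal states on $S'$ and $\sigma$ a state on $S'$ fixed below. Since $\psi_1\perp\psi_2$ the three effects form a valid POVM (the third is nonzero because $d\ge3$), one checks $\Lambda(\dm{\psi_j})=\phi_j$ and $\Lambda(\ketbra{\psi_2}{\psi_1})=0$; hence $\bbP=\{\dm{\psi_1},\dm{\psi_2}\}$ is a reversible pair (a recovery $\calR$ maps $\phi_j\mapsto\dm{\psi_j}$), and by \eqref{eq:energy change operator definition pure states} together with $\Lambda(\ketbra{\psi_2}{\psi_1})=0$ one gets $\calC(\Lambda,\bbP)=|\brakets{\psi_1}{H_S}{\psi_2}|>0$ for \emph{any} admissible choice of $\phi_1,\phi_2,\sigma$.

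The only remaining requirement is that $\Lambda$ be Gibbs preserving: $\Lambda(\tau_S)=p_1\phi_1+p_2\phi_2+(1-p_1-p_2)\sigma$, so I need $\sigma=(1-p_1-p_2)^{-1}(\tau_{S'}-p_1\phi_1-p_2\phi_2)$ to be a valid state. Its trace is automatically $1$, and $1-p_1-p_2>0$ because $\dm{\psi_1}+\dm{\psi_2}$ is a rank-two projector strictly below $\mathds 1_S$ ($d\ge3$); so everything reduces to the positivity $\tau_{S'}\ge p_1\phi_1+p_2\phi_2$. Choosing $n_0$ to be a maximally Gibbs-occupied level of $S'$ and taking $\phi_1=\dm{n_0}_{S'}$, $\phi_2=(d'-1)^{-1}(\mathds 1_{S'}-\dm{n_0}_{S'})$ makes the left side diagonal in the energy eigenbasis of $S'$, and the inequality becomes $p_1\le\lambda_{\max}(\tau_{S'})$ and $p_2\le(d'-1)\lambda_{\min}(\tau_{S'})$ (relabelling $\psi_1,\psi_2$ if convenient).

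The main obstacle is precisely this last spectral estimate: one must verify that the coherent pair of Step~1 can indeed be chosen to meet these two bounds. Since each $p_j$ can be made to approach an eigenvalue of $\tau_S$ — the smallest, or the two smallest if the minimal eigenspace of $\tau_S$ is one-dimensional — it suffices to show $\lambda_{\min}(\tau_S)\le\lambda_{\max}(\tau_{S'})$, $\lambda_{\min}(\tau_S)\le(d'-1)\lambda_{\min}(\tau_{S'})$, and (in the one-dimensional case) $\lambda_2(\tau_S)\le\lambda_{\max}(\tau_{S'})$, where $\lambda_2$ denotes the second-smallest eigenvalue. Each of these follows from $d'\le d$ (so $Z_{d'}\le Z_d$, and every Boltzmann factor of $S'$ occurs among those of $S$), together with $d\ge3$ and the non-degeneracy hypothesis, which make the inequalities strict except in trivially handled corners (e.g.\ $\beta=0$, $d'=d$, where $p_j$ is forced and equality is harmless); one separates the cases $\beta\gtrless0$, minimal eigenspace degenerate or not, and $d'=d$ versus $d'<d$, but each is a short computation. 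Assembling Steps 1–3 then yields a Gibbs-preserving $\Lambda:S_d\to S_{d'}$ that is pairwise reversible on $\bbP$ with $\calC(\Lambda,\bbP)=|\brakets{\psi_1}{H_S}{\psi_2}|>0$, as claimed.
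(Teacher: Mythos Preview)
Your approach is correct and shares the same backbone as the paper's: a three-outcome measure-and-prepare channel whose first two outcomes form the orthogonal reversible pair (so that $\brakets{\psi_1}{\Lambda^\dagger(H_{S'})}{\psi_2}=0$ automatically and $\calC(\Lambda,\bbP)=|\brakets{\psi_1}{H_S}{\psi_2}|$), with the third output state fixed by the Gibbs-preserving condition. The only nontrivial step in both proofs is verifying positivity of that remainder state.

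The difference lies in how this positivity is secured. You choose $\psi_1,\psi_2$ perturbatively near the least-occupied level(s) of $\tau_S$ and send them to $\dm{n_0}_{S'}$ and the maximally mixed state on its complement, then close the argument by a case analysis of spectral inequalities such as $\lambda_{\min}(\tau_S)\le(d'-1)\lambda_{\min}(\tau_{S'})$ and $\lambda_2(\tau_S)\le\lambda_{\max}(\tau_{S'})$. The paper instead exploits the specific hypothesis $i\ge1$ directly: it takes the reversible pair to be $\ket{\pm}_{i,i+1}$ on the very levels where $E_{i+1}>E_i$, and sends them to $\dm{0}_{S'},\dm{1}_{S'}$. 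Positivity of the remainder then reduces to $\tau_{S',k}\ge(\tau_{S,i}+\tau_{S,i+1})/2$ for $k=0,1$, which follows in one line from $\tau_{S',k}\ge\tau_{S,k}$ (since $Z_{d'}\le Z_d$) together with $\tau_{S,0},\tau_{S,1}\ge\tau_{S,i},\tau_{S,i+1}$ (since $i\ge1$ means both $i,i+1$ sit above levels $0,1$). No perturbation parameter, no case splitting, and the pair is written down explicitly rather than asserted to exist by a limiting argument. Your route buys a bit more flexibility (e.g.\ it does not need the gap to sit at index $\ge1$), but for the proposition as stated the paper's choice is the more economical one.
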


\begin{proof}
Let $\tau_{S}=\sum_n \tau_{S,n}$ be the Gibbs state for $S_d(\{E_n\}_n)$ and $\tau_{S'}=\sum_n \tau_{S',n}$ be the Gibbs state for $S_{d'}(\{E_n\}_n)$.
Let $i$ be an integer such that $1\leq i \leq d-1$ and $E_{i+1}>E_i$ ensured by the assumption.
Define
\bal
 \Lambda(\rho) =  \Tr(\dm{+}_{i,i+1} \rho)\dm{0} + \Tr(\dm{-}_{i,i+1} \rho)\dm{1} + \Tr(P_{i,i+1}^\perp\rho)\eta
\eal
where $P_{i,i+1}^\perp = \mathds{1}_d-(\dm{i}+\dm{i+1})$ is the projector onto the input space complement to ${\rm span}\{\ket{i},\ket{i+1}\}$, and
\bal
 \eta \coloneqq \frac{\tau_{S'}-\Tr(\dm{+}_{i,i+1}\tau_S)\dm{0}-\Tr(\dm{-}_{i,i+1}\tau_S)\dm{1}}{\Tr(P_{i,i+1}^\perp \tau_{S})}.
\eal
The operator $\eta$ is a valid state because it clearly has a unit trace and 
\bal
\eta \propto \left(\tau_{S',0}- \frac{\tau_{S,i}+\tau_{S,i+1}}{2}\right)\dm{0} + \left(\tau_{S',1}- \frac{\tau_{S,i}+\tau_{S,i+1}}{2}\right)\dm{1}+ \sum_{j\geq 2}\tau_{S',j} \dm{j}\geq 0
\eal
where we used the fact that for any $k$,
\bal
 \tau_{S',k} - \tau_{S,k} = \frac{(Z-Z')e^{-\beta E_k}}{ZZ'} \geq 0,\quad Z\coloneqq\sum_{n=0}^{d-1} e^{-\beta E_n}, Z'\coloneqq\sum_{n=0}^{d'-1}e^{-\beta E_n}
\eal
because $d\geq d'$ by assumption, which implies $Z\geq Z'$.
We then get 
\bal
 \tau_{S',0}- \frac{\tau_{S,i}+\tau_{S,i+1}}{2} \geq \tau_{S,0}- \frac{\tau_{S,i}+\tau_{S,i+1}}{2}\geq 0\\
 \tau_{S',1}- \frac{\tau_{S,i}+\tau_{S,i+1}}{2} \geq \tau_{S,1}- \frac{\tau_{S,i}+\tau_{S,i+1}}{2}\geq 0
\eal
where the last inequalities hold because $E_{i+1}\geq E_i\geq E_1\geq E_0$ and thus $\tau_{S,i+1}\leq \tau_{S,i} \leq \tau_{S,1}\leq \tau_{S,0}$.
This ensures that $\Lambda$ is a measure-and-prepare channel, and due to the definition of $\eta$, $\Lambda$ is Gibbs-preserving. 

In addition, a state pair $\{\dm{+}_{i,i+1}, \dm{-}_{i,i+1}\}$ is reversible because $\Lambda(\dm{+}_{i,i+1})=\dm{0}$ and $\Lambda(\dm{-}_{i,i+1})=\dm{1}$ are perfectly distinguishable. This ensures $\delta(\Lambda,\bbP) = 0$ for $\bbP=\{\dm{+}_{i,i+1}, \dm{-}_{i,i+1}\}$.

One can also check $\calC(\Lambda,\bbP)>0$ as follows. 
We have
\bal
\Lambda^\dagger(H_{d'}) = \Tr(H_{d'}\dm{0})\dm{+}_{i,i+1} + \Tr(H_{d'}\dm{1})\dm{-}_{i,i+1} + \Tr(H_{d'}\eta)P_{i,i+1}^\perp
\eal
resulting in 
\bal
 {}_{i,i+1}\brakets{+}{\Lambda^\dagger(H_{d'})}{-}_{i,i+1} = 0.
\eal
On the other hand,
\bal
 \left|{}_{i,i+1}\brakets{+}{H_d}{-}_{i,i+1}\right| = \frac{E_{i+1}-E_i}{2}>0
\eal
because $E_{i+1}> E_i$ by assumption. This ensures $\calC(\Lambda,\bbP)>0$. 

\end{proof}

%%%%%%%%%%%%%%%%%%%%%%%%%%%%%%%%%%%%%%%%%%%%%%%%%%%%%%%%%%%%%%%%%%%%%%%%%%%%%%%%%%%%%%%%%%%%%%%%%%%%%%%%%%%%%%%%%%%%

\section{Upper bound for the coherence cost of the channel given in Eq.~(\ref{eq:Faist example})}\label{App:cost_Faist}
We show that the coherence cost $\calF_c(\Lambda)$ for the channel $\Lambda:S\to S'$ defined by
\bal
 \Lambda(\rho) = \brakets{1}{\rho}{1}\eta + \brakets{0}{\rho}{0}\sigma
 \label{eq:Faist example_supple}
\eal
satisfies the upper bound
\eq{
\calF_c(\Lambda)\le\calF(\eta)+\calF(\sigma).\label{upp_Faist}
}

\begin{proof}
Let $E$ be an ancillary system with $H_E = 0$. 
Let $\Lambda_1:S\to E$ and $\Lambda_2:E\to S'$ be the channels defined by 
\eq{
\Lambda_1(\kappa_S):=\bra{1}\kappa_S\ket{1}\dm{1}_E+\bra{0}\kappa_S\ket{0}\dm{0}_E
}
and 
\eq{
\Lambda_2(\kappa_E):=\bra{1}\kappa_E\ket{1}\eta+\bra{0}\kappa_E\ket{0}\sigma
}
for arbitrary states $\kappa_S$ in $S$ and $\kappa_E$ in $E$. 

Let $\tilde S$ be a system identical to $S$ (equipped with Hamiltonian $H_S$).
The channels $\Lambda_1$ and $\Lambda_2$ can be implemented by unitaries $U$ on $SE$ and $V$ on $ES\tilde S$ by 
\eq{
\Lambda_1(\kappa_S)&=\Tr_{S}[U\kappa_S\otimes\ket{0}\bra{0}_{E}U^\dagger]\\
\Lambda_2(\kappa_E)&=\Tr_{E\tilde S}[V\kappa_E\otimes\eta\otimes\sigma V^\dagger]
}
where 
\eq{
U&:=\ket{00}\bra{00}_{SE}+\ket{11}\bra{10}_{SE}+\ket{01}\bra{01}_{SE}+\ket{10}\bra{11}_{SE},\\
V&:=\ket{0}\bra{0}_E\otimes\mathds{1}_{S\tilde S}+\ket{1}\bra{1}_E\otimes U_{\mathrm{SWAP}}
}
where $U_{\mathrm{SWAP}}$ is the swap operator between $S$ and $\tilde S$. 
Because of $H_E=0$ and $H_S=H_{\tilde S}$, the relations $[U,H_S+H_E]=0$ and $[V,H_{E}+H_{S}+H_{\tilde S}]=0$ are satisfied.
Therefore, we obtain
\eq{
\calF_c(\Lambda_1)&\leq \calF(\dm{0})=0\\
\calF_c(\Lambda_2)&\le\calF(\eta\otimes\sigma)=\calF(\eta)+\calF(\sigma).
}
Since $\calF_c(\Lambda)\le\calF_c(\Lambda_1)+\calF_{c}(\Lambda_2)$, we obtain \eqref{upp_Faist}.
\end{proof}

\HT{We note that \eqref{upp_Faist} is only an upper bound of the cost of the channel in (\ref{eq:Faist example}) of the main text, which may not be tight in general.
}

%%%%%%%%%%%%%%%%%%%%%%%%%%%%%%%%%%%%%%%%%%%%%%%%%%%%%%%%%%%%%%%%%%%%%%%%%%%%%%%%%%%%%%%%%%%%%%%%%%%%%%%%%%%%%%%%%%%%%%%%%%%%%%

\section{On atypicality of cost-diverging Gibbs-preserving Operations}\label{app:atypicality}

Theorem~\ref{thm:lower bound app} and Corollary~\ref{coro_thm2} show that there are infinitely many cost-diverging Gibbs-preserving Operations. However, this does not imply that the ones with diverging coherence cost are typical instances among all Gibbs-preserving Operations. 
Here, we show that, in fact, cost-diverging operations are measure zero with respect to the set of all Gibbs-preserving Operations.
Nevertheless, we stress that one still needs to be cautious about the implementability of Gibbs-preserving Operations because (1) in many cases, we are interested in highly structured Gibbs-preserving Operations, which can be cost-diverging ones, and (2) even if a desired Gibbs-preserving Operation has a finite coherence cost, it can still come with a large coherence cost. Indeed, Theorem~\ref{thm:lower bound app} implies that the operations that approximate a cost-diverging one with accuracy $\epsilon$ must come with a coherence cost proportional to $1/\epsilon$, which forms a set with nonzero measure.

We show the atypicality of cost-diverging instances by proving that they only reside on the boundary of the set of Gibbs-preserving Operations. 
To formalize this, we need the notion of the interior points of a subset of vector spaces, where the subset of interest itself could also have zero measure. 
This is represented by the relative interior of the subset~\cite{rockafellar2015convex}. 
For a vector space equipped with trace norm, define the open ball centered around a vector $v$ by
\bal
\calB^\epsilon(v)\coloneqq \lset u  \sbar \|u-v\|_1<\epsilon\rset.
\eal
Also, we define the intersection between the open ball and the affine hull of a set $\calS$ by 
\bal
\calB^\epsilon_\calS(v) \coloneqq \calB^\epsilon(v)\cap {\rm aff}(\calS) 
\eal
where ${\rm aff}(\calS)=\lset s \sbar s=(1-t)s_1 + ts_2,\ s_1\in \calS, s_2\in\calS,t\in\mathbb{R}\rset$.
Then, the relative interior of the set $\calS$ is defined by 
    \bal
     {\rm relint}(\calS)\coloneqq \lset v\in\cal S \sbar \exists \epsilon >0\mbox{ s.t. }\calB_\calS^\epsilon(v)\subseteq \calS \rset.
    \eal

In the following, we show that every operation in the relative interior of the set of Gibbs-preserving Operations has a finite coherence cost, which implies that the cost-diverging ones are measure zero with respect to all Gibbs-preserving Operations because any finite-dimensional convex set has a nonzero relative interior~\cite[Theorem 6.2]{rockafellar2015convex}. 
To this end, we begin by the following result, which we will employ to show that if all members in the set can be approximated arbitrarily well, all interior points of this set can be exactly implemented.
This was shown in Ref.~\cite[Corollary 7]{Wilming2022correlationsin} for interior points of the full-dimensional subset, and here we extend it to relative interior points. 
Nevertheless, the proof is almost identical, and here we particularly follow the descriptions employed in Ref.~\cite[Lemma S.14]{Shiraishi2024arbitrary}, which explained the proof in a bit more detail.

\begin{lemma}\label{lem:approximate to exact}
Let $\mathcal{V}$ be a subset of linear operators on a finite-dimensional Hilbert space, and let $\{\mathcal{S}_n\}_{n=1}^\infty$ be a family of closed convex sets satisfying $\mathcal{S}_n\subseteq \mathcal{V}$ and $\mathcal{S}_n\subseteq \mathcal{S}_{n+1}$ for every $n$. 
Also, suppose that $\{\mathcal{S}_n\}_{n=1}^\infty$ approximates $\mathcal{V}$ arbitrarily well, in the sense that for an arbitrary $\epsilon>0$, there exists a sufficiently large $N$ such that $\min_{u\in \mathcal{S}_N}\|v-u\|_1<\epsilon$ for every $v\in \mathcal{V}$. Then, for every $\kappa\in {\rm relint}(\mathcal{V})$, there exists an integer $n$ such that $\kappa\in \mathcal{S}_n$.
\end{lemma}
\begin{proof}

Since $\kappa\in{\rm relint}(\mathcal{V})$, there exists a sufficiently small $\delta>0$ such that $B_\mathcal{V}^\delta(\kappa)\subseteq \mathcal{V}$.
By assumption of $\{\mathcal{S}_n\}_n$, there exists an integer $N_{\delta/2}$ such that $\min_{u\in \mathcal{S}_{N_{\delta/2}}}\|v-u\|_1<\delta/2$ for an arbitrary $v\in \mathcal{V}$.

In the following, we aim to show that $\kappa\in \mathcal{S}_{N_{\delta/2}}$.
Suppose contrarily that $\kappa\not\in \mathcal{S}_{N_{\delta/2}}$.
Let $s\in \mathcal{S}_{N_{\delta/2}}$ be the closest point in $\mathcal{S}_{N_{\delta/2}}$ from $\kappa$, i.e., 
\bal
s={\rm argmin}_{s'\in \mathcal{S}_{N_{\delta/2}}} \|\kappa-s'\|_1.
\label{eq:closest point}
\eal

Letting $r(t)=(1-t)s+t\kappa$, 
it holds that for every $\delta'<\delta$ there exists $t'\geq 1$ such that $\|r(t')-\kappa\|_1=(t'-1)\|s-\kappa\|_1= \delta'$. 
Together with the fact that $r(t)\in{\rm aff}(\mathcal{V})$, which is because $s\in \mathcal{S}_{N_{\delta/2}} \subseteq \mathcal{V}$ and $\kappa\in \mathcal{V}$, we have $r(t')\in\calB^\delta_\mathcal{V}(\kappa)\subseteq \mathcal{V}$.
Choosing $\delta'=2\delta/3$ in particular, we get 
\bal
 \|r(t')-s\|_1=t'\|s-\kappa\|_1\geq (t'-1)\|s-\kappa\|_1=\|r(t')-\kappa\|_1=2\delta/3.
 \label{eq:ray and the point in the set}
\eal
On the other hand,
\bal
 \min_{u\in \mathcal{S}_{N_{\delta/2}}}\|r(t')-u\|_1&=\min_{u\in \mathcal{S}_{N_{\delta/2}}} \|-(t'-1)s+t'\kappa-u\|_1\\
 &= t'\min_{u\in \mathcal{S}_{N_{\delta/2}}}\left\|\kappa-\left[\left(1-\frac{1}{t'}\right)s+\frac{1}{t'}u\right]\right\|_1 = t'\|\kappa-s\|_1
 \label{eq:distance ray and set}
\eal
where in the last equality, we used the fact that $(1-\frac{1}{t'})s+\frac{1}{t'}u\in \mathcal{S}_{N_{\delta/2}}$ because of the convexity of $\mathcal{S}_{N_{\delta/2}}$ and \eqref{eq:closest point}, ensuring $u=s$ achieves the minimum. 
Combing \eqref{eq:ray and the point in the set} and \eqref{eq:distance ray and set}, we get 
\bal
\min_{u\in \mathcal{S}_{N_{\delta/2}}}\|r(t')-u\|_1\geq 2\delta/3.
\label{eq:distance ray and set summary}
\eal
Since the definition of $\mathcal{S}_{N_{\delta/2}}$ implies $\delta/2\geq \min_{u\in \mathcal{S}_{N_{\delta/2}}}\|r(t')-u\|_1$, we reach a contradiction $\delta/2\geq 2\delta/3$ together with \eqref{eq:distance ray and set summary}.
\end{proof}

The next result shows that if a subset contains an interior point of its superset, no interior points of the subset cannot be on the boundary of the superset.
We will employ this to ensure that relative interior of the set of Gibbs-preserving Operations are also in the interior of all quantum channels. 

\begin{lemma} \label{lem:relative interior}
    Let $\calS\subseteq \calV$ be a subset that contains at least one interior point $v\in{\rm relint}(\calV)$ with respect to $\calV$. Then, every $s\in{\rm relint}(\calS)$ is also a relative interior point of $\calV$, i.e., $s\in{\rm relint}(\calV)$.
\end{lemma}
\begin{proof}
    For an arbitrary $s\in{\rm relint}(\calS)$, there exists $\epsilon>0$ such that $\calB_\calV^\epsilon(s)\subseteq \calV$. For $v\in{\rm relint}(\calV)$, let $r(t) = -tv + (1+t)s$ for $t\geq 0$. Since $r(t)\in {\rm aff}(\calS),\forall t$ and $s\in{\rm relint}(\calS)$, there exists sufficiently small $\tilde t>0$ such that $r(\tilde t)\in\calS\subseteq \calV$. 
    Recalling the line segment property~\cite[Theorem 6.1]{rockafellar2015convex}, which states that for an arbitrary closed convex set $\calC$ and arbitrary points $x\in{\rm relint}(\calC)$ and $y\in\calC$, all points $\lambda x + (1-\lambda)y$ with $0<\lambda\leq 1$ are in the relative interior of $\calC$, $s=\frac{r(\tilde t)+\tilde tv}{1+\tilde t}$ satisfies $s\in{\rm relint}(\calV)$ as $v\in{\rm relint}(\cal V)$ and $r(\tilde t)\in \calV$, as well as $\tilde t>0$. 
\end{proof}

We are in the position to show the desired statement by combining Lemmas~\ref{lem:approximate to exact} and \ref{lem:relative interior}, together with the fact that every quantum channel can be approximated arbitrarily well with a finite coherence cost ensured by Theorem~\ref{thm:general upper app}. 

\begin{proposition}
    Let $\mathbb{O}_{\rm GPO}$ be the set of Gibbs-preserving Operations. Then, every coherence-diverging Gibbs-preserving Operation $\Lambda$ must be on the boundary of $\mathbb{O}_{\rm GPO}$, i.e., $\Lambda\not\in{\rm relint}(\mathbb{O}_{\rm GPO})$.
\end{proposition}
\begin{proof}
We show that every $\Lambda\in{\rm relint}(\mathbb{O}_{\rm GPO})$ can be implemented exactly with a finite coherence cost. 
For a subset $\mathbb{O}$ of channels, let $\tilde{\mathbb{O}}$ denote the set of Choi operators for $\mathbb{O}$.
Let $\mathbb{O}_{\rm GPO}(S\to S')$ be the set of Gibbs-preserving Opeations from $S$ to $S'$, and let $\mathbb{T}(S\to S')$ be the set of all channels from $S$ to $S'$.

We first note that $\mathbb{O}_{\rm GPO}(S\to S')$ contains a channel $\Xi$ such that $J_{\Xi}\in{\rm relint}(\tilde{\mathbb{T}}(S\to S'))$ where $J_\Xi$ is the Choi operator of $\Xi$.
For instance, take $\Xi(\cdot)=\Tr(\cdot)\tau_{S'}$, whose Choi operator is $J_\Xi=\id\otimes\Xi(\tilde\Phi)=\mathbb{I}\otimes \tau_{S'}$ where $\tilde\Phi=\sum_{i,j=0}^{d_S-1}\ketbra{ii}{jj}$ is the unnormalized maximally entangled state.
Then, any operator in ${\rm aff}(\tilde{\mathbb{T}}(S\to S'))=\lset J_{SS'} \sbar \Tr_{S'}(J)=\mathbb{I}_S\rset$ in the neighborhood of $J_\Xi$ is also a valid Choi operator because $J_\Xi$ is full rank and thus $J_\Xi>0$.
Threrefore, we can apply Lemma~\ref{lem:relative interior} to ensure that for every $\Lambda\in{\rm relint}(\mathbb{O}_{\rm GPO}(S\to S'))$, which is equivalent to $J_\Lambda\in{\rm relint}(\tilde{\mathbb{O}}_{\rm GPO}(S\to S'))$, it holds that $J_\Lambda\in{\rm relint}(\tilde{\mathbb{T}}(S\to S'))$.

 Let $\mathcal{S}_n$ be the set of Choi operators of channels with input and output systems $S$ and $S'$ whose coherence cost is upper bounded by the integer $n$, i.e.,
  \bal
   \mathcal{S}_n \coloneqq {\rm cl}\lset J_{\Xi}\in\tilde{\mathbb{T}}(S\to S')\sbar  \calF_c^{\epsilon=0}(\Xi)\leq n\rset
  \eal
  where ${\rm cl}$ refers to the closure of the set, whose necessity comes from that in the implementation $\Xi(\cdot)=\Lambda(\cdot\otimes \eta),\Lambda\in\mathbb{O}_{\rm TO}$, the choice of the system that $\eta$ acts on is arbitrary and thus the set of states $\eta$ that admits $\calF_c^{\epsilon=0}(\Xi)\leq n$ may not be closed.

  We can see that $\mathcal{S}_n$ is convex, following a similar argument to the one in \cite[Appendix C]{Lostaglio2015quantum}. Consider $\Xi_1, \Xi_2\in\calS_n$ that admit implementation $\Xi_i(\cdot)=\Tr_{\overline S'}[U_i(\cdot\otimes \eta_i\otimes \tau_i)U_i^\dagger],i=1,2$ where $U_i$ is an energy-conserving unitary on $SR_iE_i$, $\tau_i$ is the thermal state of some system $E_i$, and $\eta_i$ is a coherent resource state on some system $R_i$ with $\calF(\eta_i)\leq n$. 
  Note that $\Tr_{\overline S'}$ denotes the partial trace over systems other than $S'$.
  It then suffices to show that $\Xi=p\Xi_1+(1-p)\Xi_2$ can be implemented by a Thermal Operation aided by a resource state $\eta$ with $\calF(\eta)\leq n$.
  To this end, let $\eta=p\dm{0}_F\otimes\tilde\eta_1+(1-p)\dm{1}_F\otimes \tilde\eta_2$ where $\tilde\eta_1=\eta_1\oplus 0_{R_2}$ and $\tilde\eta_2=0_{R_1}\oplus \eta_2$ are states on $R=R_1\oplus R_2$ with Hamiltonian $H_R=H_{R_1}\oplus H_{R_2}$, and $F$ is the system introduced for a flag with trivial Hamitlonian $H_F=0$. 
  Let $U=\dm{0}_F\otimes \tilde U_1\otimes \mathbb{I}_{E_2} + \dm{1}_F\otimes \tilde U_2\otimes \mathbb{I}_{E_1}$ where $\tilde U_1=U_1\oplus \mathbb{I}_{SR_2E_1}$ and $\tilde U_2=U_2\oplus \mathbb{I}_{SR_1E_2}$.
  This is energy-conserving because 
  \begin{equation}\begin{aligned}
  &[U,H_F+H_S+H_{R}+H_{E_1}+H_{E_2}] \\
  &= \dm{0}\otimes [U_1,H_S+H_{R_1}+H_{E_1}]+\dm{1}\otimes[U_2,H_S+H_{R_2}+H_{E_2}]
  = 0.
  \end{aligned}\end{equation}
Then,
\bal
 \Tr_{\overline S'}[U(\rho\otimes\eta\otimes\tau_1\otimes\tau_2) U^\dagger]&=p\Tr_{\overline S'}[U_1(\rho\otimes\eta_1\otimes\tau_1) U_1^\dagger]+(1-p)\Tr_{\overline S'}[U_2\rho\otimes\eta_2\otimes\tau_2 U_2^\dagger]\\
 & = p\Xi_1 + (1-p)\Xi_2.
\eal  
Noting that $\calF(\eta)=p\calF(\eta_1)+(1-p)\calF(\eta_2)\leq n$ concludes the proof that $\calS_n$ is convex.

Theorem~\ref{thm:general upper app} reveals that every quantum channel can be implemented with an arbitrary accuracy $\epsilon$ with a finite coherence cost specified in the right-hand side of \eqref{eq:coherence cost upper bound general}.
This particularly means that every Choi operator in $\tilde{\mathbb{T}}(S\to S')$ can be approximated with arbitrary accuracy by a Choi operator in $\calS_N\subseteq \tilde{\mathbb{T}}(S\to S')$ for a sufficiently large $N$.  
Therefore, applying Lemma~\ref{lem:approximate to exact} with $\mathcal{V}=\tilde{\mathbb{T}}(S\to S')$, we get that for every $J_\Lambda\in{\rm relint}(\tilde{\mathbb{O}}_{\rm GPO}(S\to S'))\subseteq{\rm relint}(\tilde{\mathbb{T}}(S\to S'))$, where the second inclusion is guaranteed by Lemma~\ref{lem:relative interior}, and for every $\delta>0$ such that $\calB_\calV^\delta(J_\Lambda)\subseteq {\rm relint}(\calV)$, there exists an integer $n$ such that $\calB_\calV^\delta(J_\Lambda)\subseteq\calS_n$. 
This particularly means that $J_\Lambda\in{\rm relint}(\calS_n)$ and thus $\calF_c^{\epsilon=0}(\Lambda)\leq n$, i.e., $\Lambda$ can be implemented exactly with a finite coherence cost. 

\end{proof}

%%%%%%%%%%%%%%%%%%%%%%%%%%%%%%%%%%%%%%%%%%%%%%%%%%%%%%%%%%%%%%%%%%%%%%%%%%%%%%%%%%%%%

\section{State transformations requiring cost-diverging Gibbs preserving Operations (Proof of Theorem~\ref{thm:restrictions on state transition})}\label{app:state transition}
Besides the physical implementability of a given Gibbs-preserving Operation, another significant problem is whether a given state transformation $\rho\to\sigma$ that can be realized by a Gibbs-preserving Operation can physically be achieved by a finite coherence cost. 
Theorem~\ref{thm:lower bound}~or~\ref{thm:sufficient condition reversible} does not directly address this question because there are typically many quantum channels that can realize the said state transformation.
To investigate the smallest coherence cost for a given state transformation, one needs to optimize over all such quantum channels, which is generally intractable. 

Moreover, it can easily be seen that an arbitrary state transformation $\rho\to\sigma$ \emph{can} always be realized by a finite coherence cost because if one just prepares the state $\sigma$ in the auxiliary system equipped with the same Hamiltonian as the one for the main system and swaps $\rho$ and $\sigma$, this realizes the desired state transformation with the coherence cost $\calF(\sigma)$.
Therefore, we need to impose additional structure to obtain nontrivial insights. 

To this end, we focus on the setting where an operation implemented by a Thermal Operation with coherence assistance is also Gibbs-preserving.  
This is a reasonable restriction given that we are considering a state transformation that can be realized by a Gibbs-preserving Operation.
Here, we employ our methods to show that some state transformations realized by Gibbs-preserving Operations can never be achieved by a Thermal Operation and finite coherence, under the restriction that the implemented opearation itself is Gibbs-preserving. 
\HT{We also show that the cost-diverging state transitions given by our method are costly even when we allow small errors in their realization.
 To describe the result, we define the cost of approximate state transitions as follows:
 \eq{
\calF^{\epsilon}(\rho\rightarrow\sigma):=\min\lset\calF_c(\Lambda)\sbar\Lambda\in\mathbb{O}_{\rm GPO},\ \sigma'=\Lambda(\rho'),\ D_F(\rho,\rho')\le\epsilon,\ D_F(\sigma,\sigma')\le\epsilon\rset
 }
Here, $\mathbb{O}_{\rm GPO}$ is the set of Gibbs-preserving Operations.}

\begin{theorem}[\HT{A generalization of} Theorem \ref{thm:restrictions on state transition} in the main text]\label{thm:restrictions on state transition_SM}
Let $\tau_{X,i}=\brakets{i}{\tau_X}{i}_X$ be the Gibbs distribution for the Gibbs state for a system $X$ with Hamiltonian $H_X = \sum_i E_{X,i}\dm{i}_X$. Then, if there are integers $i$, $j$, and $i'$ for systems $S$ and $S'$ such that
\bal
 \tau_{S,i} < \tau_{S',i'} < \tau_{S,j},
 \label{eq:condition sufficient_for_ST}
\eal
\HT{there is a pair $(\rho,\sigma)$ of states on $S$ and $S'$ satisfying (i)$\rho\rightarrow\sigma$ is possible by a Gibbs-preserving Operation from $S$ to $S'$, (ii) $\rho\rightarrow\sigma$ is impossible by any finite-cost Gibbs-preserving Operations from $S$ to $S'$, i.e. $\calF^{\epsilon=0}(\rho\rightarrow\sigma)=\infty$, and (iii) for non-zero error $\epsilon$, $\rho\rightarrow\sigma$ satisfies
\eq{
\calF^{\epsilon}(\rho\rightarrow\sigma)\ge\frac{\kappa(\epsilon,\beta,\tau_{S,i},\tau_{S',i'},\tau_{S,j})}{\epsilon}-(\Delta(H_{S})+3\Delta(H_{S'}))^2,\label{eq:cost_approx_trans}
}
where $\kappa(\beta,\tau_{S,i},\tau_{S',i'},\tau_{S,j})$ is the following real-valued function that is positive when \eqref{eq:condition sufficient_for_ST} holds:
 \eq{ \kappa(\beta,\tau_{S,i},\tau_{S',i'},\tau_{S,j}):=\frac{(\tau_{S,i}+\tau_{S,j}-\tau_{S',i'})(\tau_{S,j}-\tau_{S',i'})(\tau_{S',i'}-\tau_{S,i})\log\frac{\tau_{S,j}}{\tau_{S,i}}}{2\beta(\tau_{S,j}-\tau_{S,i})^2(\tau_{S,i}+\tau_{S,j}+2\sqrt{(\tau_{S,j}-\tau_{S',i'})(\tau_{S',i'}-\tau_{S,i})})}
 }
 }
\end{theorem}
\HT{We remark that (i) and (ii) correspond to the Theorem \ref{thm:restrictions on state transition} in the main text.}

\begin{proof}
Consider the following state transition
\eq{
\ket{\eta_+}_S\bra{\eta_+}_S\rightarrow\ket{i'}_{S'}\bra{i'}_{S'}\label{eq:state transformation infinite cost}
}
where 
\eq{
\ket{\eta_+}_S&:=\sqrt{r}\ket{i}_{S}+\sqrt{1-r}\ket{j}_{S},\\
r&:=\frac{\tau_{S',i'}-\tau_{S,j}}{\tau_{S,i}-\tau_{S,j}}.\label{condr}
}

We first show that there exists a Gibbs-preserving Operation that realizes the transformation \eqref{eq:state transformation infinite cost}.
Consider a map $\Lambda_0:S\to S'$ defined by 
\eq{
\Lambda_0(...)&:=\Tr[\kb{\eta_+}...]\kb{i'}_{S'}+\Tr[(1-\kb{\eta_+})...]\xi_{S'},\\
\xi_{S'}&:=\frac{\tau_{S'}-\Tr[\kb{\eta_+}\tau_{S}]\kb{i'}_{S'}}{1-\Tr[\kb{\eta_+}\tau_{S}]}.
}
$\Lambda_0$ is a valid quantum channel because the fact that $\tau_{S'}\geq \left(\tau_{S',i}-\Tr[\kb{\eta_+}\tau_S]\right)\dm{i}=0$ ensures $\xi_{S'}\geq 0$. 
One can also directly see that $\Lambda_0$ is Gibbs-preserving satisfying $\Lambda_0(\tau_S)=\tau_{S'}$. 

\HT{Next, we show (ii), i.e., the state transition cannot be achieved by finite-cost Gibbs-preserving operations.} It suffices to show that an arbitrary Gibbs-preserving Operation $\Lambda:S\to S'$ that realizes the transformation \eqref{eq:state transformation infinite cost} must come with diverging coherence cost, i.e., $\calF_c^{\epsilon=0}(\Lambda)=\infty$. 
Let $\{M_k\}_k$ be a Kraus representation of $\Lambda$.
We write each Kraus operator with respect to energy eigenbases $\{\ket{l}_S\}_{l=1}^{d_S}$ of $S$ and $\{\ket{m}_{S'}\}_{m=1}^{d_{S'}}$ of $S'$ as
\eq{
M_k=&a_k\ket{i'}_{S'}\bra{\eta_+}_S
+\sum_{m\neq i'}b_{k,m}\ket{m}_{S'}\bra{\eta_+}_S
+c_k\ket{i'}_{S'}\bra{\eta_-}_{S}\nonumber\\
&+\sum_{m\neq i'}d_{k,m}\ket{m}_{S'}\bra{\eta_-}_S
+\sum_{l\neq i,j}e_{k,l}\ket{i'}_{S'}\bra{l}_S
+\sum_{m\neq i'}\sum_{l\neq i,j}f_{k,m,l}\ket{m}_{S'}\bra{l}_S,
}
where $\ket{\eta_-}:=\sqrt{1-r}\ket{i}_S-\sqrt{r}\ket{j}_S$. 

The assumption $\Lambda(\dm{\eta_+})=\dm{i'}$ imposes
\eq{
b_{k,m}=0,\enskip\forall k, \enskip\forall m\mbox{ s.t. }m\neq i'.\label{acsum-2}
}
Furthermore, $\sum_kM^\dagger_kM_k=\mathds{1}_S$ gives
\eq{
\sum_k|a_k|^2&=\sum_k\bra{\eta_+}M^\dagger_kM_k\ket{\eta_+}=1,\label{asum1}\\
\sum_kc^*_ka_k&=\sum_k\bra{\eta_-}M^\dagger_kM_k\ket{\eta_+}=0,\label{acsum0}\\
\sum_ke^*_{k,l}a_k&=\sum_k\bra{l}M^\dagger_kM_k\ket{\eta_+}=0,\enskip\forall l\mbox{ s.t. }l\neq i,j\label{acsum-1}\\
\sum_ke^*_{k,l}c_k&=\sum_k\bra{l}M^\dagger_kM_k\ket{\eta_-}=0,\enskip\forall l\mbox{ s.t. }l\neq i,j\label{acsum-3}.
}
Therefore, we obtain
\eq{
\tau_{S',i'}=&\bra{i'}_{S'}\tau_{S'}\ket{i'}_{S'}\nonumber\\
=&\sum_k\bra{i'}_{S'}M_k\tau_S M^\dagger_k\ket{i'}_{S'}\nonumber\\
=&\sum_k|a_k|^2\bra{\eta_+}\tau_S\ket{\eta_+}_S
+\sum_k|c_k|^2\bra{\eta_-}\tau_S\ket{\eta_-}_S
+\sum_{k}\sum_{l\neq i,j}|e_{k,l}|^2\bra{l}_{S}\tau_S\ket{l}_{S}\nonumber\\
&+2\mathrm{Re}\left(\sum_kc^*_ka_k\bra{\eta_+}\tau_S\ket{\eta_-}\right)
+2\mathrm{Re}\left(\sum_{l\neq i,j}\sum_ke^*_{k,l}a_k\bra{\eta_+}\tau_S\ket{l}\right)
+2\mathrm{Re}\left(\sum_{l\neq i,j}\sum_ke^*_{k,l}c_k\bra{\eta_-}\tau_S\ket{l}\right)
\nonumber\\
=&r\tau_{S,i}+(1-r)\tau_{S,j}
+\sum_k|c_k|^2((1-r)\tau_{S,i}+r\tau_{S,j})
+\sum_{k}\sum_{l\neq i,j}|e_{k,l}|^2\tau_{S,l}
\nonumber\\
=&\tau_{S',i'}+\sum_k|c_k|^2((1-r)\tau_{S,i}+r\tau_{S,j})
+\sum_{k}\sum_{l\neq i,j}|e_{k,l}|^2\tau_{S,l}.
}
Because of $(1-r)\tau_{S,i}+r\tau_{S,j}>0$ and $\tau_{S,l}>0$, we get
\eq{
c_k&=0,\enskip\forall k\\
e_{k,l}&=0,\enskip\forall k,\enskip\forall l\mbox{ s.t. }l\neq i,j.
}
Therefore, we can rewrite $M_k$ as
\eq{
M_k=&a_k\ket{i'}_{S'}\bra{\eta_+}_S
+\sum_{m\neq i'}d_{k,m}\ket{m}_{S'}\bra{\eta_-}_S
+\sum_{m\neq i'}\sum_{l\neq i,j}f_{k,m,l}\ket{m}_{S'}\bra{l}_S.\label{diag-pre}
}
This particularly implies that
\eq{
\bra{i'}_{S'}\Lambda(\psi_S)\ket{i'}_{S'}=0,\enskip\forall\ket{\psi}_S\mbox{ s.t. }\braket{\eta_+}{\psi}_S=0.\label{diag}
}

We employ this to construct a reversible input state pair that shows diverging coherence cost via Theorem~\ref{thm:lower bound}.
Let $\bbP=\{\dm{\eta_+},\dm{\eta_-}\}$. 
Then, one can see that $\bbP$ is a reversible pair by considering a recovery channel
\eq{
\calR(...)=\Tr[\kb{i'_{S'}}...]\kb{\eta_{+}}_S+\Tr[(1-\kb{i'_{S'}})...]\kb{\eta_{-}}_S,
}
which perfectly recovers $\ket{\eta_\pm}$ because of \eqref{diag}. 
We can also directly check that $\calC(\Lambda,\mathbb{P})>0$ as follows. 
Note that the operator $\Lambda^\dagger(H_{S'})$ satisfies
\eq{
\bra{\eta_+}_{S}\Lambda^\dagger(H_{S'})\ket{\eta_{-}}_S
&=\sum_{k}\bra{\eta_+}_{S}M^\dagger_kH_{S'}M_k\ket{\eta_{-}}_S\nonumber\\
&=\sum_{m\neq i'}\sum_{k}a^*_{k}E_{S',m}d_{k,m}\delta_{i'm}\nonumber\\
&=0.
}
Therefore,
\eq{
\calC(\Lambda,\mathbb{P})^2&=|\bra{\eta_+}H_{S}\ket{\eta_-}_S|^2\nonumber\\
&=|\bra{\eta_+}(E_{S,i}\kb{i}_S+E_{S,j}\kb{j}_S)\ket{\eta_-}|^2\nonumber\\
&=r(1-r)\left(E_{S,i}-E_{S,j}\right)^2\nonumber\\
&>0
}
where the last inequality is because of the assumption $\tau_{S,i}<\tau_{S',i'}<\tau_{S,j}$, which ensures $E_{S,i}-E_{S,j}\neq 0$ and $0<r<1$.  
Applying Theorem~\ref{thm:lower bound} to $\Lambda$ and $\bbP$ results in $\calF_c^{\epsilon=0}(\Lambda)=\infty$.

\HT{
Next, we show (iii), i.e. $\ket{\eta_+}_S\bra{\eta_+}_S\rightarrow\ket{i'}_{S'}\bra{i'}_{S'}$ satisfies \eqref{eq:cost_approx_trans}. It suffices to show that for an arbitrary state $\sigma'_{S',\epsilon}$ such that $D_F(\ket{i'}_{S'}\bra{i'}_{S'},\sigma'_{S',\epsilon})\le\epsilon$, the following inequality holds:
\begin{align}
\sqrt{\calF^{\epsilon=0}(\ket{\eta_+}_S\bra{\eta_+}_S\rightarrow\sigma'_{S',\epsilon})}\ge\frac{\sqrt{4\kappa(\beta,\tau_{S,i},\tau_{S',i'},\tau_{S,j})}}{\sqrt{\epsilon}}-\Delta(H_{S})-3\Delta(H_{S'}).\label{eq:cost_approx_trans_pre}
 \end{align}
Derivation of \eqref{eq:cost_approx_trans} from \eqref{eq:cost_approx_trans_pre} is as follows: First, due to $(a-b)^2\ge a^2/2-b^2$, we obtain 
 \eq{
 \calF^{\epsilon=0}(\ket{\eta_+}_S\bra{\eta_+}_S\rightarrow\sigma'_{S',\epsilon})\ge\frac{2\kappa(\beta,\tau_{S,i},\tau_{S',i'},\tau_{S,j})}{\epsilon}-(\Delta(H_{S})+3\Delta(H_{S'}))^2.
 }
 Second, if a channel $\tilde{\Lambda}$ satisfies $\tilde{\Lambda}(\rho')=\sigma'_{S',\epsilon}$ for a state $\rho'$ satisfying $D_F(\rho',\ket{\eta_+}_S\bra{\eta_+}_S)\le\epsilon$, the channel $\tilde{\Lambda}$ also satisfies $D_F(\tilde{\Lambda}(\ket{\eta_+}_S\bra{\eta_+}_S),\ket{i'}_{S'}\bra{i'}_{S'})\le2\epsilon$.
 Therefore, $\calF^{\epsilon}(\ket{\eta_+}_S\bra{\eta_+}_S\rightarrow\ket{i'}_{S'}\bra{i'}_{S'})\ge\min_{\sigma''_{S'}:D_F(\sigma''_{S'},\ket{i'}\bra{i'}_{S'})\le2\epsilon}\calF^{0}_{c}(\ket{\eta_+}_S\bra{\eta_+}_S\rightarrow\sigma''_{S'})$. Therefore,  \eqref{eq:cost_approx_trans} follows \eqref{eq:cost_approx_trans_pre}.

 Let us show \eqref{eq:cost_approx_trans_pre}.
 Let $\Lambda$ be an arbitrary Gibbs-preserving operation satisfying
 \eq{ D_F(\Lambda(\ket{\eta_+}\bra{\eta_+}_S),\ket{i'}\bra{i'}_{S'})\le\epsilon.\label{eq:assum_approx}
 }
Since $\calF^{\epsilon=0}(\ket{\eta_+}_S\bra{\eta_+}_S\rightarrow\sigma'_{S',\epsilon})\ge\min\lset\calF_{c}(\Lambda)\sbar D_F(\Lambda(\ket{\eta_+}\bra{\eta_+}_S),\ket{i'}\bra{i'}_{S'})\leq \epsilon\rset$, the inequality \eqref{eq:cost_approx_trans_pre} follows from
\eq{
\sqrt{\calF_{c}(\Lambda)}\ge\frac{\sqrt{4\kappa(\beta,\tau_{S,i},\tau_{S',i'},\tau_{S,j})}}{\sqrt{\epsilon}}-\Delta(H_{S})-3\Delta(H_{S'}).\label{eq:cost_approx_trans_pre2}
}
To show this, we follow a similar strategy of proof of (ii). Let $\{M_k\}_k$ be a Kraus representation of $\Lambda$.
 We write each Kraus operator with respect to energy eigenbases $\{\ket{l}_S\}_{l=1}^{d_S}$ of $S$ and $\{\ket{m}_{S'}\}_{m=1}^{d_{S'}}$ of $S'$ as
 \eq{
 M_k=&a_k\ket{i'}_{S'}\bra{\eta_+}_S
 +\sum_{m\neq i'}b_{k,m}\ket{m}_{S'}\bra{\eta_+}_S
+c_k\ket{i'}_{S'}\bra{\eta_-}_{S}\nonumber\\
 &+\sum_{m\neq i'}d_{k,m}\ket{m}_{S'}\bra{\eta_-}_S
 +\sum_{l\neq i,j}e_{k,l}\ket{i'}_{S'}\bra{l}_S
 +\sum_{m\neq i'}\sum_{l\neq i,j}f_{k,m,l}\ket{m}_{S'}\bra{l}_S,
 }
 where $\ket{\eta_-}:=\sqrt{1-r}\ket{i}_S-\sqrt{r}\ket{j}_S$. 

 The assumption \eqref{eq:assum_approx} imposes $\bra{i'}_{S'}\Lambda(\ket{\eta_+}\bra{\eta_+}_S)\ket{i'}_{S'}\ge1-\epsilon^2$, and thus we obtain 
 \eq{
 \sum_k|a_k|^2\ge1-\epsilon^2.
 }
 Furthermore, $\sum_kM^\dagger_kM_k=\mathds{1}_S$ implies
 \eq{ 1&=\sum_k\bra{\eta_+}M^\dagger_kM_k\ket{\eta_+}=\sum_k|a_k|^2+\sum_{k}\sum_{m\neq i'}|b_{k,m}|^2,\label{asum1_approx}\\
0&=\sum_k\bra{\eta_-}M^\dagger_kM_k\ket{\eta_+}=\sum_kc^*_ka_k+\sum_{k}\sum_{m\neq i'}d^*_{k,m}b_{k,m},\label{acsum0_approx}\\
0&=\sum_k\bra{l}M^\dagger_kM_k\ket{\eta_+}=\sum_ke^*_{k,l}a_k,\enskip\forall l\mbox{ s.t. }l\neq i,j,\\ 0&=\sum_k\bra{l}M^\dagger_kM_k\ket{\eta_-}=\sum_ke^*_{k,l}c_k,\enskip\forall l\mbox{ s.t. }l\neq i,j.
}
 From these relations, we obtain
\eq{
 \sum_{k}\sum_{m\neq i'}|b_{k,m}|^2&=1-\sum_k|a_k|^2\le\epsilon^2,\\
 |\sum_kc^*_ka_k|&=|\sum_{k}\sum_{m\neq i'}d^*_{k,m}b_{k,m}|\nonumber\\
 &\le\sqrt{\sum_{k}\sum_{m\neq i'}|d_{k,m}|^2}\sqrt{\sum_{k}\sum_{m\neq i'}|b_{k,m}|^2},\nonumber\\
 &\le\epsilon.
 }
 Therefore, we obtain
 \eq{ \tau_{S',i'}=&\bra{i'}_{S'}\tau_{S'}\ket{i'}_{S'}\nonumber\\
 =&\sum_k\bra{i'}_{S'}M_k\tau_S M^\dagger_k\ket{i'}_{S'}\nonumber\\
=&\sum_k|a_k|^2\bra{\eta_+}\tau_S\ket{\eta_+}_S +\sum_k|c_k|^2\bra{\eta_-}\tau_S\ket{\eta_-}_S
+\sum_{k}\sum_{l\neq i,j}|e_{k,l}|^2\bra{l}_{S}\tau_S\ket{l}_{S}\nonumber\\
&+2\mathrm{Re}\left(\sum_kc^*_ka_k\bra{\eta_+}\tau_S\ket{\eta_-}\right)
+2\mathrm{Re}\left(\sum_{l\neq i,j}\sum_ke^*_{k,l}a_k\bra{\eta_+}\tau_S\ket{l}\right)
+2\mathrm{Re}\left(\sum_{l\neq i,j}\sum_ke^*_{k,l}c_k\bra{\eta_-}\tau_S\ket{l}\right)
 \nonumber\\
 =&\sum_k|a_k|^2(r\tau_{S,i}+(1-r)\tau_{S,j})
+\sum_k|c_k|^2((1-r)\tau_{S,i}+r\tau_{S,j})\nonumber\\
 &+\sum_{k}\sum_{l\neq i,j}|e_{k,l}|^2\tau_{S,l} +2\mathrm{Re}\left(\sum_kc^*_ka_k\right)\sqrt{r(1-r)}(\tau_{S,i}-\tau_{S,j})\nonumber\\
\ge&(1-\epsilon^2)\tau_{S',i'}+\sum_k|c_k|^2((1-r)\tau_{S,i}+r\tau_{S,j})
-2\epsilon\sqrt{r(1-r)}(\tau_{S,j}-\tau_{S,i}).\label{eq:c_k_estimate}
}
Because of the definition of $r$, \eqref{eq:c_k_estimate} is converted as follows:
\eq{ \sum_k|c_k|^2&\le\frac{\epsilon^2\tau_{S',i'}+2\epsilon\sqrt{(\tau_{S,j}-\tau_{S',i'})(\tau_{S',i'}-\tau_{S,i})}}{\tau_{S,i}+\tau_{S,j}-\tau_{S',i'}}\nonumber\\
 &=\epsilon f(\epsilon,\tau_{S,i},\tau_{S',i'},\tau_{S,j}).
 }
 where
 \eq{ f(\epsilon,\tau_{S,i},\tau_{S',i'},\tau_{S,j}):=\frac{\epsilon\tau_{S',i'}+2\sqrt{(\tau_{S,j}-\tau_{S',i'})(\tau_{S',i'}-\tau_{S,i})}}{\tau_{S,i}+\tau_{S,j}-\tau_{S',i'}}.
}

Now, let us show that $\Lambda$ satisfies
\eqref{eq:cost_approx_trans_pre2}.
We define $\bbP=\{\dm{\eta_+},\dm{\eta_-}\}$ and a recovery channel
\eq{ \calR(...)=\Tr[\kb{i'_{S'}}...]\kb{\eta_{+}}_S+\Tr[(1-\kb{i'_{S'}})...]\kb{\eta_{-}}_S.
}
Then, we obtain
\eq{
D_F(\calR\circ\Lambda(\eta_+),\eta_+)^2&=1-\bra{i'}_{S'}\Lambda(\eta_+)\ket{i'}_{S'}\nonumber\\
 &=1-\sum_{k}|a_k|^2\nonumber\\
 &\le\epsilon^2,\\ D_F(\calR\circ\Lambda(\eta_-),\eta_-)^2&=1-\Tr[(1-\ket{i'}_{S'}\bra{i'}_{S'})\Lambda(\eta_-)]\nonumber\\
&=\sum_{k}|c_k|^2\nonumber\\
 &\le\epsilon f(\epsilon,\tau_{S,i},\tau_{S',i'},\tau_{S,j})
 }
and
 \eq{
 \delta(\Lambda,\bbP)&\le \sqrt{\frac{D_F(\calR\circ\Lambda(\eta_+),\eta_+)^2+D_F(\calR\circ\Lambda(\eta_-),\eta_-)^2}{2}}\nonumber\\
&\le\sqrt{\frac{\epsilon^2+\epsilon f(\epsilon,\tau_{S,i},\tau_{S',i'},\tau_{S,j})}{2}}.
 }

We can also directly estimate $\calC(\Lambda,\bbP)$ as follows. 
 Note that the operator $\Lambda^\dagger(H_{S'})$ satisfies
 \eq{
|\bra{\eta_+}_{S}\Lambda^\dagger(H_{S'})\ket{\eta_{-}}_S| &=|\sum_{k}\bra{\eta_+}_{S}M^\dagger_kH_{S'}M_k\ket{\eta_{-}}_S|\nonumber\\
&=\left|\sum_{k}\left(a^*_{k}\bra{i'}_{S'}+\sum_{m:m\ne i'}b^*_{k,m}\bra{m}_{S'}\right)H_{S'}\left(c_k\ket{i'}_{S'}+\sum_{m:m\ne i'}d_{k,m}\ket{m}_{S'}\right)\right|
\nonumber\\ &=\left|E_{S',i'}\sum_{k}a^*_kc_k+\sum_{k}\sum_{m:m\ne i'}b^*_{k,m}d_{k,m}E_{S',m}\right|\nonumber\\
&=\left|\sum_{k}\sum_{m:m\ne i'}b^*_{k,m}d_{k,m}(E_{S',m}-E_{S',i'})\right|\nonumber\\
&\le\sqrt{\sum_{k}\sum_{m:m\ne i'}|b_{k,m}|^2}\sqrt{\sum_{k}\sum_{m:m\ne i'}|d_{k,m}|^2(E_{S',m}-E_{S',i'})^2}
\nonumber\\
 &\le\epsilon\Delta(H_{S'}).
}
Therefore,
 \eq{
\calC(\Lambda,\mathbb{P})&=|\bra{\eta_+}(H_{S}-\Lambda^\dagger(H_{S'}))\ket{\eta_-}_S|\nonumber\\
&\ge|\bra{\eta_+}(E_{S,i}\kb{i}_S+E_{S,j}\kb{j}_S)\ket{\eta_-}|-\epsilon\Delta(H_{S'})\nonumber\\
 &=\sqrt{r(1-r)}|E_{S,i}-E_{S,j}|-\epsilon\Delta(H_{S'}).
 }

Applying Theorem~\ref{thm:SIQ} to $\Lambda$ and $\bbP$, we obtain
\eq{
\sqrt{\calF_c(\Lambda)}&\ge\frac{\sqrt{r(1-r)|E_{S,i}-E_{S,j}|}-\epsilon\Delta(H_{S'})}{\sqrt{\frac{\epsilon^2+\epsilon f(\epsilon,\tau_{S,i},\tau_{S',i'},\tau_{S,j})}{2}}}-\Delta(H_{S})-\Delta(H_{S'})\nonumber\\
&\ge\frac{\sqrt{r(1-r)|E_{S,i}-E_{S,j}|}}{\sqrt{\frac{\epsilon^2+\epsilon f(\epsilon,\tau_{S,i},\tau_{S',i'},\tau_{S,j})}{2}}}-\sqrt{2}\Delta(H_{S'})-\Delta(H_{S})-\Delta(H_{S'})\nonumber\\
&\ge\frac{\sqrt{(\tau_{S,j}-\tau_{S',i'})(\tau_{S',i'}-\tau_{S,i})|E_{S,i}-E_{S,j}|}}{(\tau_{S,j}-\tau_{S,i})\sqrt{\epsilon}\sqrt{\frac{\epsilon+f(\epsilon,\tau_{S,i},\tau_{S',i'},\tau_{S,j})}{2}}}-\Delta(H_{S})-3\Delta(H_{S'})\nonumber\\
&\ge\frac{\sqrt{4\kappa(\epsilon,\beta,\tau_{S,i},\tau_{S',i'},\tau_{S,j})}}{\sqrt{\epsilon}}-\Delta(H_{S})-3\Delta(H_{S'}),
}
which concludes the proof. 
Here, we used the following conversion in the last line:
\eq{
\frac{(\tau_{S,j}-\tau_{S',i'})(\tau_{S',i'}-\tau_{S,i})|E_{S,i}-E_{S,j}|}{(\tau_{S,j}-\tau_{S,i})^2\frac{\epsilon+f(\epsilon,\tau_{S,i},\tau_{S',i'},\tau_{S,j})}{2}}
&=\frac{2(\tau_{S,j}-\tau_{S',i'})(\tau_{S',i'}-\tau_{S,i})\log\frac{\tau_{S,j}}{\tau_{S,i}}}{\beta(\tau_{S,j}-\tau_{S,i})^2(\epsilon+f(\epsilon,\tau_{S,i},\tau_{S',i'},\tau_{S,j}))}\nonumber\\
&=\frac{2(\tau_{S,j}-\tau_{S',i'})(\tau_{S',i'}-\tau_{S,i})\log\frac{\tau_{S,j}}{\tau_{S,i}}}{\beta(\tau_{S,j}-\tau_{S,i})^2(\epsilon+\frac{\epsilon\tau_{S',i'}+2\sqrt{(\tau_{S,j}-\tau_{S',i'})(\tau_{S',i'}-\tau_{S,i})}}{\tau_{S,i}+\tau_{S,j}-\tau_{S',i'}})}\nonumber\\
&=\frac{2(\tau_{S,i}+\tau_{S,j}-\tau_{S',i'})(\tau_{S,j}-\tau_{S',i'})(\tau_{S',i'}-\tau_{S,i})\log\frac{\tau_{S,j}}{\tau_{S,i}}}{\beta(\tau_{S,j}-\tau_{S,i})^2(\epsilon(\tau_{S,i}+\tau_{S,j})+2\sqrt{(\tau_{S,j}-\tau_{S',i'})(\tau_{S',i'}-\tau_{S,i})})}\nonumber\\
&\ge\frac{2(\tau_{S,i}+\tau_{S,j}-\tau_{S',i'})(\tau_{S,j}-\tau_{S',i'})(\tau_{S',i'}-\tau_{S,i})\log\frac{\tau_{S,j}}{\tau_{S,i}}}{\beta(\tau_{S,j}-\tau_{S,i})^2(\tau_{S,i}+\tau_{S,j}+2\sqrt{(\tau_{S,j}-\tau_{S',i'})(\tau_{S',i'}-\tau_{S,i})})}\nonumber\\
&=4\kappa(\beta,\tau_{S,i},\tau_{S',i'},\tau_{S,j}).
}
}
\end{proof}

\section{Upper bounds for the coherence cost of general quantum operations and pairwise reversible Gibbs preserving operations (Proofs of Theorems~\ref{thm:tight bound}~and~\ref{thm:general upper})}\label{app:upper bounds}

We first show Theorem~\ref{thm:general upper}, which we later utilize to prove Theorem~\ref{thm:tight bound}.

\begin{theorem}[Theorem~\ref{thm:general upper} in the main text]\label{thm:general upper app}
Let $\Lambda:S\to S'$ be an arbitrary quantum channel admitting a dilation form
\bal
 \Lambda(\rho) = \Tr_{E'}\left(V(\rho\otimes \dm{\eta})V^\dagger\right)
\eal
for some environments $E$ and $E'$ such that $S\otimes E = S'\otimes E'$, some unitary $V$ on $S\otimes E$, and some pure incoherent state $\ket{\eta}$ on $E$.
Then, 
\bal
\sqrt{\calF_c^\epsilon(\Lambda)}\leq \frac{\Delta(H_{\rm tot}-V^\dagger H_{\rm tot}V)}{2\epsilon} + \sqrt{2}\Delta(H_{\rm tot})
\label{eq:coherence cost upper bound general}
\eal
where $H_{\rm tot}=H_S\otimes\mathds{1}_E + \mathds{1}_S\otimes H_E=H_{S'}\otimes\mathds{1}_{E'} + \mathds{1}_{S'}\otimes H_{E'}$, and $\Delta(O)$ is the difference between the minimum and maximum eigenvalues of an operator $O$.
\end{theorem}

\begin{proof}
Ref.~\cite[Theorem 2]{Tajima2020coherence} shows that an arbitrary unitary channel $\calV(\cdot)=V\cdot V$ on a system with Hamiltonian $H$ can be implemented with error $\epsilon$ with coherence cost $\frac{\Delta(H-V^\dagger H V)}{2\epsilon}+\sqrt{2}\Delta(H)$.
Let $\calV_\epsilon$ be such a channel approximating $\calV$ satisfying $D_F(\calV_\epsilon,\calV)\leq \epsilon$, and define $\Lambda_\epsilon\coloneqq \Tr_{E'}\circ \calV_\epsilon\circ \calP_{\ket{\eta}}$ where $\calP_{\ket{\eta}}(\rho)=\rho\otimes \dm{\eta}$ is a state preparation channel. 
Noting that $\calP_{\ket{\eta}}$ and $\Tr_{E'}$ can be implemented with no coherence cost, we get 
\bal
\sqrt{\calF_c(\Lambda_\epsilon)} \leq  \sqrt{\calF_c(\calV_\epsilon)} \leq \frac{\Delta(H_{\rm tot}-V^\dagger H_{\rm tot}V)}{2\epsilon} + \sqrt{2}\Delta(H_{\rm tot}).
\eal

Therefore, it suffices to show that $D_F(\Lambda_\epsilon,\Lambda)\leq \epsilon$, which would ensure that $\sqrt{\calF_c^\epsilon(\Lambda)}\leq \sqrt{\calF_c(\Lambda_\epsilon)}$ and result in the advertised upper bound.
This can indeed be checked by
\bal
 D_F(\Lambda_\epsilon,\Lambda) &= D_F(\Tr_{E'}\circ\calV_\epsilon\circ\calP_{\ket{\eta}},\Tr_{E'}\circ\calV\circ\calP_{\ket{\eta}})\\
 &\leq D_F(\Tr_{E'}\circ\calV_\epsilon,\Tr_{E'}\circ\calV)\\
 &\leq D_F(\calV_\epsilon,\calV)\\
 &\leq \epsilon
\eal
where the second line comes from the definition of the channel purified distance \eqref{eq:purified channel distance definition app}, the third line is because of the data-processing inequality of the purified distance, and the fourth line follows from the assumption of $\calV_\epsilon$.  
\end{proof}

We next show Theorem~\ref{thm:tight bound}, which shows that the lower bound in Theorem~\ref{thm:lower bound} is almost tight, where $\calC(\Lambda,\bbP)$ serves as a fundamental quantity that characterizes the coherence cost for a Gibbs-preserving Operation.

\begin{theorem}[Theorem \ref{thm:tight bound} in the main text]\label{thm:tight bound_SM}
For every real number $a>0$, there is a pairwise reversible Gibbs-preserving Operation $\Lambda$ and a reversible pair $\bbP$ such that $\calC(\Lambda,\bbP)>0$ and 
\bal
\frac{\calC(\Lambda,\bbP)}{\epsilon}-a\leq\sqrt{\calF_c^\epsilon(\Lambda)}\le\frac{\sqrt{2}\calC(\Lambda,\bbP)}{\epsilon}+a.
\label{eq:tight bound app}
\eal
\end{theorem}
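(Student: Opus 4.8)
The plan is to exhibit a single explicit family of Gibbs-preserving Operations for which both the lower bound of Theorem~\ref{thm:lower bound} and an upper bound derived from Theorem~\ref{thm:general upper} apply, and then tune a free parameter so that the additive slack terms in both bounds can be made smaller than any prescribed $a>0$. The natural candidate is the measure-and-prepare channel built in the proof of Theorem~\ref{thm:sufficient condition reversible} (equivalently Corollary~\ref{coro_thm2}), namely $\Lambda(\rho)=\Tr(\psi\rho)\phi+\Tr[(\mathds{1}-\psi)\rho]\eta$ with $\ket{\psi}=\sqrt{r}\ket{i}_S+\sqrt{1-r}\ket{j}_S$ and $\ket{\phi}=\ket{i'}_{S'}$, which we already know is pairwise reversible with reversible pair $\bbP=\{\psi,\psi^\perp\}$ and $\calC(\Lambda,\bbP)=\frac{|a_ia_j|}{\sqrt{a_i^2+a_j^2}}|E_{i,S}-E_{j,S}|>0$. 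The key observation is that $\calC(\Lambda,\bbP)$, $\Delta(H_S)$, and $\Delta(H_{S'})$ are all homogeneous of degree one under rescaling the Hamiltonians $H_S\mapsto \lambda H_S$, $H_{S'}\mapsto \lambda H_{S'}$ (for small $\lambda$ the inequality~\eqref{eq:condition sufficient} is preserved since the Gibbs weights vary continuously and the strict inequalities are open), so the ratio of the additive terms to $\calC(\Lambda,\bbP)$ is scale-invariant — that route alone does not shrink the slack. Instead, the trick is to take tensor-power copies, or to choose the dimension/energy gaps so that the extensive quantity $\calC$ dominates the boundary terms.

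Concretely, the first step is to compute the lower bound: Theorem~\ref{thm:lower bound} directly gives $\sqrt{\calF_c^\epsilon(\Lambda)}\geq \calC(\Lambda,\bbP)/\epsilon-\Delta(H_S)-3\Delta(H_{S'})$. The second step is to produce a matching upper bound of the form $\sqrt{\calF_c^\epsilon(\Lambda)}\leq \sqrt{2}\calC(\Lambda,\bbP)/\epsilon+(\text{slack})$ by writing $\Lambda$ in a Stinespring dilation $\Lambda(\rho)=\Tr_{E'}(V(\rho\otimes\dm{\eta})V^\dagger)$ with $\ket{\eta}$ incoherent, applying Theorem~\ref{thm:general upper} to get $\sqrt{\calF_c^\epsilon(\Lambda)}\leq \Delta(H_{\rm tot}-V^\dagger H_{\rm tot}V)/(2\epsilon)+\sqrt{2}\Delta(H_{\rm tot})$, and then bounding $\Delta(H_{\rm tot}-V^\dagger H_{\rm tot}V)$ by (something close to) $2\sqrt{2}\calC(\Lambda,\bbP)$ for a cleverly chosen dilation. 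The natural dilation of a measure-and-prepare channel is: measure $\psi$ vs.\ $\psi^\perp$ (and the orthogonal complement) coherently, record the outcome in a pointer register, and prepare $\phi$ or $\eta$ conditioned on it; one then chooses the ancilla Hamiltonians so that the energy non-conservation $H_{\rm tot}-V^\dagger H_{\rm tot}V$ of this $V$ is supported essentially on the two-dimensional subspace $\mathrm{span}\{\ket{i},\ket{j}\}$ of $S$ and has operator-norm spread controlled by $|E_{i,S}-E_{j,S}|$, hence by $\calC(\Lambda,\bbP)$ up to the geometric factor $|a_ia_j|/\sqrt{a_i^2+a_j^2}$ — choosing $r=1/2$ makes this factor $1/2$, so $|E_{i,S}-E_{j,S}|=2\calC(\Lambda,\bbP)$ and $\Delta(H_{\rm tot}-V^\dagger H_{\rm tot}V)\approx 2\sqrt{2}\,\calC(\Lambda,\bbP)$ up to pointer-register contributions that can be made small.

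The third step is the tuning: to absorb the additive constants $\Delta(H_S)+3\Delta(H_{S'})$ and $\sqrt{2}\Delta(H_{\rm tot})$ into an arbitrary $a>0$, I would replace the single channel by $n$ parallel copies acting on $S^{\otimes n}\to S'^{\otimes n}$ — no, more carefully: take the reversible pair to consist of product states $\psi^{\otimes n}$ on $n$ blocks, so $\calC$ for the block-diagonal construction scales like $\sqrt{n}$ while the energy-spread terms $\Delta$ scale like $n$; that makes things \emph{worse}. The correct move is the opposite: keep one block but enlarge the gap. Since both $\calC(\Lambda,\bbP)$ and all $\Delta$ terms scale linearly when $(H_S,H_{S'})\mapsto(\lambda H_S,\lambda H_{S'})$, the additive-to-$\calC/\epsilon$ ratio is $\epsilon\cdot(\text{const})$, which is already small for small $\epsilon$ but we need it small for \emph{all} $\epsilon>0$ simultaneously with a fixed $a$. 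The genuine device is therefore to add \emph{spectator} energy levels to $S$ and $S'$ that are never touched by $\Lambda$ but whose presence is irrelevant — no, that changes $\Delta$. Cleanest: observe that $\calC(\Lambda,\bbP)$ depends only on the restriction of $H_S$ to $\mathrm{span}\{\ket{i},\ket{j}\}$ and on $\Lambda^\dagger(H_{S'})$, and choose $H_S$ with $E_{i,S},E_{j,S}$ large in magnitude while all other eigenvalues lie between them; then $\Delta(H_S)=|E_{i,S}-E_{j,S}|=2\calC(\Lambda,\bbP)/(\text{geometric factor})$, which again keeps the ratio fixed. Given this, the honest statement is that the slack $a$ is achieved by sending $\epsilon$ itself into the regime where $(\text{const})\cdot\epsilon<a$ \emph{after} noticing that the theorem only asserts existence of \emph{some} $\Lambda$ for each $a$ — so I would instead fix a single $\Lambda$ (e.g.\ the qubit example $H_S=\dm{1}$, $H_{S'}=0$, $r=1/2$) and note Theorems~\ref{thm:lower bound} and~\ref{thm:general upper} already give $\calC/\epsilon-O(1)\le\sqrt{\calF_c^\epsilon}\le\sqrt2\,\calC/\epsilon+O(1)$ with explicit $O(1)$ constants, then rescale $H_S,H_{S'}$ by $\lambda\to\infty$ so that the $O(1)$ constants, which stay bounded relative to the diverging $\calC$...

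\emph{Main obstacle.} The real difficulty — and the step I expect to require the most care — is getting a tight enough dilation so that the upper-bound prefactor is exactly $\sqrt{2}$ (not $2\sqrt2$ or worse): one must bound $\Delta(H_{\rm tot}-V^\dagger H_{\rm tot}V)\le 2\sqrt2\,\calC(\Lambda,\bbP)+(\text{negligible})$, which forces a delicate choice of the pointer-register Hamiltonian and of how the ``garbage'' branch (the $\eta$-preparation) is energetically arranged, plus a careful argument that the isometry implementing a rank-one measurement followed by conditional preparation can be extended to a unitary whose energy defect is concentrated on the two-dimensional coherent subspace. Reconciling the free parameter $a$ then amounts to showing the leftover additive terms $\Delta(H_S)+3\Delta(H_{S'})$ and $\sqrt2\,\Delta(H_{\rm tot})$ can be made $<a$: since $a>0$ is fixed first and then we are free to pick $\Lambda$, I would pick the underlying Hamiltonians small enough (rescale $H_S,H_{S'}\mapsto \mu H_S,\mu H_{S'}$ with $\mu$ small, which shrinks all $\Delta$ terms below $a$ while keeping $\calC>0$ and~\eqref{eq:condition sufficient} valid by continuity), at the cost of also shrinking $\calC$ — but that is fine because the claimed inequality~\eqref{eq:tight bound app} has $\calC(\Lambda,\bbP)$ on \emph{both} sides and only the \emph{additive} terms need to be $<a$. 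Thus the logical order is: (i) fix $a$; (ii) take the explicit measure-and-prepare $\Lambda$ with $r=1/2$ and $H_S,H_{S'}$ rescaled by a small $\mu$ so that $\Delta(H_S)+3\Delta(H_{S'})<a$ and $\sqrt2\,\Delta(H_{\rm tot})<a$ for the specific dilation built in step (ii$'$); (ii$'$) build that dilation and verify $\Delta(H_{\rm tot}-V^\dagger H_{\rm tot}V)\le 2\sqrt2\,\calC(\Lambda,\bbP)$; (iii) apply Theorem~\ref{thm:lower bound} for the lower inequality and Theorem~\ref{thm:general upper} for the upper inequality, and conclude.
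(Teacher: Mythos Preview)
Your overall strategy---apply Theorem~\ref{thm:lower bound} for the lower bound, apply Theorem~\ref{thm:general upper} to a concrete dilation for the upper bound, and rescale the Hamiltonian so the additive slack drops below $a$---is exactly what the paper does. However, you circle through several dead ends (tensor powers, spectator levels, growing gaps) before landing on the right move, and you leave the crucial dilation estimate $\Delta(H_{\rm tot}-V^\dagger H_{\rm tot}V)\le 2\sqrt{2}\,\calC(\Lambda,\bbP)$ as an unverified ``main obstacle.''

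The paper sidesteps all of that by committing immediately to the simplest possible instance: the qubit channel $\Lambda(\rho)=\brakets{+}{\rho}{+}\dm{0}+\brakets{-}{\rho}{-}\dm{1}$ with $H_S=\tilde a\dm{1}$ (where $\tilde a=a/\sqrt{2}$) and $H_{S'}=0$. Because $H_{S'}=0$, the channel is automatically Gibbs-preserving for every $\tilde a>0$, so your worries about preserving~\eqref{eq:condition sufficient} under rescaling vanish; and $\Delta(H_{S'})=0$ kills three of the four additive terms in the lower bound outright. For the upper bound there is no ``garbage branch'' at all: the dilation is simply $V=\mathrm{CNOT}_{SS'}(U_H\otimes\mathds{1}_{S'})$ acting on $\rho\otimes\dm{0}_{S'}$, with $H_{\rm tot}=H_S\otimes\mathds{1}_{S'}$. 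A two-line computation gives $\Delta(H_{\rm tot}-V^\dagger H_{\rm tot}V)=\Delta(H_S-U_H H_S U_H)=\sqrt{2}\,\tilde a$ and $\calC(\Lambda,\bbP)=\tilde a/2$, so the desired identity $\Delta(H_{\rm tot}-V^\dagger H_{\rm tot}V)=2\sqrt{2}\,\calC(\Lambda,\bbP)$ holds \emph{exactly}, not just approximately. Plugging into Theorems~\ref{thm:lower bound} and~\ref{thm:general upper} then yields both inequalities with additive term $\le a$ directly.

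In short: your plan is correct, but the ``delicate choice of pointer-register Hamiltonian'' you anticipate is unnecessary---the qubit example with trivial output Hamiltonian and the Hadamard+CNOT dilation makes every step a one-line calculation.
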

\begin{proof}
For a given $a>0$, consider a qubit system $S$ with Hamiltonian $H_S= \tilde a \dm{1}$ with $\tilde a=a/\sqrt{2}$ and another qubit system $S'$ with trivial Hamiltonian $H_{S'}=0$.
Consider a channel $\Lambda:S\to S'$ defined by 
\bal
 \Lambda(\rho) = \brakets{+}{\rho}{+}\dm{0} + \brakets{-}{\rho}{-}\dm{1}.
\eal
This is evidently a Gibbs-preserving Operation, noting that $H_{S'}=0$.
Take the state pair $\bbP=\{\dm{+},\dm{-}\}$. 
Since both states are reversible under $\Lambda$, we have $\delta(\Lambda,\bbP)=0$, and $\calC(\Lambda,\bbP)>0$ can be checked by direct computation.  
This ensures that Theorem~\ref{thm:lower bound app} can be applied, and the lower bound in \eqref{eq:tight bound app} then immediately follows noting that $\Delta(H_S)=\tilde a\leq a$ and $H_{S'}=0$.

To get the upper bound, notice that the channel $\Lambda$ can be implemented by 
\bal
\Lambda(\rho)=\Tr_{S}(V\rho\otimes\dm{0}_{S'}V^\dagger)
\eal
where $V={\rm CNOT}_{SS'}\,U_H\otimes\mathds{1}_{S'}$ is a unitary on $SS'$ where $U_H$ is the Hadamard gate and ${\rm CNOT}_{SS'}$ is the CNOT gate controlled on $S$.
Noting that $H_{S'}=0$ and thus $H_{\tot}=H_S\otimes\mathds{1}_{S'}$, we get 
\bal
H_{\tot}-V^\dagger H_{\tot} V = H_S\otimes\mathds{1} - V^\dagger (H_S\otimes \mathds{1}) V = (H_S - U_H H_S U_H)\otimes \mathds{1}.
\eal
This gives 
\bal
 \Delta(H_{\tot}- V^\dagger H_{\tot}V) = \Delta(H_S-U_H H_S U_H) = \sqrt{2} \tilde a.
 \label{eq:spectrum gap}
\eal
On the other hand, 
\bal
 \calC(\Lambda,\bbP)=\left|\brakets{+}{H_S-U_H H_S U_H}{-}\right| = \frac{\tilde a}{2}.
 \label{eq:C value}
\eal
Combining \eqref{eq:spectrum gap} and \eqref{eq:C value} gives 
\bal
 \Delta(H_{\tot}-V^\dagger H_{\tot} V) = 2\sqrt{2}\,\calC(\Lambda,\bbP),
\eal
from which the upper bound in \eqref{eq:tight bound app} follows by using Theorem~\ref{thm:general upper app} and noting $\Delta(H_S)=\tilde a=a/\sqrt{2}$ and $\Delta(H_{S'})=0$.
\end{proof}

\end{document}